\newtheorem{theorem}{Theorem}
\newtheorem{corollary}{Corollary}
\newtheorem{lemma}{Lemma}
\newtheorem{definition}{Definition}
\newcommand{\eqdef}{\stackrel{\rm def}{=}}
\title{Projection-algebras and Quantum Logic}
\titlerunning{Projection-algebras}
\authorrunning{Lehmann}
\begin{document}

\jnumber{}
\jvolume{11}
\jyear{2024}
\jreceived{1 March 2024}
\nopagenumber
\maketitle

\begin{abstract}
P-algebras are a non-commutative, non-associative generalization of Boolean algebras 
that are for quantum logic what Boolean algebras are for classical logic.
P-algebras have type \mbox{$ \langle X , 0 , ' , \cdot \rangle $} 
where $0$ is a constant, $'$ is unary and $\cdot$ is binary.
Elements of $X$ are called \emph{features}.
A partial order is defined on the set $X$ of features by \mbox{$ x \leq y $} iff 
\mbox{$ x \cdot y = x $}. 
Features commute, i.e., \mbox{$ x \cdot y = y \cdot x $} iff \mbox{$ x \cdot y \leq x $}.
Features $x$ and $y$ are said to be orthogonal iff \mbox{$ x \cdot y = 0 $} and orthogonality is a 
symmetric relation.
The operation $+$ is defined as the dual of $\cdot$ and it is commutative on orthogonal features.
The closed subspaces of a separable Hilbert space form a P-algebra under orthogonal 
complementation and projection of a subspace onto another one.
P-algebras are complemented orthomodular posets but they are not lattices.
Existence of least upper bounds for ascending sequences is equivalent to the existence 
of least upper bounds for countable sets of pairwise orthogonal elements.
Atomic algebras are defined and their main properties are studied.
The logic of P-algebras is then completely characterized.
The language contains a unary connective corresponding to the operation $'$ and a 
binary connective corresponding to the operation ``$\cdot$''.
It is a substructural logic of sequents where the Exchange rule is extremely limited.
It is proved to be sound and complete for P-algebras.
\end{abstract}

\section{Introduction} \label{sec:intro}
The purpose of this paper is to lay bare the algebraic and logical structures that allow physicists 
to model quantic systems.
It attempts to model the process of knowledge acquisition about a quantic system.
One gathers knowledge about such a system by a sequence of measurements: 
the result of a first measurement indicates that the system has a certain
feature $x$ and then the result of later measurement indicates that the system
has feature $y$.
The mathematical formalism used by physicists to describe this process is the following: 
$x$ and $y$ are closed subspaces of a separable Hilbert space and the result of the 
addition of the information brought by $y$ to that previously given by $x$ 
is the closed subspace that is the closure of the orthogonal projection of $x$ onto $y$.
One needs here to consider the closure of the projection because the projection of a 
closed subspace on a closed subspace is a subspace but not necessarily closed.
Our goal is to discover mathematical structures that are not as rich as Hilbert spaces in 
which one can model a good part of this knowledge acquisition process, 
the logical part of it.
The basic connective of quantum logic, conjunction, is, in this perspective, modeled 
by the projection operation. 
It is neither commutative nor associative.
Propositions are partially ordered by logical implication.
They form an orthomodular poset that is not a lattice.

Section~\ref{sec:motivation} describes the background behind the present effort.
Section~\ref{sec:P-algebras} defines and studies P-algebras, the algebraic structures 
we want to study.
Section~\ref{sec:syntax} defines the propositions and the sequents that form the 
language of quantum logic and their interpretation in P-algebras. 
Section~\ref{sec:rules} proposes a sound and complete deductive system for quantum 
logic.
Section~\ref{sec:further} considers further work and proposes a tentative conclusion.
Three appendices are devoted to more specific aspects of the study of P-algebras.
Appendix~\ref{sec:commuting} presents the properties of sets of commuting features.
Appendix~\ref{sec:CBP} considers infinitary properties of P-algebras and proves 
a fundamental result that equates the existence of l.u.b. for ascending sequences 
to the existence of l.u.b. for sets of pairwise orthogonal features.
Appendix~\ref{sec:atomic} studies atomic P-algebras.
Building on the result of those appendices upcoming work will propose richer structures
that include the probabilistic aspects of quantum systems.

\section{Motivation} \label{sec:motivation}
\subsection{Von Neumann's doubts about the Hilbert space formalism} \label{sec:doubts}
Three years after laying down the mathematical foundations of quantum physics 
in~\cite{vonNeumann:Quanten} John von Neumann wrote to Garret Birkhoff 
during the elaboration of~\cite{BirkvonNeu:36} 
(see~\cite{vNeumann_letters}, p. 59, letter dated Nov. 13, Wednesday, 1935):
\emph{I would like to make a confession
which may seem immoral: I do not believe absolutely in Hilbert space any more. 
After all Hilbert-space (as far as quantum-mechanical things are concerned) 
was obtained by generalizing Euclidean space, footing on the principle of 
``conserving the validity of all formal rules''. 
This is very clear, if you consider the axiomatic-geometric definition of
Hilbert-space, where one simply takes Weyl's axioms for a unitary-Euclidean
space, drops the condition on the existence of a finite linear basis, and
replaces it by a minimum of topological assumptions 
(completeness + separability). Thus Hilbert-space is the straightforward
generalization of Euclidean space, if one considers the \emph{ vectors}
as the essential notions. Now we begin to believe that it is not the 
\emph{ vectors} which matter but the \emph{ lattice of all linear (closed) 
subspaces}. Because:
\begin{enumerate}
\item The vectors ought to represent the physical \emph{states}, but they
do it redundantly, up to a complex factor only.
\item And besides the \emph{states} are merely a derived notion, the primitive
(phenomenologically given) notion being the \emph{qualities}, which correspond 
to the \emph{linear closed subspaces}.
\end{enumerate}
}

Indeed, modeling states of a quantic system by elements in a vector space requires 
justification since the essence of vectors is that two vectors can be added and that a 
vector can be multiplied by a scalar whether states cannot be added neither can they be 
multiplied by a scalar since they are unit vectors up to an arbitrary phase factor.
Von Neumann's doubts may be supported by the consideration of the role of linear 
operators in the formalization of QM.
Linear operators preserve scalar multiplication and addition of vectors, but if vectors are 
not the objects of study, scalar multiplication and sum have no phenomenological 
meaning, linear operators may not be the natural morphisms to consider.
Indeed, even though linear operators are bread and butter for quantum physicists:
\begin{itemize}
\item physicists have to consider operators that are not linear: the time reversal 
symmetry has to be represented by an antilinear operator, and
\item only two specific types of linear operators are in fact used by QM: self-adjoint
operators and unitary operators, but the centrality of self-adjoint operators does not 
follow from phenomenological principles.
\end{itemize}

\subsection{Background} \label{sec:background}
This paper's purpose is to propose a mathematical framework that is not a vector 
space. 
It builds on the results of~\cite{Lehmann-P-algebras-v1:2024}, itself 
following~\cite{LEG:Malg, Qsuperp:IJTP, SP:IJTP, Lehmann_andthen:JLC, Lehmann-metalinear1:2022}. 
The semantics of classical logic can be described by Boolean algebras.
This framework does not fit the fundamental novelties of quantum physics.
A vast literature followed G. Birkhoff and J. von Neumann's~\cite{BirkvonNeu:36} 
and proposed orthomodular lattices as the suitable semantic structures for the logic 
of quantum systems.

The main thesis of Garret Birkhoff and John von Neumann in~\cite{BirkvonNeu:36} 
is that quantic propositions represent closed subspaces of a separable Hilbert space.
In Section 6 they formulate the following postulate: 
``The set-theoretical product\footnote{i.e., intersection} of any two mathematical 
representatives of experimental propositions concerning a quantum-mechanical system, 
is itself the mathematical representative of an experimental proposition''.
They provide no justification for this postulate.
They go on explaining that the logical implication relation is modeled by subspace 
inclusion and is therefore a partial order relation.
Their postulate then implies that any two experimental propositions have a 
greatest lower bound, their intersection.
They conclude that the logical structures that model quantum logic are lattices.
But one may be reluctant to adopt the postulate mentioned above for the following reason. 
Consider a closed subspace $A$ that is an eigensubspace of a Hermitian operator $P$ 
and a closed subspace $B$ that is an eigensubspace of another Hermitian operator 
$Q$.
If the operators $P$ and $Q$ do not commute, it seems no experiment can result 
in the knowledge that the state of the system is in their intersection in view of 
Heisenberg's uncertainty principle.
This intersection does not correspond to any experimental proposition, and, therefore, 
the two propositions do not have a greatest lower bound in the implication partial 
ordering which cannot be a lattice ordering.

\section{P-algebras} \label{sec:P-algebras}
\subsection{Preliminaries} \label{sec:P-algebras-intro}
Our goal is to present a family of mathematical structures fit to represent
the features of a quantic system.
Our \emph{features} are the \emph{qualities} mentioned by von Neumann's quote 
in Section~\ref{sec:doubts}.
The reader should think of the closed subspaces of a separable Hilbert space and
the projections of one subspace on another one.
The main topic of this section is the study of the algebraic properties of the operation 
that projects a subspace of a Hilbert space onto another subspace.
The family of projection-algebras (P-algebras) to be defined in 
Definition~\ref{def:P-algebra} below extends the family of Boolean algebras.
As is the case for Boolean algebras, a large number of equivalent sets of properties 
may be proposed to define P-algebras.
Definition~\ref{def:P-algebra} presents one these sets, probably not the most 
elegant or the most economical.
In the wake of~\cite{PutnamHow:1974} where Hilary Putnam argues that logic is to 
Quantum Mechanics like Geometry to General Relativity, this paper has chosen to
present, as much as possible, P-algebras by properties of a logical nature.
The seven properties in Definition~\ref{def:P-algebra} provide a description of the
logical principles that are the foundation of reasoning about quantic systems.
The logical meaning of each of these properties will be described following the 
definition.
In P-algebras the binary operation ``$\cdot$'' is not assumed to be commutative: 
commutative P-algebras are exactly Boolean algebras, as shown in 
Corollary~\ref{the:commutative-B}.
Neither is it associative and the main challenge in Definition~\ref{def:P-algebra} is
to find the suitable weakenings for commutativity and associativity.

We shall consider algebras of type \mbox{$ \langle X , 0 , ' , \cdot \rangle $} 
where $0$ is a constant, $'$ is unary and ``$\cdot$" is binary.
The reader should think of $X$ as the set of all possible features of a quantic system
and of $0$ as the feature that no system possesses. 
If $x$ is a feature $x'$ is the feature that says the system does not exhibit feature $x$.
If $x$ and $y$ are features, $x \cdot y$ is the feature that expresses that feature 
$x$ has been assessed and (then) $y$ has been assessed. 
The operation ``$\cdot$'' in P-algebras, corresponds to the conjunction $\wedge$ of 
Boolean algebras.
Its dual operation $+$ that corresponds to disjunction $\vee$ will be discussed in 
Section~\ref{sec:+}.
Some notations will help us.
In items~\ref{bot-def} and~\ref{smile-def} below the order of the operands is not 
important since the relations $\bot$ and $\smile$ are commutative in P-algebras,
but, in item~\ref{leq-def}, the order is significant.

\begin{definition} \label{def:basic}
For any \mbox{$ x , y \in X $}
\begin{enumerate}
\item \label{1-def}
\mbox{$ 1 \eqdef 0' $}, 
\item \label{bot-def}
\mbox{$ x \, \bot \, y $} iff \mbox{$ x \cdot y = 0 $},
\item \label{leq-def}
\mbox{$ x \leq y $} iff \mbox{$ x \cdot y = x $}, 
\item \label{smile-def}
\mbox{$ x \smile y $} iff \mbox{$ x \cdot y \leq x $}.
\end{enumerate}
\end{definition}

Intuitively: 
\begin{itemize}
\item 
$1$ is the trivial feature, the feature every element possesses. 
\item
Feature $x$ is \emph{orthogonal} to $y$ iff $y$ cannot be observed after $x$.
In P-algebras this relation is symmetric, and orthogonality expresses that $x$ and $y$ 
are incompatible. 
\item
\mbox{$ x \leq y $} if feature $x$ implies feature $y$. 
This is expressed by the requirement that, given that feature $x$ has been assessed, 
the assessment of $y$ neither destroys $x$ nor any other feature possessed by 
the system.
\item
The relation $\smile$ expresses that $y$ does not disturb $x$. 
In P-algebras this relation is symmetric.
\mbox{$ x \smile y $} is equivalent to the claim that \mbox{$ x \cdot y = $} 
\mbox{$ y \cdot x $} or that \mbox{$ x \cdot y $}  is the g.l.b. of $x$ and $y$.
\end{itemize}

We want to compare the P-algebras that will be defined in 
Definition~\ref{def:P-algebra} with two other types of algebras.
First, to orthocomplemented lattices, when $0$ is interpreted 
as the bottom element $\bot$, 
$'$ as complementation and ``$\cdot$'' as greatest lower bound.
Secondly to Hilbert spaces, when $X$ is the set of closed subspaces of a separable 
Hilbert space $\cal H$, $0$ is the zero-dimensional subspace $\{\vec{0}\}$, 
$A'$ denotes the subspace orthogonal to $A$ and ``$\cdot$'' is projection, 
more precisely \mbox{$ A \cdot B $} is interpreted as \mbox{$ A \downarrow B $}, 
the closure of the projection of the closed subspace $A$ onto the closed subspace $B$.
\begin{itemize}
\item
In orthocomplemented lattices $1$ is the top element $\top$.
In Hilbert spaces it is the whole space $\cal H$.
\item
In orthocomplemented lattices, \mbox{$ x \, \bot \, y $} iff 
\mbox{$ g.l.b. ( x , y ) = \bot $}.
In Hilbert spaces, \mbox{$ A \, \bot \, B $} iff each of the elements of $A$ is orthogonal 
to each of the elements of $B$.
\item
In orthocomplemented lattices the $\leq$ symbol has its usual interpretation.
In Hilbert spaces \mbox{$ A \downarrow B = A $} iff \mbox{$ A \subseteq B $}.
\item
In orthocomplemented lattices, since the g.l.b. operation is commutative, we have
\mbox{$ x \smile y $} for any elements $x$ and $y$.
In Hilbert spaces, \mbox{$ A \downarrow B \subseteq A $} iff $A$ is the subspace 
spanned by some subspace of $B$ and some subspace orthogonal to $B$ and this is
equivalent to \emph{the projections $p_{A}$ and $p_{B}$ on $A$ and $B$ 
respectively, commute}.
\end{itemize}

\subsection{Definition of P-algebras} \label{sec:P-algebras-def}
Definition~\ref{def:P-algebra} defines P-algebras by a set of seven conditions.
Those conditions are not claimed to be independent or to be the simplest possible.

\begin{definition} \label{def:P-algebra}
Consider a structure \mbox{$ P = \langle X , 0 , ' , \cdot \rangle $}.
The structure $P$ is a \emph{P-algebra} iff it satisfies 
the following properties for any \mbox{$ x , y , z \in X $}:
\begin{enumerate}
\item \label{partial-order}
{\bf Partial order} \ the relation $\leq$ is a partial order, i.e., it is reflexive, 
anti-symmetric and transitive,
\item \label{P-commutativity}
{\bf P-commutativity} \ the relation $\smile$ is symmetric: if \mbox{$ x \smile y $}, 
then \mbox{$ y \smile x $},
\pagebreak[2]
\item \label{P-associativity}
{\bf P-associativity} 
\begin{enumerate}
\item \label{xyz0}
\mbox{$ ( x \cdot y ) \cdot z = 0 $} iff \mbox{$ ( z \cdot y ) \cdot x = 0 $}, 
\item \label{large-small-left}
if \mbox{$ x \leq y $}, then \mbox{$ ( x \cdot y ) \cdot z = $} 
\mbox{$ x \cdot ( y \cdot z ) $},
\item \label{large-small-right}
if \mbox{$ x \leq y $}, then \mbox{$ ( z \cdot y ) \cdot x = $}
\mbox{$ z \cdot ( y \cdot x ) $} ,
\end{enumerate}
\item \label{monotonicity}
{\bf Dot-monotonicity}
\begin{enumerate}
\item \label{left-monotonicity}
if \mbox{$ x \leq y $}, then \mbox{$ x \cdot z \leq y \cdot z $},
\item \label{right-monotonicity}
\mbox{$ x \cdot y \leq y $},
\end{enumerate}
\item \label{left-zero}
{\bf Z} \ \mbox{$ 0 \cdot x = 0 $}, equivalently \mbox{$ 0 \leq x $}, 
\item \label{complements}
{\bf Comp} \ \mbox{$ x \cdot x' = 0 $},
\item \label{O} 
{\bf O} \ if \mbox{$ x \cdot y \leq z $} and 
\mbox{$ x \cdot y' \leq z $}, 
then \mbox{$ x \leq z $}.
\end{enumerate}
\end{definition}

The essential difference between quantic logic and classical logic is that the operation 
``$\cdot$'' is not commutative.
Assessing that a particle has position $x$ and then assessing that it has momentum
$b$ leaves the physicist in a state of knowledge very different from the one he/she 
would be in after assessing, first, momentum $b$ and then position $x$.

{\bf Partial order} requires that implication be reflexive, anti-symmetric 
and transitive. 
This seems to be an unavoidable logical requirement: 
\begin{itemize}
\item any feature implies itself,
\item if two features $x$ and $y$ imply each other they are always seen together and
therefore cannot be distinguished,
\item if $x$ implies $y$ and $y$ implies $z$ then every time we see $x$ we have $z$ 
and $x$ implies $z$.
\end{itemize}
But note that we do \emph{not} require that the structure 
\mbox{$ \langle X , \leq \rangle $} be a \emph{lattice}.
In~\cite{Finch_lattice:69} Finch shows that four conditions on the operation ``$\cdot$'' 
imply a lattice structure. 
P-algebras satisfy the first three conditions, but not the fourth.
In orthocomplemented lattices condition {\bf Partial order} holds by assumption.
In Hilbert spaces it holds because inclusion is a partial order.

{\bf P-commutativity} is a fundamental principle for quantum logic:
the relation~$\smile$ expressing non-interference is symmetric: 
if the observation of feature $x$ cannot perturb feature $y$, 
then $y$ cannot perturb $x$.
We shall see that this relation is reflexive but it is not, in general, transitive.
In orthocomplemented lattices {\bf P-commutativity} holds vacuously since any two 
elements are in the relation $\smile$.
In Hilbert spaces, the condition holds since if $A$ is the subspace spanned by the union 
of a subspace of $B$ and a subspace orthogonal to $B$, 
\mbox{$ A = \overline{ B_{1} \cup B_{2}} $} where \mbox{$ B_{1} \subseteq B $}
and \mbox{$ B_{2} \subseteq B^\bot $}, one has 
\mbox{$ B \downarrow A = B_{1} $} and 
\mbox{$ B \downarrow A^\bot = B \cap B_{1}^{\bot} $} and therefore
\mbox{$ \overline{ ( B \downarrow A ) \cup ( B \downarrow A^\bot ) } \subseteq B $}.
The converse inclusion is obvious. 
Therefore $B$ is the subspace spanned by the union of a subspace of $A$ and a 
subspace orthogonal to $A$.

As was noted in~\cite{Lehmann_andthen:JLC} and then in~\cite{Fazio+3:2021}, 
projection is not associative.
Nonassociative logics have also been considered in the context of the Lambek calculus
in~\cite{Galatos+1:2009,Moot+1:2012,Cintula+2:2013,Buszkowski:2017}.
{\bf P-associativity} is a restricted associativity property. 
{\bf P-associativity} requires associativity only in three specific situations.
The first one, \ref{xyz0} requires that one of two expressions is equal to zero.
Note that the second expression is not \mbox{$ x \cdot ( y \cdot z ) $} but can be 
obtained from it by changing the order of the operands.
The meaning of such a requirement is that, if the sequence of measurements $x$, $y$, 
$z$ is impossible, then the opposite sequence $z$, $y$ , $x$ is also impossible.  
Condition~\ref{xyz0} expresses a property of invariance under time reversal that is 
of a quasi-logical nature.
In orthocomplemented lattices this property follows from associativity and 
commutativity of the g.l.b. operation.
In Hilbert spaces, \mbox{$ ( A \downarrow B ) \, \bot \, C $} iff 
\mbox{$ A \, \bot \, ( C \downarrow B ) $}. 
Condition~\ref{large-small-left} says that if $x$ implies $y$ measuring $z$ after having measured $x$ is not different from measuring $y \cdot z$ after $x$.
In orthocomplemented lattices, the property holds by associativity.
In Hilbert spaces \mbox{$ A \downarrow C = $} 
\mbox{$ A \downarrow ( B \downarrow C ) $} if \mbox{$ A \subseteq B $}.
Condition~\ref{large-small-right} says that, in any context, 
measuring a weak property and
then a stronger property is equivalent to measuring directly the latter.
Again, in orthocomplemented lattices, the property holds by associativity.
In Hilbert spaces it holds since the projection of a vector on a subspace can be 
obtained by projecting the vector first on a larger subspace and then projecting the
result on the subspace.
What {\bf P-associativity} does not imply is extremely interesting.
It does not imply that $ x \cdot y $ is a greatest lower bound for $x$ and $y$ 
in the partial order $\leq$, since we do not have, in general, 
\mbox{$ x \cdot y \leq x $}.

The conditions in {\bf Dot-monotonicity} are monotonicity properties for the operation 
``$\cdot$''.
Condition~\ref{left-monotonicity} expresses the fact that starting with more 
knowledge cannot result in less knowledge. 
Condition~\ref{right-monotonicity} says that, whatever our knowledge 
of a system is, if the feature $y$ is discovered, then the system possesses feature $y$.
In orthocomplemented lattices, {\bf Dot-monotonicity} follows from the definition of g.l.b.
In Hilbert spaces, it follows from the definition of the $\downarrow$ operation.

Condition {\bf Z} characterizes $0$ as the feature that implies any 
feature, a logical contradiction. 
It expresses the principle \emph{ex falso quodlibet} that holds both in orthocomplemented
lattices, since $0$ is the bottom element and in Hilbert space since the zero-dimensional
subspace is included in any subspace.

Condition {\bf Comp} characterizes the feature $x'$ as the feature 
that \emph{is incompatible with $x$}: it cannot be observed after $x$ has been 
observed.
It holds in orthocomplemented lattices by assumption and in Hilbert spaces by the 
definition of projection.

Condition {\bf O} expresses the core of the superposition principle. 
If a feature is found in two orthogonal contexts, then it does not 
depend on the context.
In orthocomplemented lattices {\bf O} holds by commutativity and distributivity.
In Hilbert spaces, for every vector $\vec{a}$ of $x$, 
\mbox{$ \vec{a} = $} \mbox{$ \vec{b} + \vec{c} $} where  
\mbox{$ \vec{b} \in x \cdot z$} and \mbox{$ \vec{c} \in x \cdot z'$}.
If \mbox{$ \vec{a_{y}}, \vec{b_{y}}, \vec{c_{y}} $} are respectively the projections 
of \mbox{$ \vec{a}, \vec{b}, \vec{c} $} on $y$,  we have 
\mbox{$ \vec{a_{y}} = $} \mbox{$ \vec{b_{y}} + \vec{c_{y}} $}.
But $\vec{b_{y}}$ is the projection of $\vec{b}$ on $z \cdot y$ since 
\mbox{$ \vec{b} \in z $} and, similarly,  $\vec{c_{y}}$ is the projection of $\vec{c}$
on $z' \cdot y$. 
If both $\vec{b_{y}}$ and $\vec{c_{y}}$ are in $w$ so is $\vec{a}$.

\subsection{Properties of P-algebras} \label{sec:P-algebras-ppties}
Our first result shows that the relation $\smile$ characterizes commuting features 
(item~\ref{smile-com}), that orthogonality is a symmetric relation 
(item~\ref{z-commutation}) and that orthogonality is intimately linked with the unary 
operation $'$ (item~\ref{bot'}).
\begin{theorem} \label{the:basic1}
In a P-algebra, for any \mbox{$ x , y , z \in X $},
\begin{enumerate}
\item \label{newzxy}
If \mbox{$ z \leq x $} and \mbox{$ z \leq y $}, then, \mbox{$ z \leq x \cdot y $}.
\item \label{smile-glb}
\mbox{$ x \smile y $} iff \mbox{$ x \cdot y $} is the g.l.b. of $x$ and $y$.
\item \label{smile-com}
\mbox{$ x \smile y $} iff \mbox{$ x \cdot y = y \cdot x $}.
\item \label{leq-sim}
If \mbox{$ x \leq y $}, then \mbox{$ x \smile y $}.
\item \label{small-large}
If \mbox{$ x \leq y $}, then \mbox{$ x \cdot y = x = y \cdot x $}.
\item \label{xyz<=yz}
\mbox{$ ( x \cdot y ) \cdot z \leq y \cdot z $}.
\item \label{x0}
\mbox{$ x \cdot 0 = 0 $}.
\item \label{z-commutation}
the relation $\bot$ is symmetric: if \mbox{$ x \cdot y = 0 $}, 
 then \mbox{$ y \cdot x = 0 $}.
Therefore \mbox{$ x' \cdot x = 0 $}.
\item \label{leq-bot}
if \mbox{$ x \, \bot \, y $} and \mbox{$ z \leq y $}, then \mbox{$ x \, \bot \, z $}.
\item \label{bot'}
\mbox{$ x \, \bot \, y $} iff \mbox{$ x \leq y' $}.
\end{enumerate}
\end{theorem}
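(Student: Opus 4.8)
The plan is to establish the ten items in the listed order, since most of the later ones build on the earlier ones. Only two places require a genuine idea --- item~\ref{z-commutation} (symmetry of $\bot$) and the forward direction of item~\ref{bot'} --- and everything else is careful unwinding of the definitions in Definition~\ref{def:basic} together with the seven axioms of Definition~\ref{def:P-algebra}.

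First I would prove item~\ref{newzxy}: if $z \leq x$ and $z \leq y$, unfold $z \leq x$ as $z \cdot x = z$ and apply {\bf P-associativity}~\ref{large-small-left} to rewrite $(z \cdot x) \cdot y$ as $z \cdot (x \cdot y)$; the left side collapses to $z \cdot y = z$, giving $z \cdot (x \cdot y) = z$. Item~\ref{smile-glb} is then immediate: by {\bf Dot-monotonicity}~\ref{right-monotonicity} one always has $x \cdot y \leq y$, so $x \smile y$ says exactly that $x \cdot y$ is a common lower bound of $x$ and $y$, and item~\ref{newzxy} shows any other common lower bound lies below it; conversely, if $x\cdot y$ is the g.l.b. then $x\cdot y\leq x$, i.e. $x\smile y$. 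Item~\ref{smile-com} follows by combining item~\ref{smile-glb}, {\bf P-commutativity}, and uniqueness of the g.l.b. (for the converse use {\bf Dot-monotonicity}~\ref{right-monotonicity} again). Item~\ref{leq-sim} is just reflexivity of $\leq$ plus the definitions. Item~\ref{small-large}: $x \leq y$ gives $x \cdot y = x$ directly, and instantiating item~\ref{newzxy} with $z = x$ (using $x \leq y$ and $x \leq x$) yields $x \leq y \cdot x$, while {\bf Dot-monotonicity}~\ref{right-monotonicity} gives $y \cdot x \leq x$, so anti-symmetry closes it. Item~\ref{xyz<=yz} is $x \cdot y \leq y$ fed into {\bf Dot-monotonicity}~\ref{left-monotonicity}, and item~\ref{x0} is $0 \leq x \cdot 0 \leq 0$ from {\bf Z} and {\bf Dot-monotonicity}~\ref{right-monotonicity}.

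The first real step is item~\ref{z-commutation}. The trick I would use: from $x \cdot y = 0$ and {\bf Z} we get $(x \cdot y) \cdot z = 0$ for every $z$, so {\bf P-associativity}~\ref{xyz0} gives $(z \cdot y) \cdot x = 0$ for every $z$; now take $z = y$ and use $y \cdot y = y$ (reflexivity of $\leq$) to conclude $y \cdot x = 0$. Then $x' \cdot x = 0$ follows from {\bf Comp} and this symmetry. Item~\ref{leq-bot} then uses {\bf P-associativity}~\ref{large-small-right}: when $z \leq y$ we have $(x \cdot y) \cdot z = x \cdot (y \cdot z)$; if moreover $x \bot y$ the left side is $0 \cdot z = 0$ by {\bf Z}, and $y \cdot z = z$ by item~\ref{small-large}, so $x \cdot z = 0$.

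Finally, item~\ref{bot'}. The backward direction: $x \leq y'$ together with $y \cdot y' = 0$ (from {\bf Comp}) and $\bot$-symmetry (item~\ref{z-commutation}) gives $y' \bot y$, hence by item~\ref{leq-bot} $y \bot x$, hence $x \bot y$. The forward direction is where condition {\bf O} earns its keep: given $x \cdot y = 0$, by {\bf Z} we have $x \cdot y \leq c$ for every $c$, so {\bf O} collapses to the implication ``$x \cdot y' \leq c$ implies $x \leq c$''; taking $c = x \cdot y'$ gives $x \leq x \cdot y'$, and combining with $x \cdot y' \leq y'$ from {\bf Dot-monotonicity}~\ref{right-monotonicity} and transitivity yields $x \leq y'$. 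I expect the two non-routine points to be spotting the $z = y$ instantiation in item~\ref{z-commutation} and the $c = x \cdot y'$ instantiation of {\bf O} in item~\ref{bot'}; the rest is bookkeeping with $\leq$, $\smile$, and $\bot$.
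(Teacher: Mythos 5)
Your proposal is correct, and its overall architecture --- the same ten items in the same order, with item~\ref{newzxy} feeding the g.l.b.\ characterization, and condition {\bf O} doing the work in the forward direction of item~\ref{bot'} via the instantiation $z = x\cdot y'$ --- coincides with the paper's. You deviate only in a few of the easy items, always validly. For item~\ref{small-large} you get $x \leq y\cdot x$ from item~\ref{newzxy} with $z=x$ and close with anti-symmetry, whereas the paper routes through items~\ref{leq-sim} and~\ref{smile-com}; for item~\ref{x0} you squeeze $0 \leq x\cdot 0 \leq 0$ directly rather than invoking $0 \smile x$. The most interesting divergence is item~\ref{z-commutation}: you derive symmetry of $\bot$ from the time-reversal axiom ({\bf P-associativity}, item~\ref{xyz0}) by taking $z=y$ and using $y\cdot y = y$, while the paper simply observes that $x\cdot y = 0 \leq x$ already gives $x \smile y$ and applies item~\ref{smile-com}; the paper's route is shorter and shows that {\bf P-associativity}~\ref{xyz0} is not needed here, but yours illuminates the physical reading of that axiom as exactly the invariance that makes orthogonality symmetric. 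Likewise your item~\ref{leq-bot} uses {\bf P-associativity}, item~\ref{large-small-right}, together with item~\ref{small-large}, where the paper uses only {\bf Dot-monotonicity} and the already-proved symmetry; again both are sound, the paper's being marginally more economical in its axiom usage. Nothing in your argument has a gap.
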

\begin{proof}{\ }
\begin{enumerate}
\item 
By assumption, we have, \mbox{$ z \cdot x = z $} and \mbox{$ z \cdot y = z $}.
By {\bf P-associativity}, item~\ref{large-small-left}, 
\[
z = z \cdot y = ( z \cdot x ) \cdot y =  z \cdot ( x \cdot y ) 
\] 
and therefore \mbox{$ z \leq x \cdot y $}.
\item 
By Definition~\ref{def:P-algebra}, item~\ref{right-monotonicity}, one has 
\mbox{$ x \cdot y \leq y $}.
If \mbox{$ x \smile y $}, we have \mbox{$ x \cdot y \leq x $} and 
\mbox{$ x \cdot y $} is a lower bound for $x$ and $y$.
Item~\ref{newzxy} just above shows that it is their greatest lower bound.
The \emph{if} part is obvious.
\item 
Assume \mbox{$ x \smile y $}.
By Definition~\ref{def:P-algebra}, item~\ref{P-commutativity}, we have 
\mbox{$ y \smile x $}.
By item~\ref{smile-glb} just above, we have \mbox{$ x \cdot y = g.l.b. ( x , y ) $} and
\mbox{$ y \cdot x = g.l.b. ( y , x ) $}.
We conclude that \mbox{$ x \cdot y = y \cdot x $}.
Suppose, now, that \mbox{$ x \cdot y = y \cdot x $}.
By {\bf Dot-monotonicity}, \ref{right-monotonicity}, 
\mbox{$ x \cdot y = $} \mbox{$ y \cdot x \leq x $} 
and \mbox{$ x \smile y $}.
\item 
By assumption \mbox{$ x \cdot y = x $}. Conclude by reflexivity of $\leq$.
\item 
Assume \mbox{$ x \leq y $}.
By item~\ref{leq-sim}, \mbox{$ x \smile y $}.
By item~\ref{smile-com}, \mbox{$ x \cdot y = y \cdot x $} and, 
by Definition~\ref{def:basic}, \mbox{$ x \cdot y = x $}.
\item 
By {\bf Dot-monotonicity}, item~\ref{right-monotonicity}, \mbox{$ x \cdot y \leq y $}.
By {\bf Dot-monotonicity}, item~\ref{left-monotonicity}, 
\mbox{$ ( x \cdot y ) \cdot z \leq $} \mbox{$ y \cdot z $}.
\item 
By {\bf Z} and item~\ref{leq-sim} above, \mbox{$ 0 \smile x $}.
By item~\ref{smile-com}, \mbox{$ x \cdot 0 = 0 \cdot x $}.
Conclude by {\bf Z}.
\item 
Assume \mbox{$ x \cdot y = 0 $}.
By {\bf Z}, \mbox{$ x \cdot y \leq x $}, i.e., \mbox{$ x \smile y $} 
and by item~\ref{x0} just above \mbox{$ x \cdot y = y \cdot x $}. 
Therefore \mbox{$ y \cdot x = 0 $}.
Notice, now, that, by {\bf Comp}, \mbox{$ x \cdot x' = 0 $} and therefore 
\mbox{$ x' \cdot x = 0 $}. 
\item 
By assumption \mbox{$ x \cdot y = 0 $} and, by item~\ref{z-commutation}, 
\mbox{$ y \cdot x = 0 $}.
But \mbox{$ z \leq y $} and, by {\bf Dot-monotonicity}, item~\ref{left-monotonicity}, 
we have \mbox{$ z \cdot x \leq $} \mbox{$ y \cdot x = 0 $}.
\item 
Suppose \mbox{$ x \leq y' $}.
By {\bf Dot-monotonicity}, item~\ref{left-monotonicity}, we have
\mbox{$ x \cdot y \leq y' \cdot y $} and, by item~\ref{z-commutation} above
\mbox{$ x \cdot y = 0 $}.
Suppose, now, that \mbox{$ x \cdot y = 0 $}.
We have \mbox{$ x \cdot y \leq x \cdot y' $}.
Since, by reflexivity, \mbox{$ x \cdot y' \leq x \cdot y' $}, we can use {\bf O}, 
to obtain \mbox{$ x \leq x \cdot y' $}.
By {\bf Dot-monotonicity}, item~\ref{right-monotonicity} \mbox{$ x \cdot y' \leq y' $} 
and since, by {\bf Partial-order}, the relation $\leq$ is transitive, we have 
\mbox{$ x \leq y' $}.
\end{enumerate}
\end{proof}

A corollary of item~\ref{newzxy} is that any commutative P-algebra is a Boolean
algebra.
Such a result is well in line with the idea that the relation between Quantum Logic and
Classical Logic should be similar to the relation between Quantum Physics and Classical 
Physics, since it is common wisdom that Quantum Mechanics boils down to Classical 
Mechanics when all operators commute.
\begin{corollary} \label{the:commutative-B}
In a P-algebra, if ``$\cdot$'' is commutative, then it is associative 
and the P-algebra is a Boolean algebra.
\end{corollary}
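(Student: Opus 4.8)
The plan is to show that commutativity collapses $\langle X,\leq\rangle$ to an orthocomplemented lattice, and then to squeeze distributivity out of condition {\bf O}. If ``$\cdot$'' is commutative then $x\cdot y=y\cdot x$ for all $x,y$, so $x\smile y$ always holds by Theorem~\ref{the:basic1}(\ref{smile-com}), hence $x\cdot y$ is the g.l.b.\ of $x$ and $y$ by Theorem~\ref{the:basic1}(\ref{smile-glb}); thus $\langle X,\leq\rangle$ is a meet-semilattice whose meet is ``$\cdot$'', which is in particular associative. Next I would record that $'$ is order-reversing — if $x\leq y$ then $x\cdot y'\leq y\cdot y'=0$ by {\bf Dot-monotonicity}(\ref{left-monotonicity}) and {\bf Comp}, so $x\,\bot\,y'$, whence $y'\leq x'$ by Theorem~\ref{the:basic1}(\ref{bot'}) — and involutive: $x\leq x''$ by {\bf Comp} and Theorem~\ref{the:basic1}(\ref{bot'}), while $x''\cdot x\leq x$ and $x''\cdot x'=x'\cdot(x')'=0\leq x$, so {\bf O} applied with $y:=x$, $z:=x$ yields $x''\leq x$. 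Writing $x+y\eqdef(x'\cdot y')'$, one checks that $x+y$ is the l.u.b.\ of $x$ and $y$ (an upper bound since $x'\cdot y'\leq x'$ and $'$ reverses order; least since any common upper bound $w$ satisfies $w'\leq x'\cdot y'$). So $\langle X,\leq\rangle$ is a lattice, bounded below by $0$ ({\bf Z}) and above by $1=0'$, with $x\cdot x'=0$ and $x+x'=1$; in particular $'$ is an orthocomplementation.

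Then I would extract centrality of every element from {\bf O}. Instantiating {\bf O} with $z:=(x\cdot y)+(x\cdot y')$ — which satisfies both $x\cdot y\leq z$ and $x\cdot y'\leq z$ — gives $x\leq(x\cdot y)+(x\cdot y')$; the reverse inequality is trivial, so
\[
u=(u\cdot v)+(u\cdot v')\qquad\mbox{for all }u,v .
\]

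From this I would derive distributivity. The key auxiliary fact is $(x+y)\cdot x'\leq y$: putting $f:=(x+y)\cdot x'$ and using De Morgan $x'\cdot y'=(x+y)'$ (which needs involutivity), one gets $f\cdot y'=(x+y)\cdot x'\cdot y'=(x+y)\cdot(x+y)'=0$, so centrality applied to $u:=f$, $v:=y$ gives $f=(f\cdot y)+(f\cdot y')=f\cdot y\leq y$. Now apply centrality to $u:=(x+y)\cdot z$, $v:=x$:
\[
(x+y)\cdot z=(x+y)\cdot z\cdot x+(x+y)\cdot z\cdot x'=x\cdot z+(x+y)\cdot x'\cdot z\leq x\cdot z+y\cdot z,
\]
using $x\leq x+y$ for the first summand and the auxiliary fact together with {\bf Dot-monotonicity}(\ref{left-monotonicity}) for the second; since $x\cdot z+y\cdot z\leq(x+y)\cdot z$ always holds, the lattice is distributive. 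A bounded complemented distributive lattice is a Boolean algebra, and ``$\cdot$'' is its meet, hence associative.

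The main obstacle is the last step: recognizing that {\bf O} is precisely the assertion that every element is central, and then bootstrapping full distributivity from centrality — in particular the inequality $(x+y)\cdot x'\leq y$, which uses centrality twice together with the De Morgan identity. The involutivity of $'$ in the first step, obtained through the slightly unobvious substitution $y:=z:=x$ into {\bf O}, is the other place where a small trick is required.
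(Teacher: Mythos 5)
Your proposal is correct. Its first half is exactly the paper's argument: commutativity plus \textbf{Dot-monotonicity} makes $x\cdot y$ a lower bound, Theorem~\ref{the:basic1}, item~\ref{newzxy} upgrades it to the g.l.b., and associativity follows because the binary g.l.b.\ operation is associative. Where you diverge is that the paper stops there with ``it is then easy to show that $P$ is a Boolean algebra,'' whereas you actually carry out that step: you verify that $'$ is an involutive antitone orthocomplementation (these facts are also Theorem~\ref{the:orthocomplemented-poset}, proved later in the paper by essentially your substitutions into \textbf{O}), that $+$ is the join, and then extract distributivity from the centrality identity $u=(u\cdot v)+(u\cdot v')$ — itself a specialization of Theorem~\ref{the:+'}, item~\ref{cut-pre} — via the auxiliary inequality $(x+y)\cdot x'\leq y$. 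That chain is sound (the only cosmetic point is that deducing $y'\leq x'$ from $x\,\bot\,y'$ uses symmetry of $\bot$, which is immediate here since ``$\cdot$'' is assumed commutative), and it supplies precisely the content the paper leaves to the reader; your identification of \textbf{O} as the statement that every element is central is the right way to see why distributivity comes for free in the commutative case.
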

\begin{proof}
If $\cdot$ is commutative, {\bf Dot-monotonicity}, item~\ref{right-monotonicity} 
implies that $x \cdot y$ is a lower bound for $x$ and $y$.
Item~\ref{newzxy} in Theorem~\ref{the:basic1} then implies that $x \cdot y$ is 
a greatest lower bound for $x$ and $y$. 
We see that $X$ is a lattice under $\leq$ and that $\cdot$ is associative, since g.l.b.
is associative.
It is then easy to show that $P$ is a Boolean algebra.
\end{proof}

Our next result shows that the structure \mbox{$ \langle X , \, ' , \leq \rangle $} 
is an orthocomplemented poset: 
complementation is an antimonotone involution and $1$ is a top element.
It is a uniquely complemented poset as studied 
in~\cite{Waphare+1:2005,Chajda+2:2018}.
Item~\ref{S} prepares the proof of orthomodularity.
\begin{theorem} \label{the:orthocomplemented-poset}
In a P-algebra, for any \mbox{$ x , y \in X $} 
\begin{enumerate}
\item \label{involution}
\mbox{$ x'' = x $}.
\item \label{antitonicity}
\mbox{$ x \leq y $} iff \mbox{$ y' \leq x' $}.
\item \label{One-Right}
$1$ is a top element: \mbox{$ x \leq 1 $}.
\item \label{S}
if \mbox{$ x \smile y $}, then \mbox{$ x' \smile y $}.
\end{enumerate}
\end{theorem}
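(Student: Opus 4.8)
The plan is to prove the four items in order, each time reducing the claim to Theorem~\ref{the:basic1} together with the axioms {\bf Comp}, {\bf Z} and {\bf O}. The only genuinely delicate points will be a clever substitution into {\bf O} (for item~\ref{involution}) and the use of the time-reversal clause of {\bf P-associativity} (for item~\ref{S}); everything else is short bookkeeping with the relations $\bot$ and $\smile$.

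For item~\ref{involution} ($x'' = x$) I would establish the two inequalities separately. The inequality $x \leq x''$ is essentially free: {\bf Comp} gives $x \cdot x' = 0$, i.e.\ $x \, \bot \, x'$, and Theorem~\ref{the:basic1}, item~\ref{bot'} (with $y$ replaced by $x'$) turns this into $x \leq x''$. For the reverse inequality $x'' \leq x$ I would invoke axiom {\bf O} with the instantiation ``$x$''~$:= x''$, ``$y$''~$:= x$, ``$z$''~$:= x$: the first premise $x'' \cdot x \leq x$ is {\bf Dot-monotonicity}, item~\ref{right-monotonicity}, and the second premise $x'' \cdot x' \leq x$ holds because $x'' \cdot x' = 0$ (apply {\bf Comp} to $x'$ and use symmetry of $\bot$, Theorem~\ref{the:basic1}, item~\ref{z-commutation}) and then {\bf Z}. {\bf O} yields $x'' \leq x$, and anti-symmetry of $\leq$ closes item~\ref{involution}. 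Spotting the right substitution into {\bf O} is the first place where the argument is not completely mechanical.

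Items~\ref{antitonicity} and~\ref{One-Right} are then short. For antitonicity, assume $x \leq y$; from {\bf Comp} applied to $y$ and symmetry of $\bot$ we get $y' \, \bot \, y$, and since $x \leq y$, Theorem~\ref{the:basic1}, item~\ref{leq-bot} gives $y' \, \bot \, x$, which by item~\ref{bot'} is exactly $y' \leq x'$. The converse direction follows by applying this implication to $y' \leq x'$ and then using item~\ref{involution} to rewrite $x''$ and $y''$ as $x$ and $y$. For item~\ref{One-Right}, observe that $x \leq 1 = 0'$ is equivalent, by item~\ref{bot'}, to $x \, \bot \, 0$, i.e.\ to $x \cdot 0 = 0$, which is Theorem~\ref{the:basic1}, item~\ref{x0}.

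Item~\ref{S} is where the time-reversal axiom enters, and I expect this to be the main obstacle. Unfolding definitions, $x' \smile y$ means $x' \cdot y \leq x'$, and by item~\ref{bot'} this is equivalent to $(x' \cdot y) \, \bot \, x$, i.e.\ to $(x' \cdot y) \cdot x = 0$. By {\bf P-associativity}, item~\ref{xyz0}, this last equation is equivalent to $(x \cdot y) \cdot x' = 0$, so it suffices to prove $(x \cdot y) \, \bot \, x'$. Now the hypothesis $x \smile y$ says precisely $x \cdot y \leq x$ (Definition~\ref{def:basic}, item~\ref{smile-def}), while {\bf Comp} and symmetry of $\bot$ give $x' \, \bot \, x$; applying Theorem~\ref{the:basic1}, item~\ref{leq-bot} with the smaller element taken to be $x \cdot y$ yields $x' \, \bot \, (x \cdot y)$, hence $(x \cdot y) \, \bot \, x'$ by symmetry, which is what was needed. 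Tracing the chain back gives $x' \smile y$. The delicate point here is recognizing that item~\ref{xyz0} (invariance under reversing the order of the three operands) is exactly the tool that converts the awkward product $(x' \cdot y) \cdot x$ into the manageable $(x \cdot y) \cdot x'$.
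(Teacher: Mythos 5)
Your proposal is correct and follows essentially the same route as the paper: the same substitution into {\bf O} for $x'' \leq x$, and the same pivot through {\bf P-associativity}, item~\ref{xyz0}, for item~\ref{S}. The only differences are cosmetic --- you derive $x \leq x''$ and items~\ref{antitonicity}--\ref{One-Right} by citing the already-proven items~\ref{bot'}, \ref{leq-bot} and~\ref{x0} of Theorem~\ref{the:basic1} where the paper invokes {\bf O} or antitonicity directly, which is equally valid since those items rest on the same axioms.
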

\begin{proof}{\ }
\begin{enumerate}
\item 
By Theorem~\ref{the:basic1}, item~\ref{z-commutation}, \mbox{$ x'' \cdot x' = $} 
\mbox{$ 0 \leq x $}.
By {\bf Dot-monotonicity}, item~\ref{right-monotonicity} we have
\mbox{$ x'' \cdot x \leq x $}.
By {\bf O}, we see that \mbox{$ x'' \leq x $}.

By {\bf Z}, \mbox{$ x \cdot x' = 0 \leq x'' $}.
By {\bf Dot-monotonicity}, item~\ref{right-monotonicity} we have
\mbox{$ x \cdot x'' \leq x'' $}.
By {\bf O}, we see that \mbox{$ x \leq x'' $}.
\item 
Assume \mbox{$ x \leq y $}. 
By item~\ref{involution} just above, \mbox{$ x \leq y'' $}.
By Theorem~\ref{the:basic1}, item~\ref{bot'}, \mbox{$ x \cdot y' = 0 $}. 
By Theorem~\ref{the:basic1}, item~\ref{z-commutation}, \mbox{$ y' \cdot x = 0 $}. 
By Theorem~\ref{the:basic1}, item~\ref{bot'}, \mbox{$ y' \leq x' $}.
The \emph{if} part follows by item~\ref{involution} just above.
\item 
By {\bf Z}, \mbox{$ 0 \leq x' $} and by item~\ref{antitonicity} just above
\mbox{$ x'' \leq 1 $}. Conclude by item~\ref{involution} above.
\item 
By assumption \mbox{$ x \cdot y \leq x $}.
By Theorem~\ref{the:basic1}, item~\ref{bot'} and 
item~\ref{involution} above, we have \mbox{$ ( x \cdot y ) \cdot x' = 0 $}.
By {\bf P-associativity}, item~\ref{xyz0}, we have 
\mbox{$ ( x' \cdot y ) \cdot x = 0 $}.
By Theorem~\ref{the:basic1}, item~\ref{bot'}, \mbox{$ x' \cdot y \leq x' $}, i.e.,
\mbox{$ x' \smile y $}.
\end{enumerate}
\end{proof}

\subsection{The operation $+$ and duality} \label{sec:+}
Boolean algebras are often, but not always, described as algebras with two binary 
operations related by the de Morgan laws. 
It is therefore only natural to define a dual operation to ``$\cdot$''.
\begin{definition} \label{def:+}
Let us define the operation $+$ by \mbox{$ x + y \eqdef ( y' \cdot x' )' $}. 
\end{definition}
Note that, in order that, in Section~\ref{sec:interpretation}, the action be focused close 
to the turnstile, the order of the operands has been reversed in the definition of the 
operation ``$+$''. 
The operation ``$+$'' does not seem to correspond to a logical 
notion that is usual or intuitive in general.
Note that in the plane, if $x$ and $y$ are generic lines, i.e., 
distinct and not orthogonal, then $x'$ and $y'$ are the respectively orthogonal lines,
\mbox{$ y' \cdot x' = x' $} and \mbox{$ x + y = x $}.
Note that $x + y$ is \emph{not} the subspace generated by $x$ and $y$.
This may be in relation with the fact that, when physicists consider superpositions of
states, those are typically superpositions of orthogonal states.
This section will study the operation ``$+$'' and show, in particular, that it behaves 
nicely when its arguments are orthogonal. 
In orthocomplemented lattices ``$+$'' is the l.u.b. operation.
In Hilbert spaces, when \mbox{$ A \bot B $}, \mbox{$ A + B $} is the subsace 
spanned by the union \mbox{$ A \cup B $}. 

\begin{theorem} \label{the:co+}
In a P-algebra, for any \mbox{$ x , y , z \in X $} 
\begin{enumerate}
\item \label{bottom-top}
\mbox{$ x + x' = x' + x = 1 $}.
\item \label{unique-complement}
if \mbox{$ x \cdot y = 0 $} and \mbox{$ x + y = 1 $}, then \mbox{$ y = x' $}.
\item \label{duality}
\mbox{$ x \cdot y = ( y' + x' )' $}.
\item \label{leq.+}
if \mbox{$ x \leq y $}, \mbox{$ y = $} \mbox{$ x + y = $} \mbox{$ y + x $}.
\item \label{+left-monotonicity}
\mbox{$ x \leq x + y $},
\item \label{+right-monotonicity}
if \mbox{$ x \leq y $}, then \mbox{$ z + x \leq z + y $},
\item \label{+ub}
if \mbox{$ x \leq z $} and \mbox{$ y \leq z $}, then \mbox{$ x + y \leq z $},
and therefore \mbox{$ 0 + 0 = 0 $}.
\item \label{+lub}
if \mbox{$ x \smile y $}, \mbox{$ x + y = l.u.b. ( x , y ) $}.
\end{enumerate}
\end{theorem}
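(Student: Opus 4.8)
The plan is to push every clause through the defining identity \mbox{$ x + y = ( y' \cdot x' )' $}, together with the involution \mbox{$ x'' = x $} (Theorem~\ref{the:orthocomplemented-poset}, item~\ref{involution}), the antitonicity of $'$ (item~\ref{antitonicity} of the same theorem), and the monotonicity and orthogonality facts of Theorem~\ref{the:basic1}. Under these tools each statement becomes a short ``de~Morgan'' restatement of a property of ``$\cdot$'' that is already available, so most of the work is bookkeeping in the dual.

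The routine clauses come first. For item~\ref{bottom-top}, compute \mbox{$ x + x' = ( x'' \cdot x' )' = ( x \cdot x' )' = 0' = 1 $} using {\bf Comp}, and \mbox{$ x' + x = ( x' \cdot x'' )' = ( x' \cdot x )' = 1 $} using Theorem~\ref{the:basic1}, item~\ref{z-commutation}. Item~\ref{duality} is immediate from two applications of involution: \mbox{$ ( y' + x' )' = ( x'' \cdot y'' )'' = x \cdot y $}. For items~\ref{leq.+}--\ref{+ub}, from \mbox{$ x \leq y $} antitonicity gives \mbox{$ y' \leq x' $}, to which I apply, respectively, Theorem~\ref{the:basic1}, item~\ref{small-large} (yielding \mbox{$ y' \cdot x' = y' = x' \cdot y' $}, hence \mbox{$ x + y = y + x = y'' = y $}), {\bf Dot-monotonicity}, item~\ref{left-monotonicity} (yielding \mbox{$ y' \cdot z' \leq x' \cdot z' $}, hence \mbox{$ z + x \leq z + y $} after complementing), and, after also dualizing \mbox{$ y \leq z $} to \mbox{$ z' \leq y' $}, Theorem~\ref{the:basic1}, item~\ref{newzxy} (yielding \mbox{$ z' \leq y' \cdot x' $}, hence \mbox{$ x + y \leq z $}). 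Item~\ref{+left-monotonicity} is just {\bf Dot-monotonicity}, item~\ref{right-monotonicity} (\mbox{$ y' \cdot x' \leq x' $}) read through antitonicity. The equality \mbox{$ 0 + 0 = 0 $} then follows from item~\ref{+ub} applied to \mbox{$ 0 \leq 0 $}, with {\bf Z} supplying \mbox{$ 0 \leq 0 + 0 $} for the reverse inequality.

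For item~\ref{unique-complement}: from \mbox{$ x + y = 1 = 0' $} and injectivity of $'$ (item~\ref{involution}) we get \mbox{$ y' \cdot x' = 0 $}, hence \mbox{$ x' \cdot y' = 0 $} by symmetry of $\bot$ (Theorem~\ref{the:basic1}, item~\ref{z-commutation}), hence \mbox{$ x' \leq y'' = y $} by item~\ref{bot'} and involution; meanwhile \mbox{$ x \cdot y = 0 $} gives \mbox{$ x \leq y' $} by item~\ref{bot'}, and antitonicity turns \mbox{$ x' \leq y $} into \mbox{$ y' \leq x $}. Anti-symmetry of $\leq$ then forces \mbox{$ x = y' $}, so \mbox{$ y = x' $}.

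The one clause demanding genuine thought is item~\ref{+lub}. I would first strengthen \mbox{$ x \smile y $} to \mbox{$ x' \smile y' $} by alternating uses of Theorem~\ref{the:orthocomplemented-poset}, item~\ref{S}, and {\bf P-commutativity} (\mbox{$ x \smile y \Rightarrow x' \smile y \Rightarrow y \smile x' \Rightarrow y' \smile x' \Rightarrow x' \smile y' $}); then Theorem~\ref{the:basic1}, item~\ref{smile-glb}, says \mbox{$ y' \cdot x' $} is the g.l.b. of $x'$ and $y'$. It then remains the standard de~Morgan observation in an orthocomplemented poset: if $m$ is the g.l.b. of $x'$ and $y'$, then $m'$ is the l.u.b. of $x$ and $y$ --- one direction because \mbox{$ m \leq x' $} and \mbox{$ m \leq y' $} give \mbox{$ x \leq m' $} and \mbox{$ y \leq m' $} by antitonicity, the other because any upper bound $u$ of $\{ x , y \}$ satisfies \mbox{$ u' \leq x' $} and \mbox{$ u' \leq y' $}, hence \mbox{$ u' \leq m $}, hence \mbox{$ m' \leq u $}. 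Since \mbox{$ x + y = ( y' \cdot x' )' = m' $}, this closes the clause. I expect the first two paragraphs to be entirely mechanical; item~\ref{+lub}, and to a lesser extent the anti-symmetry step in item~\ref{unique-complement}, are where the proof has any content.
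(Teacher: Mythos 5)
Your proposal is correct and follows essentially the same route as the paper: every clause is obtained by dualizing the corresponding fact about ``$\cdot$'' through the involution and antitonicity of $'$, which is exactly what the paper does (items~\ref{leq.+}--\ref{+lub} are there dispatched as one-line ``dual of~\dots'' citations whose details you have simply written out, including the chain \mbox{$ x \smile y \Rightarrow x' \smile y' $} needed to dualize item~\ref{smile-glb} of Theorem~\ref{the:basic1}).
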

\begin{proof}{\ }
\begin{enumerate}
\item 
Since, by {\bf Z} and Theorem~\ref{the:basic1}, item~\ref{z-commutation} 
\mbox{$ x \cdot x' = $} \mbox{$ x' \cdot x = 0 $} we have 
\mbox{$ ( x \cdot x' )' = $} \mbox{$ ( x' \cdot x )' = 1 $}, i.e.,
\mbox{$ x'' + x' = x' + x'' = 1 $}.
Conclude by item~\ref{involution} above.
\item 
Assume \mbox{$ x \cdot y = 0 $} and \mbox{$ x + y = 1 $}.
We have \mbox{$ ( y' \cdot x' )' = 0' $} and, by 
Theorem~\ref{the:orthocomplemented-poset}, item~\ref{involution},
\mbox{$ y' \cdot x' = 0 $}.
By Theorem~\ref{the:basic1}, item~\ref{bot'} and Theorem~\ref{the:co+}, 
item~\ref{involution} above we have 
\mbox{$ x \leq y' $} and \mbox{$ y' \leq x'' = x $}.
We see that \mbox{$ y' = x $} and \mbox{$ y = x' $}.
\item 
By Theorem~\ref{the:orthocomplemented-poset}, item~\ref{involution},
\mbox{$ y' + x' = $} \mbox{$ ( x'' \cdot y'' )' = $} \mbox{$ ( x \cdot y )' $} and
\mbox{$ ( y' + x' )' = $} \mbox{$ ( x \cdot y )'' = $} \mbox{$ x \cdot y $}.
\item 
By duality from Theorem~\ref{the:basic1}, item~\ref{small-large}, using 
Theorem~\ref{the:orthocomplemented-poset}, item~\ref{antitonicity}.
\item 
Dual of {\bf Dot-monotonicity}, item~\ref{right-monotonicity}.
\item 
Dual of {\bf Dot-monotonicity}, item~\ref{left-monotonicity}.
\item 
Dual of Theorem~\ref{the:basic1}, item~\ref{newzxy}.
\item 
Dual of Theorem~\ref{the:basic1}, item~\ref{smile-glb}.
\end{enumerate}
\end{proof}

The next theorem gathers different results.
Items~\ref{commute-+} and~\ref{half-distributivity} concern commuting features.
\begin{theorem} \label{the:+'}
For any \mbox{$ x , y , z \in X $}
\begin{enumerate}
\item \label{commute-+}
if \mbox{$ x \smile y $}, then \mbox{$ x + y = y + x $}.
\item \label{lub}
if \mbox{$ x \smile y $}, then \mbox{$ x + y $} is the l.u.b. of $x$ and $y$.
\item \label{half-distributivity}
if \mbox{$ x \smile y $}, 
then \mbox{$ x \cdot z + y \cdot z \leq $} \mbox{$ ( x + y ) \cdot z $}.
\item \label{cut-pre}
\mbox{$ x \leq x \cdot y + x \cdot y' $}.
\item \label{xxy}
\mbox{$ x \cdot ( x + y ) = x = ( x + y ) \cdot x $} and 
\mbox{$ x' \cdot ( x + y ) = $} \mbox{$ ( x + y ) \cdot x' $}. 
\item \label{pre-jump}
\mbox{$ x \leq y + z $} iff \mbox{$ x \cdot y' \leq z $}.
\end{enumerate}
\end{theorem}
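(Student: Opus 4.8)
The plan is to establish the six items in order, since after items~\ref{commute-+} and~\ref{cut-pre} the rest are assembled from Theorems~\ref{the:basic1}, \ref{the:orthocomplemented-poset} and~\ref{the:co+}. For item~\ref{commute-+}, note that by Definition~\ref{def:+} the equality $x + y = y + x$ is the same as $y' \cdot x' = x' \cdot y'$, i.e., by Theorem~\ref{the:basic1}, item~\ref{smile-com} (and {\bf P-commutativity}), the same as $x' \smile y'$; so it suffices to show $x \smile y$ implies $x' \smile y'$, which follows by applying Theorem~\ref{the:orthocomplemented-poset}, item~\ref{S}, and {\bf P-commutativity} twice ($x \smile y \Rightarrow x' \smile y \Rightarrow y \smile x' \Rightarrow y' \smile x' \Rightarrow x' \smile y'$). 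Item~\ref{lub} then follows at once: by item~\ref{commute-+}, $x + y = y + x$, so Theorem~\ref{the:co+}, item~\ref{+left-monotonicity} gives $x \leq x+y$ and $y \leq x+y$, while Theorem~\ref{the:co+}, item~\ref{+ub} shows $x+y$ lies below every common upper bound (this is also Theorem~\ref{the:co+}, item~\ref{+lub}). Item~\ref{half-distributivity} uses the same pattern: from $x \smile y$ and item~\ref{commute-+} we get $x, y \leq x+y$, so {\bf Dot-monotonicity}, item~\ref{left-monotonicity} gives $x \cdot z \leq (x+y)\cdot z$ and $y \cdot z \leq (x+y) \cdot z$, and Theorem~\ref{the:co+}, item~\ref{+ub} yields $x \cdot z + y \cdot z \leq (x+y)\cdot z$. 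Item~\ref{xxy} splits: since $x \leq x+y$, Theorem~\ref{the:basic1}, item~\ref{small-large} gives $x \cdot (x+y) = x = (x+y)\cdot x$; and by Theorem~\ref{the:basic1}, item~\ref{leq-sim}, $x \smile x+y$, hence Theorem~\ref{the:orthocomplemented-poset}, item~\ref{S} gives $x' \smile x+y$, whence Theorem~\ref{the:basic1}, item~\ref{smile-com} gives $x' \cdot (x+y) = (x+y)\cdot x'$.

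For item~\ref{cut-pre} I would first show that $x \cdot y$ and $x \cdot y'$ are orthogonal, so that {\bf O} becomes applicable. From {\bf Comp} and Theorem~\ref{the:basic1}, item~\ref{z-commutation}, $y' \, \bot \, y$; since $x \cdot y \leq y$ ({\bf Dot-monotonicity}, item~\ref{right-monotonicity}), Theorem~\ref{the:basic1}, item~\ref{leq-bot} gives $y' \, \bot \, x \cdot y$, hence $x \cdot y \, \bot \, y'$ by symmetry, and then, since $x \cdot y' \leq y'$, item~\ref{leq-bot} again gives $x \cdot y \, \bot \, x \cdot y'$. Thus $(x \cdot y) \cdot (x \cdot y') = 0 \leq x \cdot y$ by {\bf Z}, so $x \cdot y \smile x \cdot y'$, and item~\ref{commute-+} makes $w \eqdef x \cdot y + x \cdot y'$ commutative. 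Theorem~\ref{the:co+}, item~\ref{+left-monotonicity} then gives both $x \cdot y \leq w$ and $x \cdot y' \leq w$, and {\bf O} delivers $x \leq w$, which is item~\ref{cut-pre}.

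The genuinely non-routine item is~\ref{pre-jump}, and it is the part I expect to be the crux. The idea is that orthogonality ($\cdot = 0$), the order $\leq$ (via Theorem~\ref{the:basic1}, item~\ref{bot'}, and the involution: $a \leq b$ iff $a \cdot b' = 0$), and the operation $+$ (via Definition~\ref{def:+} and the involution) can all be rewritten as statements about when a threefold product of features vanishes, and in that form the time-reversal axiom {\bf P-associativity}, item~\ref{xyz0} is exactly the needed bridge. Concretely: $x \leq y + z$ iff $(y+z)' \leq x'$ (Theorem~\ref{the:orthocomplemented-poset}, item~\ref{antitonicity}), iff $z' \cdot y' \leq x'$ (since $(y+z)' = z' \cdot y'$ by Definition~\ref{def:+} and item~\ref{involution}), iff $(z' \cdot y') \cdot x = 0$ (Theorem~\ref{the:basic1}, item~\ref{bot'}, and item~\ref{involution}), iff $(x \cdot y') \cdot z' = 0$ (by {\bf P-associativity}, item~\ref{xyz0}), iff $x \cdot y' \leq z$ (again Theorem~\ref{the:basic1}, item~\ref{bot'}, and item~\ref{involution}). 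The only real insight is spotting that the condition must be massaged into the $(a\cdot b)\cdot c = 0$ shape so that item~\ref{xyz0} applies; the remaining steps are bookkeeping with antitonicity, involution, and the definition of $+$.
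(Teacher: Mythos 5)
Your proof is correct and follows essentially the same route as the paper's: item~\ref{commute-+} via Theorem~\ref{the:orthocomplemented-poset}, item~\ref{S}, and {\bf P-commutativity}; items~\ref{lub}, \ref{half-distributivity} and~\ref{xxy} by assembling monotonicity and upper-bound facts from Theorem~\ref{the:co+}; item~\ref{cut-pre} by establishing $x \cdot y \,\bot\, x \cdot y'$ and invoking {\bf O}; and item~\ref{pre-jump} by rewriting everything as a vanishing threefold product so that {\bf P-associativity}, item~\ref{xyz0}, applies. The small detour through antitonicity in item~\ref{pre-jump} is harmless and equivalent to the paper's direct chain via Theorem~\ref{the:basic1}, items~\ref{bot'} and~\ref{z-commutation}.
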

\begin{proof}{\ }
\begin{enumerate}
\item 
By Theorem~\ref{the:orthocomplemented-poset}, item~\ref{S}, \mbox{$ x' \smile y' $} and, 
by Theorem~\ref{the:basic1}, item~\ref{smile-com}, we have \mbox{$ x' \cdot y' = $} 
\mbox{$ y' \cdot x' $} and, by Definition~\ref{def:basic}, \mbox{$ y + x = x + y $}.
\item 
By item~\ref{commute-+} just above and items~\ref{+left-monotonicity} 
and~\ref{+ub} in Theorem~\ref{the:co+}.
\item 
Assume \mbox{$ x \smile y $}. 
By item~\ref{commute-+} just above, \mbox{$ x + y = y + x $}.
By Theorem~\ref{the:co+}, item~\ref{+left-monotonicity} and {\bf Dot-monotonicity},
item~\ref{right-monotonicity}, we have \mbox{$ x \cdot z \leq $}
\mbox{$ ( x + y ) \cdot z $} and\break \mbox{$ y \cdot z \leq $} 
\mbox{$ ( y + x ) \cdot z = $} \mbox{$ ( x + y ) \cdot z $}.\\
By Theorem~\ref{the:co+}, item~\ref{+ub}, 
\mbox{$ x \cdot z + y \cdot z \leq ( x + y ) \cdot z $}.
\item 
By Theorem~\ref{the:basic1}, item~\ref{leq-bot}, 
\mbox{$ x \cdot y \, \bot \, x \cdot y' $}.
Therefore \mbox{$ x \cdot y \smile x \cdot y' $} and, by Theorem~\ref{the:+'},
item~\ref{commute-+}, \mbox{$ x \cdot y + x \cdot y' = $}
\mbox{$ x \cdot y' + x \cdot y $}.
By Theorem~\ref{the:co+}, item~\ref{+left-monotonicity}
\mbox{$ x \cdot y \leq x \cdot y + x \cdot y' $} and
\mbox{$ x \cdot y' \leq x \cdot y + x \cdot y' $}.
Conclude by {\bf O}.
\item 
By Theorem~\ref{the:co+}, item~\ref{+left-monotonicity} \mbox{$ x \leq x + y $}, 
i.e., \mbox{$ x \cdot ( x + y ) = x $}.
By Theorem~\ref{the:basic1}, item~\ref{leq-sim}, we have
\mbox{$ x \smile x + y $} and, by Theorem~\ref{the:basic1}, item~\ref{smile-com}
\mbox{$ x \cdot ( x + y ) = ( x + y ) \cdot x $}.
But, by Theorem~\ref{the:orthocomplemented-poset}, item~\ref{S}, we also have \mbox{$ x' \smile x + y $} 
and the last claim follows from Theorem~\ref{the:basic1}, item~\ref{smile-com}.
\item 
By Theorem~\ref{the:basic1}, items~\ref{bot'} and then~\ref{z-commutation}, 
{\bf P-associativity}, item~\ref{xyz0} and finally Theorem~\ref{the:basic1}, 
item~\ref{bot'} :
\[\begin{array}{c}
x \leq y + z {\rm \ iff \ } x \leq ( z' \cdot y' )' {\rm \ iff \ } x \cdot ( z' \cdot y' ) = 0 \\
{\rm \ iff \ } ( z' \cdot y' ) \cdot x = 0 {\rm \ iff \ } ( x \cdot y' ) \cdot z' = 0 
{\rm \ iff \ } x \cdot y' \leq z.
\end{array}\]
\end{enumerate}
\end{proof}

Our next result shows that the structure \mbox{$ \langle X , \, ' , \leq \rangle $} 
is an orthomodular poset.
\begin{theorem} \label{the:orthomodularity}
In a P-algebra, for any \mbox{$ x , y \in X $}, if \mbox{$ x \leq y $}, then
\begin{equation} \label{eq:orthomodular}
y = x + x' \cdot y = x + y \cdot x' = y \cdot x' + x = x' \cdot y + x.
\end{equation}
Note that, even though, in non-commutative structures, there may be many different 
notions of orthomodularity, all of them hold true in P-algebras.
\end{theorem}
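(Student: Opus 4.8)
The plan is to first collapse the four displayed expressions to a single one, and then establish that single equality by antisymmetry of $\leq$.

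For the collapse: from $x \leq y$ I get $x \smile y$ by Theorem~\ref{the:basic1}, item~\ref{leq-sim}, and hence $x' \smile y$ by Theorem~\ref{the:orthocomplemented-poset}, item~\ref{S}. By Theorem~\ref{the:basic1}, item~\ref{smile-com}, this gives $x' \cdot y = y \cdot x'$, so the two candidate ``second summands'' are literally the same element; thus expression $1$ equals expression $2$ and expression $3$ equals expression $4$ in~(\ref{eq:orthomodular}). To see that the two orders of the sum agree, note that $x' \smile y$ also yields $x' \cdot y \leq x'$, while $x \, \bot \, x'$ by \textbf{Comp} (with Theorem~\ref{the:basic1}, item~\ref{z-commutation}), so Theorem~\ref{the:basic1}, item~\ref{leq-bot}, gives $x \, \bot \, x' \cdot y$. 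In particular $x \cdot (x' \cdot y) = 0 \leq x$ by \textbf{Z}, so $x \smile x' \cdot y$, and Theorem~\ref{the:+'}, item~\ref{commute-+}, gives $x + x' \cdot y = x' \cdot y + x$. Hence all four expressions in~(\ref{eq:orthomodular}) coincide, and it suffices to prove $y = x + x' \cdot y$.

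For that equality I would prove the two inequalities separately. The direction $x + x' \cdot y \leq y$ is immediate: $x \leq y$ holds by hypothesis and $x' \cdot y \leq y$ holds by \textbf{Dot-monotonicity}, item~\ref{right-monotonicity}, so Theorem~\ref{the:co+}, item~\ref{+ub}, gives $x + x' \cdot y \leq y$. For the reverse direction $y \leq x + x' \cdot y$ I would invoke pre-jump, Theorem~\ref{the:+'}, item~\ref{pre-jump}, instantiated so that it reads ``$y \leq x + (x' \cdot y)$ iff $y \cdot x' \leq x' \cdot y$''; the right-hand side holds by reflexivity, since $y \cdot x' = x' \cdot y$ was already shown. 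Antisymmetry of $\leq$ (\textbf{Partial order}) then completes the proof.

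The only real delicacy, and the step I would watch most carefully, is the bookkeeping of operand order: because ``$\cdot$'' and ``$+$'' are non-commutative, each cited lemma must be applied with exactly the right argument order, and it is essential that it is $x' \smile y$ (not merely $x \smile y$) that licenses both the swap $x' \cdot y = y \cdot x'$ and the commutation of the two orders of the sum. Once $x' \smile y$ and $x \, \bot \, x' \cdot y$ are in hand, the remainder is a direct appeal to the $+$-monotonicity facts of Theorem~\ref{the:co+} and to pre-jump, with no computation to grind through.
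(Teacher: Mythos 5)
Your proposal is correct, and the setup (deriving $x' \smile y$, hence $x' \cdot y = y \cdot x'$, collapsing the four expressions via orthogonality of $x$ and $x' \cdot y$, and getting $x + x' \cdot y \leq y$ from {\bf Dot-monotonicity} and Theorem~\ref{the:co+}, item~\ref{+ub}) matches the paper's. Where you genuinely diverge is the hard direction $y \leq x + x' \cdot y$: the paper decomposes $y$ by noting $y \cdot x = x \leq x + y \cdot x'$ and $y \cdot x' \leq y \cdot x' + x = x + y \cdot x'$, and then applies condition {\bf O} directly to the pair $y \cdot x$, $y \cdot x'$. You instead invoke the pre-jump equivalence (Theorem~\ref{the:+'}, item~\ref{pre-jump}), which turns $y \leq x + (x' \cdot y)$ into $y \cdot x' \leq x' \cdot y$, trivially true once the commutation is in hand; your instantiation of pre-jump is correct, including the operand bookkeeping. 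Your route is arguably shorter and makes the orthomodular identity look like a one-line consequence of the adjunction-like pre-jump law, whereas the paper's makes the reliance on the superposition axiom {\bf O} explicit; the dependence is the same in the end, since pre-jump is proved from {\bf P-associativity}(a) and Theorem~\ref{the:basic1}, item~\ref{bot'}, whose proof itself uses {\bf O}. Both arguments are sound.
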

\begin{proof}
By our assumption and Theorem~\ref{the:basic1}, item~\ref{small-large}, 
\mbox{$ x \cdot y = $} \mbox{$ x = $} \mbox{$ y \cdot x $}. 
By item~\ref{smile-com} there, \mbox{$ x \smile y $} and 
by Theorem~\ref{the:orthocomplemented-poset}, item~\ref{S}, \mbox{$ x' \smile y $} and 
\mbox{$ x' \cdot y = y \cdot x' $}.

By our assumption and Theorem~\ref{the:co+}, item~\ref{+left-monotonicity}, we have
\mbox{$ y \cdot x = $} \mbox{$ x \leq $} \mbox{$ x + y \cdot x' $} and
\mbox{$ y \cdot x' \leq $} \mbox{$ y \cdot x' + x $}.
Since \mbox{$ y \cdot x' \, \bot \, x $}, \mbox{$ y \cdot x' \smile x $} and, 
by Theorem~\ref{the:+'}, item~\ref{commute-+}, \mbox{$ y \cdot x' + x = $} 
\mbox{$ x + y \cdot x' $}.
We have \mbox{$ y \cdot x \leq x + y \cdot x' $} and 
\mbox{$ y \cdot x' \leq x + y \cdot x' $}. 
Condition {\bf O} in Definition~\ref{def:P-algebra} implies 
\mbox{$ y \leq x + y \cdot x' $}, and, by what we have seen at the start of this proof, 
\mbox{$ y \leq x + x' \cdot y $}.
By {\bf Dot-monotonicity}, item~\ref{right-monotonicity} and 
Theorem~\ref{the:co+}, item~\ref{+ub} we have \mbox{$ x + x' \cdot y \leq y $}.
We have shown the first equality.
The other equalities follow easily from what we have proven.
\end{proof}

\section{Language and interpretation} \label{sec:syntax}
We want to describe the logic of P-algebras: formulas, sequents~\cite{Gent:32, Gent:69} 
and semantics.
This section is fairly pedestrian: its only purpose is to prepare the ground 
for Section~\ref{sec:rules} and all the technical preparatory work has been done 
in Section~\ref{sec:P-algebras}.
The only point worth noticing is our interpretation of the sequents: 
association to the left on the left of the turnstile and to the right on its right.

\subsection{The language and its interpretation} \label{sec:language}
\subsubsection{Propositions and Sequents} \label{sec:propositions}
We consider a set $AT$ of atomic (caution: atomic here has nothing to do with its 
meaning in Appendix~\ref{sec:atomic}) propositions, a constant, one unary and 
one binary connective.
Most of the time, classical logic, the logic of Boolean algebras, is presented as a logic 
with two binary connectives: conjunction and disjunction.
Online with this paper's presentation of P-algebras, we include only one binary 
connective in the language, conjunction ($\wedge$).
Disjunction ($\vee$) is considered as a defined connective.

We shall represent propositions by small greek letters.
\begin{definition} \label{def:propositions}
\ 
\begin{itemize}
\item
An atomic proposition is a proposition.
\item
$\mathbf{0}$ is a proposition.
\item
If $\alpha$ is a proposition then \mbox{$ \neg \alpha$} is a proposition.
\item
If $\alpha $ and $\beta$ are propositions, then \mbox{$ \alpha \wedge \beta $} is a proposition.
\item
There are no other propositions.
\end{itemize}
\end{definition}
The set of propositions on $AT$ will be denoted by $\cal L$.

\begin{definition} \label{def:sequents}
A sequent is constituted by two finite sequences of propositions, 
separated by the turnstile symbol, that may be $\models$ or $\vdash$.
\end{definition} 
Here is a typical sequent:
\mbox{$ \alpha , \neg ( \neg \gamma \wedge \neg \neg \beta ) \vee \gamma \models \neg \neg ( \delta \wedge \epsilon ) $}.
In the representation of sequents we shall use greek capital letters 
to represent finite sequences of propositions.
The sequent \mbox{$ \Gamma \models \Gamma $}
is a sequent in which the same sequence $\Gamma$ of propositions appears 
on both sides of the turnstile.

\subsubsection{Interpretation} \label{sec:interpretation}
Propositions are interpreted as features in a P-algebra.
In a P-algebra \mbox{$ \langle X , 0 , ' , \cdot \rangle $}, 
a proposition is interpreted as a member of $X$ with the help of an assignment 
function for atomic propositions.
\begin{definition} \label{def:interpretation}
Assume a P-algebra \mbox{$ \langle X , 0 , ' , \cdot \rangle $} and 
an assignment \mbox{$ v : AT \longrightarrow X $}.
The assignment $v$ can be extended to the language $\cal L$, 
\mbox{$ v : {\cal L} \longrightarrow X $} by
\begin{itemize}
\item
\mbox{$ v(\mathbf{0}) = 0 $},
\item
\mbox{$ v( \neg \alpha ) = v(\alpha)' $} for any \mbox{$\alpha \in {\cal L}$},
\item
\mbox{$ v( \alpha \wedge \beta ) = v(\alpha) \cdot v(\beta) $} for any 
\mbox{$\alpha , \beta \in {\cal L}$}.
\end{itemize}
\end{definition}
\begin{definition} \label{def:equivalence}
For any \mbox{$ \alpha , \beta \in {\cal L} $} we shall say that $\alpha$ and $\beta$
are {\em semantically equivalent} and write \mbox{$ \alpha \equiv \beta $} iff 
\mbox{$ v(\alpha) = v(\beta) $} for any P-algebra and any assignment of features 
to the atomic propositions.
\end{definition}

Our first result asserts that we can eliminate double negations.
\begin{lemma} \label{the:dnegation}
For any \mbox{$ \alpha , \beta \in {\cal L} $}, one has 
\mbox{$ \neg ( \neg \alpha ) \equiv \alpha $}.
\end{lemma}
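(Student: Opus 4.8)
The plan is to prove this directly from the definition of the interpretation function and the corresponding algebraic identity already established for P-algebras. Specifically, Definition~\ref{def:interpretation} tells us that for any assignment $v$ and any proposition $\alpha$, we have $v(\neg\alpha) = v(\alpha)'$, so applying this twice gives $v(\neg(\neg\alpha)) = v(\neg\alpha)' = (v(\alpha)')' = v(\alpha)''$. Now $v(\alpha)$ is an element $x$ of the carrier $X$ of the P-algebra, and Theorem~\ref{the:orthocomplemented-poset}, item~\ref{involution}, states precisely that $x'' = x$ for every $x \in X$. Hence $v(\neg(\neg\alpha)) = v(\alpha)$.

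Since this equality holds for an arbitrary P-algebra and an arbitrary assignment $v$ of features to atomic propositions, Definition~\ref{def:equivalence} gives exactly $\neg(\neg\alpha) \equiv \alpha$, which is the claim. (The variable $\beta$ in the statement is a red herring — or a typo — since it does not occur in the asserted equivalence; I would simply not mention it, or note that the statement holds for every $\alpha \in {\cal L}$.)

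There is essentially no obstacle here: the entire content of the lemma is the unwinding of one definition together with one invocation of the already-proved involution law $x'' = x$. The only thing that requires a sliver of care is making sure the quantification in Definition~\ref{def:equivalence} is respected — that the argument genuinely works for \emph{every} P-algebra and \emph{every} $v$ — but this is immediate because Theorem~\ref{the:orthocomplemented-poset}, item~\ref{involution}, was itself proved for arbitrary P-algebras. So the proof is a two-line computation, and the ``hard part'' is really just bookkeeping of which definition and which earlier item to cite.

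In LaTeX the proof would read roughly: by Definition~\ref{def:interpretation}, for any P-algebra and any assignment $v$, $v(\neg(\neg\alpha)) = (v(\alpha)')'= v(\alpha)''$, and by Theorem~\ref{the:orthocomplemented-poset}, item~\ref{involution}, $v(\alpha)'' = v(\alpha)$; conclude by Definition~\ref{def:equivalence}.
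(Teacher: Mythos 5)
Your proof is correct and is exactly the paper's argument: unwind Definition~\ref{def:interpretation} to get $v(\neg(\neg\alpha)) = v(\alpha)''$ and apply the involution law of Theorem~\ref{the:orthocomplemented-poset}, item~\ref{involution}. Your side remark about the spurious $\beta$ in the statement is also accurate.
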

\begin{proof}
\mbox{$ v ( \neg ( \neg \alpha ) ) = $} \mbox{$ v( \alpha)'' =$} \mbox{$ v( \alpha) $} 
by item~\ref{involution} in Theorem~\ref{the:orthocomplemented-poset}.
\end{proof}

\begin{definition} \label{def:disjunction}
Define the disjunction of two propositions \mbox{$ \alpha , \beta \in {\cal L} $} by  
\begin{equation} \label{eq:disjunction}
\alpha \vee \beta \eqdef \neg ( \neg \beta \wedge \neg \alpha ).
\end{equation}
\end{definition}

\begin{lemma} \label{the:disjunction}
For any \mbox{$ \alpha , \beta \in {\cal L} $}, \mbox{$ v ( \alpha \vee \beta ) = $}
\mbox{$ v ( \alpha ) + v ( \beta ) $} and 
\mbox{$ \alpha \wedge \beta \equiv $} 
\mbox{$ \neg ( \neg \beta \vee \neg \alpha ) $}.
\end{lemma}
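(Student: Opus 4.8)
The plan is to unwind the definitions and reduce both claims to facts already established about the operation $+$ and the involution $'$. For the first equality, I would compute directly: by Definition~\ref{def:disjunction}, $v(\alpha \vee \beta) = v(\neg(\neg\beta \wedge \neg\alpha)) = v(\neg\beta \wedge \neg\alpha)'$, and then applying the clauses of Definition~\ref{def:interpretation} for $\wedge$ and $\neg$ this is $(v(\neg\beta) \cdot v(\neg\alpha))' = (v(\beta)' \cdot v(\alpha)')'$. By Definition~\ref{def:+} this last expression is exactly $v(\alpha) + v(\beta)$, so $v(\alpha\vee\beta) = v(\alpha)+v(\beta)$.

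For the second claim, I would start from the target side and show its interpretation agrees with that of $\alpha \wedge \beta$ for every P-algebra and assignment. Using the first part of this lemma, $v(\neg\beta \vee \neg\alpha) = v(\neg\beta) + v(\neg\alpha) = v(\beta)' + v(\alpha)'$, hence $v(\neg(\neg\beta \vee \neg\alpha)) = (v(\beta)' + v(\alpha)')'$. Now by Theorem~\ref{the:co+}, item~\ref{duality}, which states $x \cdot y = (y' + x')'$, setting $x = v(\alpha)$ and $y = v(\beta)$ gives $v(\alpha)\cdot v(\beta) = (v(\beta)' + v(\alpha)')'$, which is precisely $v(\neg(\neg\beta\vee\neg\alpha))$. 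Since $v(\alpha\wedge\beta) = v(\alpha)\cdot v(\beta)$ by definition, the semantic equivalence $\alpha\wedge\beta \equiv \neg(\neg\beta\vee\neg\alpha)$ follows.

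There is essentially no obstacle here: the lemma is a bookkeeping consequence of the definition of $\vee$, the definition of $+$, and the duality identity in Theorem~\ref{the:co+}. The only point requiring a little care is tracking the reversal of operands built into Definition~\ref{def:+} and Definition~\ref{def:disjunction} — one must make sure the order in $(y' + x')'$ matches the order coming out of $\neg(\neg\beta \wedge \neg\alpha)$ — but the two reversals are designed to cancel, so the computation goes through cleanly. I would present it as two short displayed chains of equalities, one for each assertion, citing Definition~\ref{def:+}, the first half of the present lemma, and item~\ref{duality} of Theorem~\ref{the:co+}.
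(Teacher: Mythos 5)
Your proof is correct, and the first claim is handled exactly as in the paper: unwind Definition~\ref{def:disjunction}, the interpretation clauses, and Definition~\ref{def:+} to get \mbox{$ v(\alpha\vee\beta) = (v(\beta)'\cdot v(\alpha)')' = v(\alpha)+v(\beta) $}. For the second claim you take a slightly different (equally valid) route: you compute \mbox{$ v(\neg(\neg\beta\vee\neg\alpha)) = (v(\beta)'+v(\alpha)')' $} via the first part and then invoke the algebraic duality identity \mbox{$ x\cdot y = (y'+x')' $} of Theorem~\ref{the:co+}, item~\ref{duality}, whereas the paper stays at the syntactic level, expanding \mbox{$ \neg(\neg\beta\vee\neg\alpha) $} by Definition~\ref{def:disjunction} into \mbox{$ \neg\neg(\neg\neg\alpha\wedge\neg\neg\beta) $} and discharging the double negations with Lemma~\ref{the:dnegation}. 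The two arguments are essentially the same in content --- both reduce to the involution \mbox{$ x''=x $} --- and your care about the operand reversals cancelling is exactly the point that needs checking; nothing is missing.
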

\begin{proof}
The first claim follows from Definitions~\ref{def:disjunction} and~\ref{def:+}.
By Definition~\ref{def:disjunction} and Lemma \ref{the:dnegation}, 
\mbox{$ \neg ( \neg \beta \vee \neg \alpha ) = $}
\mbox{$ \neg ( \neg ( \neg \neg \alpha \wedge \neg \neg \beta ) \equiv $}
\mbox{$ \alpha \wedge \beta $}.
\end{proof}

We must now interpret sequents. 
We shall, as expected, interpret the left-hand side and the right-hand side as 
features and the turnstile $\models$ as implication ($\leq$).
In the literature, following Gentzen~\cite{Gent:32}, the comma on the left-hand 
side is interpreted as a sort of conjunction ($\cdot$) and the
comma on the right-hand side as sort of disjunction ($+$).
Since those operations are not associative, we must decide how to associate the 
elements of the left-hand side and how to associate those of the right-hand side.
Since \mbox{$\alpha \wedge \beta$} denotes the result of measuring $\beta$ 
\emph{after} $\alpha$, it is natural to decide that the elements of the left-hand side 
associate to the left.
To keep the action close to the turnstile, we decide that the elements of the 
right-hand side associate to the right.

As a consequence, the interpretation, in a P-algebra 
\mbox{$\langle X , 0 , ' , \cdot \rangle $}, for assignment 
\mbox{$ v: {\cal L} \rightarrow X $}, of sequent:
\begin{equation} \label{eq:sequent}
\alpha_{0} , \alpha_{1} \ldots \alpha_{n - 1} \models_{v} \beta_{0} , \beta_{1}  \ldots 
\beta_{m - 1}
\end{equation}
is 
\begin{equation} \label{eq:seq-interpretation}
v ( ( \ldots ( \alpha_{0} \wedge \alpha_{1} ) \wedge \ldots ) \wedge \alpha_{n - 1} ) 
\leq v ( \beta_{0} \vee ( \beta_{1} \vee ( \dots \vee \beta_{m - 1}) \ldots ) ).
\end{equation}
If the right-hand side of the turnstile is empty its interpretation is $0$.
If the left-hand side of the turnstile is empty its interpretation is $1$.

\begin{definition} \label{def:valid-sequent}
A sequent is \emph{valid in a P-algebra \mbox{$ \langle X , 0 , ' , \cdot \rangle $} }
iff its interpretation holds for every assignment.
It is \emph{valid} iff it is valid in any P-algebra.
\end{definition}

For example, the sequent \mbox{$ \alpha \models \alpha , \alpha $} is valid 
for any $\alpha$ since \mbox{$ x = x + x $} for any feature $x$.
Our next lemma shows that propositions can jump over the turnstile in both direction 
at the cost of an added negation. 
\begin{theorem} \label{the:jump}
For any P-algebra \mbox{$ P = \langle X , 0 , ' , \cdot \rangle $}, 
\mbox{$ \alpha_{0} , \ldots , \alpha_{n} \models \beta_{0} , \beta_{1} , \ldots , \beta_{m} $} 
is valid in $P$ iff 
\mbox{$  \alpha_{0} , \ldots , \alpha_{n} , \neg \beta_{0} \models 
\beta_{1} , \ldots , \beta_{m} $} is valid in $P$.
Also,
\mbox{$ \alpha_{0} , \ldots , \alpha_{n} \models \beta_{0} ,\break  \beta_{1} , \ldots , \beta_{m} $}
is valid in $P$ iff 
\mbox{$  \alpha_{0} , \ldots , \alpha_{n - 1} \models \neg \alpha_{n} , \beta_{0} ,
\beta_{1} , \ldots , \beta_{m} $} is valid in $P$.
\end{theorem}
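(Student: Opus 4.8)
The plan is to reduce the statement about sequents to the purely algebraic equivalence proved in Theorem~\ref{the:+'}, item~\ref{pre-jump}, namely that $x \leq y + z$ iff $x \cdot y' \leq z$. First I would unwind the interpretation given in~(\ref{eq:seq-interpretation}). Writing $a \eqdef v((\ldots(\alpha_0 \wedge \alpha_1) \wedge \ldots) \wedge \alpha_n)$ for the left-associated product of the premises and $b \eqdef v(\beta_1 \vee (\ldots \vee \beta_m)\ldots)$ for the right-associated sum of $\beta_1,\ldots,\beta_m$, the sequent $\alpha_0,\ldots,\alpha_n \models \beta_0,\beta_1,\ldots,\beta_m$ is valid in $P$ iff $a \leq v(\beta_0) + b$ for every assignment $v$ (using Lemma~\ref{the:disjunction} to see that the right-hand side interprets as $v(\beta_0) + b$). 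Meanwhile the sequent $\alpha_0,\ldots,\alpha_n,\neg\beta_0 \models \beta_1,\ldots,\beta_m$ is valid in $P$ iff $a \cdot v(\beta_0)' \leq b$ for every $v$, since appending $\neg\beta_0$ on the left, under left-association, multiplies the previous left-hand interpretation $a$ on the right by $v(\neg\beta_0) = v(\beta_0)'$. The two conditions are equivalent pointwise in $v$ by Theorem~\ref{the:+'}, item~\ref{pre-jump}, applied with $x = a$, $y = v(\beta_0)$, $z = b$; hence they are equivalent as validity statements. One should also check the degenerate case $m = 0$: then the right-hand side of the first sequent is the single proposition $\beta_0$, interpreted as $v(\beta_0)$, and the empty right-hand side of the second sequent is interpreted as $0$ by the convention stated after~(\ref{eq:seq-interpretation}); the claim becomes $a \leq v(\beta_0)$ iff $a \cdot v(\beta_0)' \leq 0$, i.e.\ $a \cdot v(\beta_0)' = 0$, which is exactly Theorem~\ref{the:basic1}, item~\ref{bot'} together with item~\ref{involution} of Theorem~\ref{the:orthocomplemented-poset}.

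For the second assertion I would proceed dually. The sequent $\alpha_0,\ldots,\alpha_{n-1} \models \neg\alpha_n, \beta_0,\ldots,\beta_m$ is valid in $P$ iff $a' \leq v(\alpha_n)' + c$ for every $v$, where $a' \eqdef v((\ldots(\alpha_0\wedge\alpha_1)\wedge\ldots)\wedge\alpha_{n-1})$ is the left-associated product of the first $n$ premises and $c \eqdef v(\beta_0 \vee (\beta_1 \vee (\ldots \vee \beta_m)\ldots))$ is the full right-hand sum. Now note that $a = a' \cdot v(\alpha_n)$ by the definition of left-association, so the original sequent $\alpha_0,\ldots,\alpha_n \models \beta_0,\ldots,\beta_m$ is valid iff $a' \cdot v(\alpha_n) \leq c$ for every $v$. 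Again Theorem~\ref{the:+'}, item~\ref{pre-jump}, now read with $x = a'$, $y = v(\alpha_n)$, $z = c$, gives $a' \cdot v(\alpha_n) \leq c$ iff $a' \leq v(\alpha_n)' + c$, which is precisely what we need. The case $n = 0$ needs separate attention: the left-hand side of the original sequent is the single proposition $\alpha_0$, interpreted as $v(\alpha_0)$, while the empty left-hand side of the transformed sequent is interpreted as $1$; the desired equivalence then reads $v(\alpha_0) \leq c$ iff $1 \leq v(\alpha_0)' + c$, and since $1$ is the top element (Theorem~\ref{the:orthocomplemented-poset}, item~\ref{One-Right}) this says $v(\alpha_0)' + c = 1$, which by Theorem~\ref{the:co+}, item~\ref{pre-jump}\,--\,or rather by combining $v(\alpha_0) \leq v(\alpha_0)'' + c$ with item~\ref{pre-jump}\,--\,is indeed equivalent to $v(\alpha_0) = v(\alpha_0)'' \cdot \ldots$; more cleanly, apply item~\ref{pre-jump} with $x = 1$.

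The main obstacle, as usual in this kind of bookkeeping, is getting the association conventions exactly right: one must keep careful track of the fact that the premises associate to the left (so a new premise on the \emph{right} end of the list, next to the turnstile, simply multiplies the old interpretation on the right) and that the conclusions associate to the right (so a new conclusion on the \emph{left} end, next to the turnstile, adds on the left). Once this is pinned down the argument is entirely mechanical, being a per-assignment invocation of Theorem~\ref{the:+'}, item~\ref{pre-jump}, plus a check of the empty-side conventions. No genuinely new algebra is required; the substance was already carried out in Section~\ref{sec:P-algebras}, which is exactly the point of the remark that opens Section~\ref{sec:syntax}.
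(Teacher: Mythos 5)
Your argument is correct and is essentially the paper's own proof, which simply cites Theorem~\ref{the:+'}, item~\ref{pre-jump} for the first claim and handles the second ``by removing a double negation''; you have merely spelled out the per-assignment bookkeeping and the empty-side conventions. One small correction: in the second part you should instantiate item~\ref{pre-jump} with $y = v(\alpha_n)'$ (not $y = v(\alpha_n)$) and then use $v(\alpha_n)'' = v(\alpha_n)$ from Theorem~\ref{the:orthocomplemented-poset}, item~\ref{involution} --- which is exactly the double-negation step the paper alludes to.
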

\begin{proof}
Theorem~\ref{the:+'}, item~\ref{pre-jump} asserts the first claim.
The second claim follows by removing a double negation.
\end{proof}
The nature of the quantum negation has been discussed in the literature, e.g., 
in~\cite{Connectives:Fest}.
On one hand since orthogonal ($\bot$) is stronger than distinct ($\neq$) one may be 
tempted to conclude that quantum negation ($\neg$) is a strong negation, 
possibly akin to an intuitionistic negation. 
Theorem~\ref{the:jump} above shows that this is not so: 
\emph{quantum negation is classical}.

In view of Lemma~\ref{the:dnegation} and Theorem~\ref{the:jump}, 
we can restrict our attention to sequents with an empty right-hand side, 
sequents of the form \mbox{$ \alpha_{0} , \ldots , \alpha_{n} \models $}. 
The intuitive meaning of such a sequent is:
the sequence of measurements $\alpha_{0}$, $\alpha_{1}$, \ldots $\alpha_{n}$ 
\emph{in this order} will never be observed.
The following will help in the study of the validity of one-sided sequents.
\begin{definition} \label{def:phi-psi}
For any sequence of propositions: 
\mbox{$ \Sigma = $} \mbox{$ \sigma_{0} , \sigma_{1} , \ldots , \sigma_{n} $}
let us denote by \mbox{$ \phi(\Sigma) $} the proposition 
\mbox{$ ( \ldots ( \sigma_{0} \wedge \sigma_{1} ) \wedge \ldots ) \wedge \sigma_{n} $}, i.e., the left-associated conjunction of the propositions in the sequence.
Note that, for any sequences $\Gamma$ and $\Delta$, 
\mbox{$ \phi ( \Gamma , \Delta ) = $} \mbox{$ \phi ( \phi ( \Gamma ) , \Delta ) $}.
Similarly \mbox{$ \psi( \Sigma ) $} will denote the proposition
\mbox{$ \neg \sigma_{0} \vee ( \neg \sigma_{1} \vee ( \ldots \neg \sigma_{n} ) ) $}, 
i.e., the right-associated disjunction of the negations of the propositions 
in the sequence.
Note that, for any sequences $\Gamma$ and $\Delta$, 
\mbox{$ \psi ( \Gamma , \Delta ) = $} \mbox{$ \psi ( \Gamma , \psi ( \Delta ) ) $}.
\end{definition}


Equation~(\ref{eq:seq-interpretation}) and Theorem~\ref{the:jump} imply:
\begin{equation} \label{eq:Gamma-Delta}
{\rm a \ sequent \ } \Gamma , \Delta \models  {\rm \ is \ valid \ iff \ }
\phi( \Gamma ) \leq \psi( \Delta ).
\end{equation}

\section{The deductive system $\cal R$} \label{sec:rules}
We shall describe a system $\cal R$ of eight deduction rules 
and prove it is sound and complete for the logic of P-algebras.
A table of these inference rules can be found in figure~\ref{fig:rules}.

In Section~\ref{sec:soundness} the rules are described and proved to be valid:
a classical Cut rule, an Exchange rule limited to a sequence of three propositions, 
two limited Weakening rules, an introduction rule for the constant {\bf 0}, an 
introduction rule for negation and three $\wedge$ introduction-elimination rules.
There are many equivalent systems and $\cal R$ may not be the system with 
optimal proof-theoretic properties.
Section~\ref{sec:derived} proves the validity of a number of derived rules.
Section~\ref{sec:implication} provides an in-depth study of the properties of the
deductive system.

\subsection{A sound deductive system} \label{sec:soundness}
In deduction rules we use the symbol $\vdash$ to separate the left side 
from the right side of a sequent and not $\models$ as above.
A deduction rule consists of a finite set of sequents, \emph{the assumptions} 
and a sequent, \emph{the conclusion} separated by a horizontal line, 
called the inference line.
A double horizontal line signals a bi-directional rule: it can be used in both directions.
For example, consider the following Cut rule. 

\[ \begin{array} {lc} \\
{\bf Cut} &
\begin{array}{c}
\Gamma , \alpha , \Delta \vdash \ \ \ \ \Gamma , \neg \alpha , \Delta \vdash \\
\hline
\Gamma , \Delta \vdash
\end{array} 
\end{array} \]
Such a rule is meant to be part of a set of deduction rules and its meaning is: 
if one has already established the two sequents above the inference line, then, one is
entitled to establish the sequent below the inference line.
We are interested in two properties of such rules and systems of rules.
\begin{definition} \label{def:sound-complete}
An deduction rule is said to be \emph{sound} iff, for any P-algebra and 
any assignment $v$, for which all the assumptions are valid, 
the conclusions are also valid (in the specified P-algebra, 
with the specified assignment $v$).
A set of deduction rules is \emph{complete} iff any sequent that is \emph{valid} 
(in all P-algebras, for all assignments) can be derived using only the rules in the set.
\end{definition}

Let's check that {\bf Cut} is indeed sound.
In view of Equation~(\ref{eq:Gamma-Delta}), to show the soundness of {\bf Cut},
it is enough to show that 
\mbox{$ v( \phi( \Gamma ) ) \cdot v (\alpha ) \leq v( \psi( \Delta ) ) $} and 
\mbox{$ v( \phi( \Gamma ) ) \cdot v( \alpha)' \leq $} \mbox{$ v( \psi( \Delta ) ) $} 
imply \mbox{$ v( \phi( \Gamma ) ) \leq v( \psi( \Delta ) ) $}.
This is guaranteed by property {\bf O} of Definition~\ref{def:P-algebra}.

Let us now consider structural rules, i.e., rules that do not involve the connectives.
There is no valid general {\bf Exchange} rule: one cannot modify the order of the 
propositions in the left-hand side of a sequent, but there is a very limited exchange 
rule: if a sequent has only three propositions, the order of these three propositions 
may be reversed.
This is the counterpart to {\bf P-associativity}, item~\ref{xyz0} and its soundness
follows from it.
\[ \begin{array} {lc} \\
{\bf Circ} &
\begin{array}{c}
\alpha , \beta , \gamma \vdash \\
\hline
\gamma , \beta , \alpha \vdash
\end{array} 
\end{array} \]

Concerning Weakening the situation is more complex: one cannot add a 
proposition anywhere in a sequent. 
A first Weakening rule allows the introduction of a proposition at the extremities 
of a sequent, on the left and on the right. 
\[ \begin{array} {lc} \\
{\bf EWeakening} &
\begin{array}{c}
\Sigma \vdash \\
\hline
\Gamma , \Sigma , \Delta \vdash
\end{array} 
\end{array} \]
Let us show that {\bf EWeakening} is valid.
By assumption \mbox{$ v ( \phi ( \Sigma ) ) = 0 $}.
By Theorem~\ref{the:basic1}, item~\ref{x0}, 
\mbox{$ v ( \phi ( \Gamma ) \cdot v ( \phi (  \Sigma ) )  = 0 $}.
By a repeated use of {\bf P-associativity}, item~\ref{xyz0}, one can show, by induction
on the size of $\Sigma$, that 
\mbox{$ v ( \phi ( \Gamma , \Sigma ) ) = $}
\mbox{$ v ( \phi ( \Gamma ) \wedge \phi ( \Sigma ) ) = $}
\mbox{$ v ( \phi ( \Gamma ) ) \cdot v ( \phi (  \Sigma ) )  = 0 $}.
Now, let \mbox{$ \Delta = \delta_{0} , \ldots , \delta_{n - 1} $}.
We have, by {\bf Z},
\[
v ( \phi ( \Gamma , \Sigma , \Delta ) ) = 
v ( \ldots ( \phi ( \Gamma , \Sigma ) \wedge \delta_{0} ) \ldots \delta_{n-1} ) = 0.
\]
 
A second Weakening rule allows the introduction of a proposition 
in the midst of a sequent: but only if the proposition is guaranteed to hold at this 
position, i.e., if its negation cannot hold at this point.
It allows some sort of \emph{Stuttering}.
\[ \begin{array} {lc} \\
{\bf MWeakening} &
\begin{array}{c}
\Gamma , \Delta \vdash \ \ \ \ \Gamma , \neg \alpha \vdash \\
\hline
\Gamma , \alpha , \Delta \vdash 
\end{array} 
\end{array} \]
The {\bf MWeakening} rule is sound since, 
if \mbox{$ \Gamma , \neg \alpha \vdash $},
\mbox{$ v( \phi( \Gamma ) ) \leq v( \alpha ) $} and 
\[
v( \phi( \Gamma , \alpha ) ) = v( \phi( \Gamma ) ) \cdot v( \alpha ) = 
v( \phi( \Gamma ) ).
\]
Therefore 
\[
v ( \phi ( \Gamma , \alpha , \Delta ) ) = 
v ( \phi ( \phi ( \Gamma , \alpha ) , \Delta ) ) =
v ( \phi ( \phi ( \Gamma ) , \Delta ) ) = 
v ( \phi ( \Gamma , \Delta ) ).
\]

An introduction rule for the individual constant {\bf 0}.
It is an axiom, i.e., a deduction rule with no assumptions.
\[ \begin{array}{lc} \\
{\bf 0Axiom} &
\begin{array}{c}
\\
\hline
{\bf 0} \vdash
\end{array}
\end{array} \]
Soundness follows from \mbox{$ v( {\bf 0} ) = 0 $} and {\bf Z}.

An introduction rule for negation.
It is an axiom.
\[ \begin{array} {lc} \\
{\bf NAxiom} &
\begin{array}{c}
\\
\hline
\alpha , \neg \alpha \vdash
\end{array} 
\end{array} \]
It is sound by {\bf Comp}.
We could have restricted the rule to atomic propositions and derived the full rule
in Section~\ref{sec:derived} and there may be advantages in studying such a 
seemingly weaker system in future work.

We have two introduction-elimination rules for $\wedge$.
They are bi-directional rules, denoted by a double line.
One may deduce the sequents below the double line from the ones above the double
line, but one may also deduce the ones above the line from the ones below the line. 

The first one introduces or eliminates a conjunction in the leftmost part of a sequent. 
The rule that introduces or eliminates a conjunction in the rightmost part of a sequent, 
i.e., close to the turnstile can be derived from it, see {\bf LR-$\wedge$} in 
Section~\ref{sec:wedge-intro}.
\[\begin{array}{lc} \\
{\bf LL-\wedge} &
\begin{array}{c}
\alpha , \beta , \Delta \vdash \\
\hline
\hline
\alpha \wedge \beta , \Delta \vdash
\end{array} 
\end{array} \]
The soundness of both directions in {\bf LL$\wedge$} follows from 
Equation~(\ref{eq:seq-interpretation}) and the fact that 
\mbox{$ \phi( \alpha , \beta , \Delta ) = $} 
\mbox{$ \phi( \alpha \wedge \beta , \Delta ) $}.

The second $\wedge$ introduction-elimination rule allows the conjunction (on the left) 
of a proposition that is guaranteed to hold at this point.
\[ \begin{array} {lc} \\
{\bf ML-\wedge} &
\begin{array}{c}
\Gamma , \beta , \Delta \vdash \ \ \ \ \Gamma , \neg \alpha \vdash \\
\hline
\hline
\Gamma , \alpha \wedge \beta , \Delta \vdash \ \ \ \ \Gamma , \neg \alpha \vdash
\end{array} 
\end{array} \]
The soundness of {\bf ML-$\wedge$} follows from the fact that, if 
\mbox{$ v( \phi ( \Gamma ) ) \leq v ( \alpha ) $}, then 
\[
v ( \phi ( \Gamma ) ) \cdot ( v ( \alpha ) \cdot v ( \beta ) ) = 
( v ( \phi ( \Gamma ) ) \cdot v ( \alpha ) ) \cdot v ( \beta ) = 
v ( \phi ( \Gamma ) ) \cdot \ v ( \beta ) 
\] 
by {\bf P-associativity}, item~\ref{large-small-left}.

\begin{figure} 
\[ \begin{array} {rl} 
{\bf Cut} &
\begin{array}{c}
\Gamma , \alpha , \Delta \vdash \ \ \ \ 
\Gamma , \neg \alpha , \Delta \vdash \\
\hline
\Gamma , \Delta \vdash
\end{array} 
\end{array} \]
\[ \begin{array}{rl}
{\bf Circ} & 
\begin{array}{c}
\alpha , \beta , \gamma \vdash \\
\hline
\gamma , \beta , \alpha \vdash
\end{array} 
\end{array} \]
\[ \begin{array} {rlrl} 
{\bf EWeakening} & 
\begin{array}{c}
\Sigma \vdash \\
\hline
\Gamma , \Sigma , \Delta \vdash 
\end{array} 
& {\bf MWeakening} &
\begin{array}{c}
\Gamma , \Delta \vdash \ \ \ \  \Gamma , \neg \alpha \vdash \\
\hline
\Gamma , \alpha , \Delta \vdash 
\end{array}
\end{array} \]
\[ \begin{array}{rlrl}
{\bf 0Axiom} &
\begin{array}{c}
\\
\hline
{\bf 0} \vdash 
\end{array}
& {\bf NAxiom} &
\begin{array}{c}
\\
\hline
\alpha , \neg \alpha \vdash
\end{array} 
\end{array} \]
\[ \begin{array}{rlrl} 
{\bf LL-\wedge} &
\begin{array}{c}
\alpha , \beta , \Delta \vdash \\
\hline
\hline
\alpha \wedge \beta , \Delta \vdash
\end{array} 
& {\bf ML-\wedge} 
\begin{array}{c}
\Gamma , \beta , \Delta \vdash \ \ \Gamma , \neg \alpha \vdash \\
\hline 
\hline
\Gamma , \alpha \wedge \beta , \Delta \vdash \ \ \Gamma , \neg \alpha \vdash
\end{array}
\end{array} \]
\caption{Deductive system {\cal R}} \label{fig:rules}
\end{figure}

One easily sees that if the standard unlimited Exchange rule is added to our 
system, one obtains classical propositional logic where $\neg$ is negation, 
$\wedge$ is conjunction and $\vee$ is disjunction. 
We can now state a soundness theorem: its proof has been provided above.
\begin{theorem} \label{the:soundness}
Each of the eight deductive rules of the system $\cal R$ described above is sound.
\end{theorem}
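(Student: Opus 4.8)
The plan is to verify soundness of the eight rules one at a time, in each case stripping off the syntactic wrapping and reducing to an algebraic fact already established in Section~\ref{sec:P-algebras}. The one standing reduction I would use is this: since the right-hand side of every rule in Figure~\ref{fig:rules} is empty, a sequent $\Sigma \vdash$ is valid in a P-algebra under an assignment $v$ exactly when $v(\phi(\Sigma)) = 0$, which is immediate from Equation~(\ref{eq:seq-interpretation}), the convention that the empty right-hand side is interpreted as $0$, and the fact that $0$ is the least element ({\bf Z} together with anti-symmetry of $\leq$). For \textbf{Cut} alone I would instead use the two-sided reading of Equation~(\ref{eq:Gamma-Delta}), splitting the left-hand side so that the two hypotheses and the conclusion become $v(\phi(\Gamma)) \cdot v(\alpha) \leq v(\psi(\Delta))$, $v(\phi(\Gamma)) \cdot v(\alpha)' \leq v(\psi(\Delta))$ and $v(\phi(\Gamma)) \leq v(\psi(\Delta))$; the implication between these is precisely condition {\bf O} of Definition~\ref{def:P-algebra}, applied with $x = v(\phi(\Gamma))$, $y = v(\alpha)$, $z = v(\psi(\Delta))$.

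For the structural and axiom rules I would then argue briefly. \textbf{Circ} asks that $(v(\alpha) \cdot v(\beta)) \cdot v(\gamma) = 0$ imply $(v(\gamma) \cdot v(\beta)) \cdot v(\alpha) = 0$, which is exactly {\bf P-associativity}, item~\ref{xyz0}. \textbf{0Axiom} holds because $v(\phi(\mathbf{0})) = v(\mathbf{0}) = 0$ by Definition~\ref{def:interpretation}, and \textbf{NAxiom} because $v(\phi(\alpha, \neg\alpha)) = v(\alpha) \cdot v(\alpha)' = 0$ by {\bf Comp}. \textbf{LL-$\wedge$} is immediate in both directions: by the $\phi$-bookkeeping recorded in Definition~\ref{def:phi-psi}, $\phi(\alpha, \beta, \Delta)$ and $\phi(\alpha \wedge \beta, \Delta)$ are literally the same proposition, so the two sequents have the same interpretation. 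For \textbf{EWeakening}, starting from $v(\phi(\Sigma)) = 0$ I would run a short induction on the length of $\Sigma$, using {\bf P-associativity}, item~\ref{xyz0}, and Theorem~\ref{the:basic1}, item~\ref{x0}, to obtain $v(\phi(\Gamma, \Sigma)) = v(\phi(\Gamma)) \cdot v(\phi(\Sigma)) = 0$, and then append $\Delta$ one proposition at a time, each step collapsing to $0$ by {\bf Z}.

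The two rules carrying a side hypothesis, \textbf{MWeakening} and \textbf{ML-$\wedge$}, I would handle together: the hypothesis $\Gamma, \neg\alpha \vdash$ says $v(\phi(\Gamma)) \cdot v(\alpha)' = 0$, hence $v(\phi(\Gamma)) \leq v(\alpha)$ by Theorem~\ref{the:basic1}, item~\ref{bot'}, and item~\ref{involution} of Theorem~\ref{the:orthocomplemented-poset}. Given that, Theorem~\ref{the:basic1}, item~\ref{small-large}, gives $v(\phi(\Gamma, \alpha)) = v(\phi(\Gamma)) \cdot v(\alpha) = v(\phi(\Gamma))$, so for \textbf{MWeakening} one gets $v(\phi(\Gamma, \alpha, \Delta)) = v(\phi(\Gamma, \Delta))$ and the conclusion is valid iff the first hypothesis is; while for \textbf{ML-$\wedge$}, {\bf P-associativity}, item~\ref{large-small-left}, turns $v(\phi(\Gamma)) \cdot v(\alpha \wedge \beta)$ into $(v(\phi(\Gamma)) \cdot v(\alpha)) \cdot v(\beta) = v(\phi(\Gamma)) \cdot v(\beta)$, so $v(\phi(\Gamma, \alpha \wedge \beta, \Delta)) = v(\phi(\Gamma, \beta, \Delta))$, and since the side sequent is carried unchanged, both directions of the bidirectional rule follow.

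I do not expect a real obstacle here: the deep work has all been done in Section~\ref{sec:P-algebras}, and each rule of $\cal R$ was tailored to mirror exactly one of those results. The only points needing care are the associativity bookkeeping for $\phi$ and the right-associated $\psi$ (so that commas genuinely associate as claimed on each side of the turnstile), the appeal to Theorem~\ref{the:jump} that lets \textbf{Cut} be read two-sidedly so that {\bf O} applies, and getting the base and inductive steps of the \textbf{EWeakening} induction right.
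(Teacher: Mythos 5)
Your proposal is correct and follows essentially the same route as the paper, which verifies each rule immediately after stating it: {\bf O} for {\bf Cut} via Equation~(\ref{eq:Gamma-Delta}), {\bf P-associativity} item~\ref{xyz0} for {\bf Circ}, the induction with item~\ref{x0} and {\bf Z} for {\bf EWeakening}, the reduction of the side hypothesis to \mbox{$ v(\phi(\Gamma)) \leq v(\alpha) $} for {\bf MWeakening} and {\bf ML-$\wedge$} (the latter via item~\ref{large-small-left}), {\bf Z} and {\bf Comp} for the two axioms, and the identity \mbox{$ \phi(\alpha,\beta,\Delta) = \phi(\alpha\wedge\beta,\Delta) $} for {\bf LL-$\wedge$}. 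The only difference is presentational: you consolidate the eight verifications and spell out a few steps the paper leaves implicit.
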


\subsection{Derived rules} \label{sec:derived}
To prepare the completeness result of Theorem~\ref{the:completeness} we need
to put in evidence the power of the deductive system presented above and study 
the logic that the system embodies.
Most of the work will be done in Section~\ref{sec:implication}, 
but a number of basic results will be proved first.
By Theorem~\ref{the:soundness}, the rules presented below are sound, but 
our purpose is to show more: they can be derived in the system $\cal R$.
Figure~\ref{fig:derived} presents a table of those derived rules.

\subsubsection{Repetition and contraction} \label{sec:repetition}
\begin{itemize}
\item
A repetition rule:
\[ \begin{array} {rl} 
{\bf Repetition} &
\begin{array}{c}
\Gamma , \alpha , \Delta \vdash \\
\hline
\Gamma , \alpha , \alpha , \Delta \vdash
\end{array} 
\end{array} \]
Derivation:
\[
\begin{array}{rlrl}
{\bf NAxiom} & \alpha , \neg \alpha \vdash & & \\
\cline{2-2}
{\bf EWeakening} & \Gamma , \alpha , \neg \alpha \vdash 
& {\bf Assumption} & \Gamma , \alpha , \Delta \vdash \\
\cline{2-4}
& {\bf MWeakening} & \Gamma , \alpha , \alpha , \Delta \vdash
\end{array}
\]
\item
A contraction rule:
\[ \begin{array} {rl} \\
{\bf Contraction} &
\begin{array}{c}
\Gamma , \alpha , \alpha , \Delta \vdash \\
\hline
\Gamma , \alpha , \Delta \vdash
\end{array} 
\end{array} \]
Derivation:
\[
\begin{array}{rlrl}
{\bf NAxiom} & \alpha , \neg \alpha \vdash \\
\cline{2-2}
{\bf EWeakening} & \Gamma , \alpha , \neg \alpha , \Delta \vdash  
& {\bf Assumption} & \Gamma , \alpha , \alpha , \Delta \vdash \\
\cline{2-4}
& {\bf Cut} & \Gamma , \alpha , \Delta \vdash
\end{array}
\]
\end{itemize}

\subsubsection{An exchange rule} \label{sec:exchange}
In a sequent of two propositions, they can be exchanged.
\[ \begin{array}{rl}
{\bf Exchange} & 
\begin{array}{c}
\alpha , \beta \vdash \\
\hline
\beta , \alpha \vdash
\end{array}
\end{array} \]
Derivation:
\[ \begin{array}{rl}
{\bf Assumption} & \alpha , \beta \vdash \\
\cline{2-2}
{\bf Repetition} & \alpha , \beta , \beta \vdash \\
\cline{2-2}
{\bf Circ} & \beta , \beta , \alpha \vdash \\
\cline{2-2}
{\bf Contraction} & \beta , \alpha \vdash
\end{array} \]

\subsubsection{Double negations} \label{sec:double-neg}
Double negations can be eliminated and also introduced.
\[ \begin{array}{rl}
{\bf D1} &
\begin{array}{c}
\Gamma , \neg \neg \alpha , \Delta \vdash \\
\hline
\Gamma , \alpha , \Delta \vdash  
\end{array}
\end{array} \]
Derivation:
\[\footnotesize \begin{array}{rlrlrl}
& & {\bf NAxiom} & \neg \alpha , \neg \neg \alpha \vdash \\
\cline{4-4}
& & {\bf Exchange} & \neg \neg \alpha , \neg \alpha \vdash 
& {\bf NAxiom} & \alpha , \neg \alpha \vdash \\
\cline{4-4} \cline{6-6}
{\bf Assumption} & \Gamma , \neg \neg \alpha , \Delta \vdash 
& {\bf EWeakening} & \Gamma , \neg \neg \alpha , \neg \alpha \vdash
& {\bf Exchange} & \neg \alpha , \alpha \vdash \\
\cline{2-4} \cline{6-6}
& {\bf MWeakening} & \Gamma , \neg \neg \alpha , \alpha , \Delta \vdash
& & {\bf EWeakening} & \Gamma , \neg \alpha , \alpha , \Delta \vdash \\
\cline{3-6}
& & & {\bf Cut} & \Gamma , \alpha , \Delta \vdash
\end{array} \]

\[ \begin{array}{rl}
{\bf D2} & 
\begin{array}{c}
\Gamma , \alpha , \Delta \vdash \\
\hline
\Gamma , \neg \neg \alpha , \Delta \vdash  
\end{array}
\end{array} \]
Derivation:
\[ \begin{array}{rlrl}
{\bf NAxiom} &  \neg \neg \alpha , \neg \neg \neg \alpha \vdash \\
\cline{2-2}
{\bf EWeakening} & \Gamma , \neg \neg \alpha , \neg \neg \neg \alpha \vdash 
& {\bf Assumption} & \Gamma , \alpha , \Delta \vdash \\
\cline{2-2} \cline{4-4}
{\bf D1} & \Gamma , \alpha , \neg \neg \neg \alpha \vdash
& {\bf NAxiom} & \neg \alpha , \neg \neg \alpha \vdash \\
\cline{2-2} \cline{4-4}
{\bf MWeakening} & \Gamma , \alpha , \neg \neg \alpha , \Delta \vdash 
& {\bf EWeakening} & \Gamma , \neg \alpha , \neg \neg \alpha , \Delta \vdash  \\
\cline{2-4}
& {\bf Cut} & \Gamma , \neg \neg \alpha , \Delta \vdash
\end{array} \]

\subsubsection{Equivalence rule} \label{sec:equivalence-rule}
Our last derived rule describes a condition that implies that a proposition can replace 
another one in any context.

\[ \begin{array} {lc} \\
{\bf Equiv} &
\begin{array}{c}
\Gamma , \alpha , \Delta \vdash \ \ \alpha , \neg \beta \vdash \ \ \beta , \neg \alpha \vdash \\
\hline
\Gamma , \beta , \Delta \vdash
\end{array} 
\end{array} \]
Derivation:
\[\footnotesize \begin{array}{lclclc}
{\bf Assumption} & \alpha , \neg \beta \vdash &
& & {\bf Assumption} &  \beta , \neg \alpha \vdash \\
\cline{2-2} \cline{6-6}
{\bf EWeakening} & \Gamma , \alpha , \neg \beta \vdash 
& {\bf Assumption} & \Gamma , \alpha , \Delta \vdash 
& {\bf Exchange} &  \neg \alpha , \beta \vdash \\
\cline{2-4} \cline{6-6}
& {\bf MWeakening} & \Gamma , \alpha , \beta , \Delta \vdash & 
& {\bf EWeakening} & \Gamma , \neg \alpha , \beta , \Delta \vdash \\
\cline{3-6}
& & {\bf Cut} & \Gamma , \beta , \Delta \vdash
\end{array} \]

\subsubsection{A $\wedge$ introduction and elimination rule} \label{sec:wedge-intro}
\[ \begin{array}{rl}
{ \bf LR-\wedge}  &
\begin{array}{c}
\Gamma , \alpha , \beta \vdash \\
\hline
\hline
\Gamma , \beta \wedge \alpha \vdash 
\end{array} 
\end{array} \]
Note the change of order between $\alpha$ and $\beta$ in {\bf LR-$\wedge$}.
For the derivation from top to bottom:
\[ \begin{array}{rl}
{\bf Assumption} & \Gamma , \alpha , \beta \vdash \\
\cline{2-2}
{\bf LL-\wedge} & \phi ( \Gamma ) , \alpha , \beta \vdash \\
\cline{2-2}
{\bf Circ} & \beta , \alpha , \phi ( \Gamma ) \vdash \\
\cline{2-2}
{\bf LL-\wedge} & \beta \wedge \alpha , \phi ( \Gamma ) \vdash \\
\cline{2-2}
{\bf Exchange} & \phi ( \Gamma ) , \beta \wedge \alpha \vdash \\
\cline{2-2}
{\bf LL-\wedge} & \Gamma , \beta \wedge \alpha \vdash
\end{array} \]

For the bottom to top direction.
\[ \begin{array}{rl}
{\bf Assumption} & \Gamma , \beta \wedge \alpha \vdash \\
\cline{2-2}
{\bf LL-\wedge} & \phi ( \Gamma ) , \beta \wedge \alpha \vdash \\
\cline{2-2}
{\bf Exchange} & \beta \wedge \alpha, \phi ( \Gamma ) \vdash \\
\cline{2-2}
{\bf LL-\wedge} & \beta , \alpha , \phi ( \Gamma ) \vdash \\
\cline{2-2}
{\bf Circ} & \phi ( \Gamma ) , \alpha , \beta \vdash \\
\cline{2-2}
{\bf \wedge LL-\wedge} & \Gamma , \alpha , \beta \vdash
\end{array} \]

\begin{figure}
\[ \begin{array}{rlrl}
{\bf Repetition} &
\begin{array}{c}
\Gamma , \alpha , \Delta \vdash \\
\hline
\Gamma , \alpha , \alpha , \Delta \vdash
\end{array} 
& {\bf Contraction} &
\begin{array}{c}
\Gamma , \alpha , \alpha , \Delta \vdash \\
\hline
\Gamma , \alpha , \Delta \vdash
\end{array} 
\end{array} \]
\[ \begin{array}{lr}
{\bf D} &
\begin{array}{c}
\Gamma , \neg \neg \alpha , \Delta \vdash \\
\hline
\hline
\Gamma , \alpha , \Delta \vdash  
\end{array} 
\end{array} \]
\[ \begin{array}{lrlr}
{\bf Exchange} & 
\begin{array}{c}
\alpha , \beta \vdash \\
\hline
\beta , \alpha \vdash
\end{array} 
& { \bf LR-\wedge}  &
\begin{array}{c}
\Gamma , \alpha , \beta \vdash \\
\hline
\Gamma , \beta \wedge \alpha \vdash \\
\end{array} 
\end{array} \]
\[ \begin{array}{rl}
{\bf Equiv} &
\begin{array}{c}
\Gamma , \alpha , \Delta \vdash \ \ \alpha , \neg \beta \vdash \ \ \beta , \neg \alpha \vdash \\
\hline
\Gamma , \beta , \Delta \vdash
\end{array}
\end{array} \]
\caption{Derived rules} \label{fig:derived}
\end{figure}

\subsection{Implication and logical equivalence} \label{sec:implication}
We shall begin, now, an in-depth study of the deductive system $\cal R$. 
\begin{definition} \label{def:implication}
Let \mbox{$ \alpha , \beta \in {\cal L} $}. We shall say that $\beta$ \emph{implies} $\alpha$ 
and write \mbox{$ \beta \rightarrow \alpha $}
iff the sequent \mbox{$ \beta , \neg \alpha \vdash $} is derivable 
in the system $\cal R$.
We shall say that $\alpha$ and $\beta$ are \emph{logically equivalent} and write 
\mbox{$ \alpha \simeq \beta $} iff \mbox{$ \alpha \rightarrow \beta $} and 
\mbox{$ \beta \rightarrow \alpha $}.
\end{definition}
Our first task is to characterize the relations $\rightarrow$ and $\simeq$.
Lemma~\ref{the:implication} shows that implication is a preorder.
\begin{lemma} \label{the:implication}
The implication relation $\rightarrow$ is reflexive and transitive.
Therefore the relation $\simeq$ is an equivalence relation.
\end{lemma}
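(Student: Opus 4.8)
The plan is to reduce both assertions to rules already available in $\cal R$, with no new semantic input. For reflexivity, observe that $\alpha \rightarrow \alpha$ unwinds, by Definition~\ref{def:implication}, to the derivability of $\alpha , \neg \alpha \vdash$, and this is literally the axiom {\bf NAxiom}; nothing further is required.

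For transitivity, I would assume $\gamma \rightarrow \beta$ and $\beta \rightarrow \alpha$, so that the sequents $\gamma , \neg \beta \vdash$ and $\beta , \neg \alpha \vdash$ are both derivable, and aim to derive $\gamma , \neg \alpha \vdash$. The first step is to pad these using {\bf EWeakening}: from $\gamma , \neg \beta \vdash$ I get $\gamma , \neg \beta , \neg \alpha \vdash$ (appending $\neg \alpha$ on the right), and from $\beta , \neg \alpha \vdash$ I get $\gamma , \beta , \neg \alpha \vdash$ (prepending $\gamma$ on the left). These two sequents have precisely the shape needed to apply {\bf Cut} with context $\Gamma = \gamma$, $\Delta = \neg \alpha$, and cut formula $\beta$; the conclusion is $\gamma , \neg \alpha \vdash$, i.e., $\gamma \rightarrow \alpha$. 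Consequently $\simeq$, defined as $\rightarrow$ holding in both directions, inherits reflexivity from that of $\rightarrow$, is symmetric immediately from its definition, and inherits transitivity from that of $\rightarrow$; hence it is an equivalence relation.

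I do not anticipate a real obstacle here. The one point to be careful about is that {\bf Cut} requires the cut formula $\beta$ and its negation $\neg \beta$ to occur in identical left/right contexts, and the two applications of {\bf EWeakening} are arranged exactly so as to manufacture the common context $\langle \gamma \mid \neg \alpha \rangle$. An alternative, marginally longer derivation of transitivity would route through {\bf Exchange} and {\bf MWeakening} rather than a direct {\bf Cut}, but the Cut-based argument above is the most economical and is the one I would present.
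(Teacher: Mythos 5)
Your proposal is correct and follows essentially the same route as the paper: reflexivity is read off from {\bf NAxiom}, and transitivity is obtained by two applications of {\bf EWeakening} to manufacture a common context followed by a single {\bf Cut} on the middle formula. The only difference is the naming of the propositions; the derivation is identical to the one in the paper.
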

\begin{proof}
Reflexivity follows from {\bf NAxiom}.
For transitivity, assume \mbox{$ \alpha \rightarrow \beta $} and\break 
\mbox{$ \beta \rightarrow \gamma $}.
The following derivation shows that \mbox{$ \alpha \rightarrow \gamma $}.

\[ \begin{array}{rlrl}
{\bf Assumption} & \alpha , \neg \beta \vdash 
& {\bf Assumption} & \beta , \neg \gamma \vdash \\
\cline{2-2} \cline{4-4} 
{\bf EWeakening} & \alpha , \neg \beta , \neg \gamma \vdash & 
{\bf EWeakening} & \alpha , \beta , \neg \gamma \vdash \\
\cline{2-4}
& {\bf Cut} & \alpha , \neg \gamma \vdash
\end{array} \]
\end{proof}

\begin{lemma} \label{the:equivalence}
For any \mbox{$ \alpha , \beta \in {\cal L} $}, \mbox{$ \alpha \rightarrow \beta $}
iff for any sequence $\Delta$, we have 
\mbox{$ \beta , \Delta \vdash $} implies
\mbox{$ \alpha , \Delta \vdash $}.
Also \mbox{$ \alpha \simeq \beta $} iff for any sequences $\Gamma$ and $\Delta$, one has \mbox{$ \Gamma , \alpha , \Delta \vdash $} iff \mbox{$ \Gamma , \beta , \Delta \vdash $}.
\end{lemma}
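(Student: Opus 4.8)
The plan is to prove the two claims in turn, first characterising $\rightarrow$ by the "left-context substitution" property and then bootstrapping to $\simeq$. For the first claim, the direction from the substitution property to $\alpha \rightarrow \beta$ is trivial: instantiate $\Delta$ with the single proposition $\neg\beta$; then $\beta,\neg\beta \vdash$ holds by \textbf{NAxiom}, so by hypothesis $\alpha,\neg\beta\vdash$ is derivable, which is exactly $\alpha\rightarrow\beta$. For the converse, assume $\alpha\rightarrow\beta$, i.e.\ $\alpha,\neg\beta\vdash$ is derivable, and suppose $\beta,\Delta\vdash$ is derivable for some sequence $\Delta$. I would build a derivation of $\alpha,\Delta\vdash$ essentially mimicking the transitivity derivation in Lemma~\ref{the:implication}, but now with a whole context $\Delta$ in place of a single $\neg\gamma$: apply \textbf{EWeakening} to $\alpha,\neg\beta\vdash$ to get $\alpha,\neg\beta,\Delta\vdash$, apply \textbf{EWeakening} to $\beta,\Delta\vdash$ to get $\alpha,\beta,\Delta\vdash$, and then apply \textbf{Cut} (with $\Gamma = \alpha$, cut proposition $\beta$, context $\Delta$) to obtain $\alpha,\Delta\vdash$.

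For the second claim, one direction is immediate from the first: if for all $\Gamma,\Delta$ we have $\Gamma,\alpha,\Delta\vdash$ iff $\Gamma,\beta,\Delta\vdash$, then taking $\Gamma$ empty and $\Delta = \neg\beta$ (resp.\ $\Delta = \neg\alpha$) together with \textbf{NAxiom} gives both $\alpha\rightarrow\beta$ and $\beta\rightarrow\alpha$, hence $\alpha\simeq\beta$. For the substantial direction, assume $\alpha\simeq\beta$, i.e.\ both $\alpha,\neg\beta\vdash$ and $\beta,\neg\alpha\vdash$ are derivable, and suppose $\Gamma,\alpha,\Delta\vdash$ is derivable; I want $\Gamma,\beta,\Delta\vdash$. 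This is precisely what the derived rule \textbf{Equiv} delivers: its three premises are $\Gamma,\alpha,\Delta\vdash$, $\alpha,\neg\beta\vdash$ and $\beta,\neg\alpha\vdash$, and its conclusion is $\Gamma,\beta,\Delta\vdash$. The reverse implication ($\Gamma,\beta,\Delta\vdash$ implies $\Gamma,\alpha,\Delta\vdash$) follows by the symmetric application of \textbf{Equiv} with the roles of $\alpha$ and $\beta$ swapped, which is legitimate since $\simeq$ is symmetric by definition and Lemma~\ref{the:implication}.

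I expect no real obstacle here: both claims reduce to already-derived rules (\textbf{Cut}, \textbf{EWeakening}, \textbf{NAxiom}, \textbf{Equiv}) applied in a completely routine way, and the only care needed is bookkeeping of the sequence parameters $\Gamma$ and $\Delta$ so that the context-sensitive rules (\textbf{Cut} allows arbitrary $\Gamma,\Delta$; \textbf{EWeakening} inserts at both extremities) are invoked with the correct slicing. The one point worth stating explicitly is that \textbf{Equiv} was established in Section~\ref{sec:derived} with exactly the shape we need, so the "hard part" — pushing an equivalence through an arbitrary two-sided context — has in effect already been done there; the present lemma merely packages it as an iff.
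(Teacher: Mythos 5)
Your proposal is correct and follows essentially the same route as the paper: the forward direction of the first claim via \textbf{EWeakening} on both premises followed by \textbf{Cut}, the converse by instantiating $\Delta=\neg\beta$ with \textbf{NAxiom}, and the second claim by invoking the derived rule \textbf{Equiv} in one direction and the first claim in the other. No gaps.
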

\begin{proof}
Let \mbox{$ \alpha \rightarrow \beta $}.
Consider the following derivation:
\[ \begin{array}{rlrl}
{\bf Assumption} & \alpha , \neg \beta \vdash
& {\bf Assumption} & \beta , \Delta \vdash \\
\cline{2-2} \cline{4-4}
{\bf EWeakening} & \alpha , \neg \beta , \Delta \vdash 
& {\bf EWeakening} & \alpha , \beta , \Delta \vdash \\
\cline{2-4}
& {\bf Cut} & \alpha , \Delta \vdash
\end{array} \]
We have shown the \emph{only if} part of the first claim.
For the \emph{if} part, assume that \mbox{$ \beta , \Delta \vdash $} implies
\mbox{$ \alpha , \Delta \vdash $}.
By {\bf NAxiom} we have \mbox{$ \beta , \neg \beta \vdash $} and therefore
\mbox{$ \alpha , \neg \beta \vdash $}.

For the second claim, the \emph{only if} part is proved by the derived rule 
{\bf Equiv}.
The \emph{if} part follows from what we have just proved about implication.
\end{proof}

Our next goal is to show that $\simeq$ is a congruence relation for the three 
connectives.
When possible, we put in evidence the properties of the relation $\rightarrow$.
\begin{lemma} \label{the:congruence}
For any \mbox{$ \alpha , \beta , \gamma \in {\cal L} $}:
\begin{enumerate}
\item \label{contradiction}
\mbox{$ \alpha \vdash $} iff \mbox{$ \alpha \simeq {\bf 0} $}.
\item \label{negation}
if \mbox{$ \alpha \rightarrow \beta $}, then
\mbox{$ \neg \beta \rightarrow \neg \alpha $} and therefore if 
\mbox{$ \alpha \simeq \beta $}, then \mbox{$ \neg \alpha \simeq \neg \beta $}.
\item \label{cdot-left}
if \mbox{$ \alpha \rightarrow \beta $}, then 
\mbox{$ \alpha \wedge \gamma \rightarrow \beta \wedge \gamma $}
and therefore if \mbox{$ \alpha \simeq \beta $}, then 
\mbox{$ \alpha \wedge \gamma \simeq \beta \wedge \gamma $}.
\item \label{cdot-right}
if \mbox{$ \alpha \simeq \beta $}, then 
\mbox{$ \gamma \wedge \alpha \simeq \gamma \wedge \beta $}.
\end{enumerate}
\end{lemma}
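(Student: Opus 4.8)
The plan is to reduce each of the four clauses to the characterisations of $\rightarrow$ and $\simeq$ given in Lemma~\ref{the:equivalence}, combined with {\bf NAxiom}, {\bf 0Axiom}, {\bf EWeakening}, {\bf Exchange}, the double-negation rule {\bf D} and {\bf LL-$\wedge$}; no genuinely new derivation should be required. For item~\ref{contradiction}: if $\alpha \vdash$ is derivable, {\bf EWeakening} yields $\alpha,\neg\mathbf{0}\vdash$, and {\bf EWeakening} applied to {\bf 0Axiom} yields $\mathbf{0},\neg\alpha\vdash$, so $\alpha\rightarrow\mathbf{0}$ and $\mathbf{0}\rightarrow\alpha$, i.e.\ $\alpha\simeq\mathbf{0}$; conversely, if $\alpha\simeq\mathbf{0}$, instantiating the congruence clause of Lemma~\ref{the:equivalence} with empty side sequences turns the derivable sequent $\mathbf{0}\vdash$ into $\alpha\vdash$.

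For item~\ref{negation}, assume $\alpha\rightarrow\beta$, i.e.\ $\alpha,\neg\beta\vdash$ is derivable. Then {\bf Exchange} gives $\neg\beta,\alpha\vdash$ and {\bf D} (double-negation introduction) gives $\neg\beta,\neg\neg\alpha\vdash$, which is exactly $\neg\beta\rightarrow\neg\alpha$; applying this to both implications yields the stated fact for $\simeq$. For item~\ref{cdot-left}, I would argue through Lemma~\ref{the:equivalence}: to prove $\alpha\wedge\gamma\rightarrow\beta\wedge\gamma$ it is enough to show that $\beta\wedge\gamma,\Delta\vdash$ entails $\alpha\wedge\gamma,\Delta\vdash$ for every $\Delta$; since in both sequents the conjunction stands in leftmost position, {\bf LL-$\wedge$} rewrites them to $\beta,\gamma,\Delta\vdash$ and $\alpha,\gamma,\Delta\vdash$, and the implication between those is precisely what $\alpha\rightarrow\beta$ gives via Lemma~\ref{the:equivalence} applied with side sequence $\gamma,\Delta$.

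Item~\ref{cdot-right} needs a little more care, and this is the one point to watch. One cannot simply peel the conjunction off $\gamma\wedge\alpha$ when it sits in a context $\Gamma,\gamma\wedge\alpha,\Delta$ with $\Gamma$ non-empty, since $\Gamma,\gamma\wedge\alpha,\Delta\vdash$ and $\Gamma,\gamma,\alpha,\Delta\vdash$ are not interderivable in general — non-associativity of ``$\cdot$'' is exactly what forbids it, which is also why this clause is stated only for $\simeq$ and not for $\rightarrow$. So instead I would prove $\gamma\wedge\alpha\simeq\gamma\wedge\beta$ by establishing the two implications $\gamma\wedge\alpha\rightarrow\gamma\wedge\beta$ and $\gamma\wedge\beta\rightarrow\gamma\wedge\alpha$, which are symmetric under $\alpha\simeq\beta$. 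For the first, Lemma~\ref{the:equivalence} reduces the goal to ``$\gamma\wedge\beta,\Delta\vdash$ entails $\gamma\wedge\alpha,\Delta\vdash$''; the conjunctions are again leftmost, so {\bf LL-$\wedge$} rewrites to $\gamma,\beta,\Delta\vdash$ and $\gamma,\alpha,\Delta\vdash$, and the equivalence of these two is exactly the congruence clause of Lemma~\ref{the:equivalence} instantiated with the one-element context $\gamma$ (and side sequence $\Delta$). Routing everything through Lemma~\ref{the:equivalence} so that the relevant conjunction always reaches leftmost position before {\bf LL-$\wedge$} is invoked is what makes the whole lemma go through cleanly.
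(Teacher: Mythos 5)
Your proposal is correct and follows essentially the same route as the paper: items~\ref{contradiction}, \ref{negation} and~\ref{cdot-right} match the paper's arguments almost step for step (the paper handles the converse of item~\ref{contradiction} with an explicit {\bf Cut} against $\alpha,\mathbf{0}\vdash$ rather than by citing Lemma~\ref{the:equivalence}, but that is immaterial), and your observation that {\bf LL-$\wedge$} only applies at the leftmost position, so that general contexts must be handled via the congruence clause of Lemma~\ref{the:equivalence}, is exactly the point the paper relies on in item~\ref{cdot-right}. The only real divergence is item~\ref{cdot-left}, where the paper writes out an explicit derivation using {\bf Circ}, {\bf EWeakening} and {\bf Cut}, while you factor the same idea through Lemma~\ref{the:equivalence} with side sequence $\gamma,\Delta$ --- a valid and somewhat cleaner packaging of the same argument.
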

\begin{proof}{\ }
\begin{enumerate}
\item 
Assume \mbox{$ \alpha \vdash $}.
By {\bf EWeakening},  we have 
\mbox{$ \alpha , \neg {\bf 0} \vdash $}, i.e., \mbox{$ \alpha \rightarrow {\bf 0} $}.
But, by {\bf 0Axiom}, \mbox{$ {\bf 0} \vdash $} and, by {\bf EWeakening}, 
\mbox{$ {\bf 0} , \neg \alpha \vdash $} and \mbox{$ {\bf 0} \rightarrow \alpha $}.
We have shown the \emph{only if} direction.

Now, if \mbox{$ \alpha \rightarrow {\bf 0} $} we have, 
\mbox{$ \alpha , \neg {\bf 0} \vdash $}.
But, by {\bf NAxiom} and {\bf EWeakening} we have 
\mbox{$ \alpha , {\bf 0} \vdash $}. 
By {\bf Cut} we have \mbox{$ \alpha \vdash $}.
 \item 
\[ \begin{array}{lc}
{\bf Assumption} & \alpha , \neg \beta \vdash \\
\cline{2-2}
{\bf D} & \neg \neg \alpha , \neg \beta \vdash \\
\cline{2-2}
{\bf Exchange} & \neg \beta , \neg \neg \alpha \vdash 
\end{array} \]

\item
\[ \begin{array}{lclc}
& & {\bf NAxiom} & \beta \wedge \gamma , \neg ( \beta \wedge \gamma ) \vdash \\
\cline{4-4}
{\bf Assumption} & \alpha , \neg \beta \vdash
& {\bf LL-\wedge} & \beta , \gamma , 
\neg ( \beta \wedge \gamma ) \vdash \\
\cline{2-2} \cline{4-4}
{\bf Exchange} & \neg \beta , \alpha \vdash 
& {\bf Circ} & \neg ( \beta \wedge \gamma ) , \gamma , \beta \vdash \\
\cline{2-2} \cline{4-4}
{\bf EWeakening} & \neg ( \beta \wedge \gamma ) , \gamma , \neg \beta , \alpha \vdash
& {\bf EWeakening} &  \neg ( \beta \wedge \gamma ) , \gamma , \beta , \alpha  \vdash \\
\cline{2-4}
& {\bf Cut} &  \neg ( \beta \wedge \gamma ) , \gamma , \alpha \vdash \\
\cline{3-3}
& {\bf Circ} & \alpha , \gamma , \neg ( \beta \wedge \gamma ) \vdash \\
\cline{3-3}
& {\bf LL-\wedge} & \alpha \wedge \gamma , 
\neg ( \beta \wedge \alpha ) \vdash
\end{array} \]

\item
Assuming \mbox{$ \alpha \simeq \beta $}, the following derivation shows
that \mbox{$ \gamma \wedge \beta \rightarrow \gamma \wedge \alpha $}.
The converse implication is proved similarly.
\[ \begin{array}{cc}
{\bf NAxiom} 
& \gamma \wedge \alpha , \neg ( \gamma \wedge \alpha ) \vdash \\
\cline{2-2}
{\bf LL-\wedge} 
& \gamma , \alpha , \neg ( \gamma \wedge \alpha ) \vdash \\
\cline{2-2}
Lemma~\ref{the:equivalence} & 
\gamma , \beta , \neg ( \gamma \wedge \alpha ) \vdash \\
\cline{2-2}
{\bf LL-\wedge} & \gamma \wedge \beta , \neg ( \gamma \wedge \alpha ) \vdash
\end{array} \]
\end{enumerate}
\end{proof}

\begin{lemma} \label{the:dot-right}
For any \mbox{$ \alpha , \beta \in {\cal L} $}, 
\mbox{$ \alpha \wedge \beta \rightarrow \beta $}.
\end{lemma}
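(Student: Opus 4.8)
The plan is to unwind the definition of the implication relation and exhibit an explicit derivation in $\cal R$. By Definition~\ref{def:implication}, proving $\alpha \wedge \beta \rightarrow \beta$ amounts to deriving the one-sided sequent \mbox{$ \alpha \wedge \beta , \neg \beta \vdash $}, so everything reduces to constructing a short proof tree for this sequent using the eight rules of $\cal R$ (and, if convenient, the derived rules of Section~\ref{sec:derived}).

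First I would start from {\bf NAxiom}, which gives \mbox{$ \beta , \neg \beta \vdash $} for free. Next I would apply {\bf EWeakening} with $\Sigma = \beta , \neg \beta$, $\Gamma = \alpha$ and $\Delta$ empty, to prepend $\alpha$ at the left extremity and obtain \mbox{$ \alpha , \beta , \neg \beta \vdash $}. Finally I would use the bidirectional rule {\bf LL-$\wedge$}, read from top to bottom with $\Delta = \neg \beta$, to merge the first two propositions into a conjunction, producing \mbox{$ \alpha \wedge \beta , \neg \beta \vdash $}, which is exactly what is required. Schematically:
\[ \begin{array}{rl}
{\bf NAxiom} & \beta , \neg \beta \vdash \\
\cline{2-2}
{\bf EWeakening} & \alpha , \beta , \neg \beta \vdash \\
\cline{2-2}
{\bf LL-\wedge} & \alpha \wedge \beta , \neg \beta \vdash
\end{array} \]

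There is essentially no obstacle here: the statement is one of the basic ``warm-up'' lemmas needed to populate the congruence machinery of Section~\ref{sec:implication}, and the only point worth double-checking is that the side on which {\bf EWeakening} inserts $\alpha$ (the leftmost position) is the one that lines up with the left argument of the conjunction expected by {\bf LL-$\wedge$}; since {\bf LL-$\wedge$} introduces $\alpha \wedge \beta$ precisely from the pattern $\alpha , \beta , \Delta$, this matches. Note that, by contrast, one should not expect $\alpha \wedge \beta \rightarrow \alpha$ to hold in general, reflecting the non-commutativity of ``$\cdot$''; the asymmetry in the present statement is exactly the syntactic shadow of {\bf Dot-monotonicity}, item~\ref{right-monotonicity}.
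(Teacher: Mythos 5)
Your derivation is correct and is in fact identical to the paper's own proof: {\bf NAxiom}, then {\bf EWeakening} to prepend $\alpha$, then {\bf LL-$\wedge$} to form the conjunction. Nothing to add.
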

\begin{proof}
\[ \begin{array}{lr}
{\bf NAxiom} & \beta , \neg \beta \vdash \\
\cline{2-2}
{\bf EWeakening} & \alpha , \beta , \neg \beta \vdash \\
\cline{2-2}
{\bf LL-\wedge} & \alpha \wedge \beta , \neg \beta \vdash
\end{array} \]
\end{proof}

\begin{lemma} \label{the:3bc}
For any \mbox{$ \alpha , \beta , \gamma \in {\cal L} $} such that 
\mbox{$ \alpha \rightarrow \beta $}, 
\begin{enumerate}
\item \label{absorption}
\mbox{$ \alpha \wedge \beta \simeq \alpha \simeq \beta \wedge \alpha $}.
\item \label{g-absorption}
\mbox{$ \gamma \wedge ( \alpha \wedge \beta ) \simeq \gamma \wedge \alpha 
\simeq \gamma \wedge ( \beta \wedge \alpha ) $}.
\item \label{rule-assoc-b}
\mbox{$ ( \alpha \wedge \beta ) \wedge \gamma \simeq $}
\mbox{$ \alpha \wedge ( \beta \wedge \gamma ) $}.
\item \label{rule-assoc-c}
\mbox{$ ( \gamma \wedge \beta ) \wedge \alpha \simeq \gamma \wedge ( \beta \wedge \alpha ) $}.
\end{enumerate}
\end{lemma}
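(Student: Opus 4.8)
The plan is to argue entirely inside the deductive system $\cal R$, using the derived rules of Section~\ref{sec:derived} together with Lemmas~\ref{the:implication}, \ref{the:congruence} and~\ref{the:dot-right}, and to prove each item by exhibiting the two relevant $\rightarrow$-derivations and then invoking transitivity of $\simeq$ (Lemma~\ref{the:implication}). The hypothesis ``$\alpha\rightarrow\beta$'' is used only through the derivable sequent $\alpha,\neg\beta\vdash$: after {\bf EWeakening} it can be planted in any larger sequent whose left part begins $\ldots,\alpha$, whereupon {\bf MWeakening} lets us \emph{insert} a harmless $\beta$ immediately after $\alpha$, and {\bf Cut} lets us \emph{delete} a $\beta$ sitting immediately after $\alpha$.

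For item~\ref{absorption} I would prove $\alpha\wedge\beta\simeq\alpha$ by two short derivations. Starting from {\bf NAxiom} $\alpha\wedge\beta,\neg(\alpha\wedge\beta)\vdash$, splitting with {\bf LL-$\wedge$} and cutting $\beta$ against the {\bf EWeakening}ed hypothesis gives $\alpha\rightarrow\alpha\wedge\beta$; starting from {\bf NAxiom} $\alpha,\neg\alpha\vdash$, inserting $\beta$ by {\bf MWeakening} and reassembling with {\bf LL-$\wedge$} gives $\alpha\wedge\beta\rightarrow\alpha$. For $\beta\wedge\alpha\simeq\alpha$ one direction is Lemma~\ref{the:dot-right}; for the other, splitting {\bf NAxiom} $\beta\wedge\alpha,\neg(\beta\wedge\alpha)\vdash$, applying {\bf Circ} to bring $\alpha$ in front of $\beta$, cutting $\beta$ against the {\bf EWeakening}ed hypothesis, and finishing with {\bf Exchange} yields $\alpha\rightarrow\beta\wedge\alpha$. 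Item~\ref{g-absorption} is then immediate: by~\ref{absorption} we have $\alpha\wedge\beta\simeq\alpha\simeq\beta\wedge\alpha$, and Lemma~\ref{the:congruence}, item~\ref{cdot-right} (right congruence of $\wedge$) transports these equivalences past a leading $\gamma$.

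For item~\ref{rule-assoc-b} I would establish $(\alpha\wedge\beta)\wedge\gamma\simeq\alpha\wedge\gamma$, which follows from $\alpha\wedge\beta\simeq\alpha$ by Lemma~\ref{the:congruence}, item~\ref{cdot-left} (left congruence), and separately $\alpha\wedge(\beta\wedge\gamma)\simeq\alpha\wedge\gamma$; the latter is exactly the middle-conjunction manipulation performed by {\bf ML-$\wedge$}, whose side condition is our hypothesis $\alpha,\neg\beta\vdash$, so {\bf NAxiom}, {\bf LL-$\wedge$} and {\bf ML-$\wedge$} suffice, and transitivity of $\simeq$ gives~\ref{rule-assoc-b}. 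Item~\ref{rule-assoc-c} is the delicate one. Here $\gamma\wedge(\beta\wedge\alpha)\simeq\gamma\wedge\alpha$ is again immediate from $\beta\wedge\alpha\simeq\alpha$ and right congruence, so it remains to prove $(\gamma\wedge\beta)\wedge\alpha\simeq\gamma\wedge\alpha$; and this is the main obstacle, because now $\beta$ stands \emph{before} $\alpha$, so neither {\bf MWeakening} nor {\bf ML-$\wedge$} can touch it directly --- their side condition would be the unavailable $\gamma\rightarrow\beta$ rather than $\alpha\rightarrow\beta$. The remedy I would use is {\bf Circ}: reducing the target $(\gamma\wedge\beta)\wedge\alpha,\neg(\gamma\wedge\alpha)\vdash$ to a three-proposition sequent and reversing it puts $\alpha$ ahead of $\beta$, after which {\bf MWeakening} against the {\bf EWeakening}ed hypothesis inserts the $\beta$, and {\bf Circ}, {\bf LR-$\wedge$}, {\bf LL-$\wedge$} reassemble the target; the converse direction $\gamma\wedge\alpha\rightarrow(\gamma\wedge\beta)\wedge\alpha$ uses {\bf Cut} against the hypothesis after an {\bf Exchange}. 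Transitivity of $\simeq$ then closes item~\ref{rule-assoc-c}.
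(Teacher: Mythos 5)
Your proposal is correct and follows essentially the same route as the paper: item~\ref{absorption} by direct sequent derivations combining {\bf NAxiom}, {\bf LL-$\wedge$}, {\bf MWeakening}/{\bf Cut} against the weakened hypothesis $\alpha,\neg\beta\vdash$; item~\ref{g-absorption} by right congruence; item~\ref{rule-assoc-b} via {\bf ML-$\wedge$} and Lemma~\ref{the:equivalence}; and item~\ref{rule-assoc-c} by proving $(\gamma\wedge\beta)\wedge\alpha\simeq\gamma\wedge\alpha$ with {\bf Circ} to bring $\alpha$ in front of $\beta$ before inserting or cutting $\beta$, then appealing to item~\ref{g-absorption}. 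The only cosmetic difference is that in item~\ref{rule-assoc-b} you glue two equivalences through the intermediate $\alpha\wedge\gamma$ where the paper writes a single derivation passing through the sequent $\alpha,\gamma,\Delta\vdash$ --- the same idea, differently packaged.
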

\begin{proof}
\begin{enumerate}
\item 
By Lemma~\ref{the:dot-right}, 
\mbox{$ \beta \wedge \alpha \rightarrow \alpha $}.
By Lemma~\ref{the:congruence}, item~\ref{cdot-left}, the assumption implies that
\mbox{$ \alpha \wedge \alpha \rightarrow \beta \wedge \alpha $}.
The reader will easily show that \mbox{$ \alpha \rightarrow \alpha \wedge \alpha $} 
and conclude that \mbox{$ \alpha \rightarrow \beta \wedge \alpha $}.
We have shown that \mbox{$ \beta \wedge \alpha \simeq \alpha $}.
The following derivation shows that 
\mbox{$ \alpha \wedge \beta \rightarrow \alpha $}.
\[ \begin{array}{rlrl}
{\bf Assumption} & \alpha , \neg \beta \vdash
& {\bf NAxiom} & \alpha , \neg \alpha \vdash \\
\cline{2-4}
& {\bf MWeakening} & \alpha , \beta , \neg \alpha \vdash \\
\cline{3-3}
& {\bf LL-\wedge} & \alpha \wedge \beta , \neg \alpha \vdash
\end{array} \]
Now, let us show that \mbox{$ \alpha \rightarrow \alpha \wedge \beta $}.
\[ \begin{array}{rlrl}
{\bf Assumption} & \alpha , \neg \beta \vdash 
& {\bf NAxiom} & \alpha , \neg \alpha \vdash \\
\cline{2-4}
& {\bf MWeakening} & \alpha , \beta , \neg \alpha \vdash \\
\cline{3-3}
& {\bf LL-\wedge} & \alpha \wedge \beta , \neg \alpha \vdash \\
\cline{3-3}
& {\bf Lemma~\ref{the:congruence}, item~\ref{negation}} 
& \neg \neg \alpha , \neg ( \alpha \wedge \beta ) \vdash \\
\cline{3-3}
& {\bf D} & \alpha , \neg ( \alpha \wedge \beta ) \vdash
\end{array} \]
\item 
By item~\ref{absorption} just above and Lemma~\ref{the:congruence}, 
item~\ref{cdot-right}.
\item 
Consider the following derivation:
\[ \begin{array}{rlrlll}
{\bf Assumption} & ( \alpha \wedge \beta ) \wedge \gamma , \Delta \vdash \\
\cline{2-2}
{\bf LL-\wedge} & \alpha , \beta , \gamma , \Delta \vdash 
& {\bf Assumption} & \alpha , \neg \beta \vdash \\
\cline{2-4}
{\bf Cut} & \alpha , \gamma , \Delta \vdash 
& & {\bf Assumption} & \alpha , \neg \beta \vdash \\
\cline{2-5}
& {\bf ML-\wedge} & \alpha , \beta \wedge \gamma , \Delta \vdash \\
\cline{3-3}
& {\bf LL-\wedge} & \alpha \wedge ( \beta \wedge \gamma ) , \Delta \vdash
\end{array} \]
By Lemma~\ref{the:equivalence}, we have shown that, under our assumption,
\mbox{$ (\alpha \wedge \beta ) \wedge \gamma \rightarrow $}
\mbox{$ \alpha \wedge ( \beta \wedge \gamma ) $}.

Consider, now
\[ \begin{array}{rlrll}
{\bf Assumption} & \alpha \wedge ( \beta \wedge \gamma ) , \Delta \vdash 
& {\bf Assumption} & \alpha , \neg \beta \vdash \\
\cline{2-2} \cline{4-4}
{\bf LL-\wedge} & \alpha , \beta \wedge \gamma , \Delta \vdash 
& {\bf EWeakening} & \alpha , \neg \beta , \Delta \vdash \\
\cline{2-4}
{\bf ML-\wedge} & \alpha , \gamma , \Delta \vdash 
& & {\bf Assumption} & \alpha , \neg \beta \vdash \\
\cline{2-5}
& {\bf MWeakening} & \alpha , \beta , \gamma , \Delta \vdash \\
\cline{3-3}
& {\bf LL-\wedge} & ( \alpha \wedge \beta ) \wedge \gamma , \Delta \vdash
\end{array} \]
We have shown \mbox{$ \alpha \wedge ( \beta \wedge \gamma ) \rightarrow $}
\mbox{$ (\alpha \wedge \beta ) \wedge \gamma $}.

\item 
Consider
\[ \begin{array}{rlrl}
{\bf NAxiom} & \gamma \wedge \alpha , \neg ( \gamma \wedge \alpha ) \vdash \\
\cline{2-2}
{\bf Exchange} & \neg ( \gamma \wedge \alpha ) , \gamma \wedge \alpha \vdash 
& {\bf Assumption} & \alpha , \neg \beta \vdash \\
\cline{2-2} \cline{4-4}
{\bf LR-\wedge} & \neg ( \gamma \wedge \alpha ) , \alpha , \gamma \vdash
& {\bf EWeakening} & \neg ( \gamma \wedge \alpha ) , \alpha , \neg \beta \vdash \\
\cline{2-4} 
& {\bf M-Weakening} & \neg ( \gamma \wedge \alpha ) , \alpha , \beta , \gamma \vdash  \\
\cline{3-3} 
& {\bf LR-\wedge} & \neg ( \gamma \wedge \alpha ) , \alpha , \gamma \wedge \beta \vdash \\
\cline{3-3}
& {\bf Circ} & \gamma \wedge \beta , \alpha , \neg ( \gamma \wedge \alpha ) 
\vdash \\
\cline{3-3}
& {\bf LL-\wedge} & ( \gamma \wedge \beta ) \wedge \alpha , 
\neg ( \gamma \wedge \alpha ) \vdash \\
\end{array} \]
We have shown that 
\mbox{$ ( \gamma \wedge \beta ) \wedge \alpha \rightarrow \gamma \wedge \alpha $}.
Let us show the inverse implication.
\end{enumerate}
\[\footnotesize \begin{array}{rlrl}
{\bf NAxiom} & ( \gamma \wedge \alpha ) \wedge \beta , 
\neg ( ( \gamma \wedge \beta ) \wedge \alpha ) \vdash 
& {\bf Assumption} & \alpha , \neg \beta \vdash \\
\cline{2-2} \cline{4-4}
{\bf LL-\wedge} & \gamma , \alpha , \beta , \neg ( ( \gamma \wedge \beta ) \wedge \alpha ) \vdash 
& {\bf EWeakening} & \gamma , \alpha , \neg \beta , \neg ( ( \gamma \wedge \beta ) \wedge \alpha ) \vdash \\
\cline{2-4}
& {\bf Cut} & \gamma , \alpha , \neg ( ( \gamma \wedge \beta ) \wedge \alpha ) \vdash \\
\cline{3-3}
& {\bf LL-\wedge} & \gamma \wedge \alpha , \neg ( ( \gamma \wedge \beta ) \wedge \alpha ) \vdash 
\end{array} \]

We have shown that 
\mbox{$ ( \gamma \wedge \beta ) \wedge \alpha \simeq \gamma \wedge \alpha $}.

Now, by item~\ref{g-absorption} above, we have 
\mbox{$ \gamma \wedge ( \alpha \wedge \beta ) \simeq $}
\mbox{$ \gamma \wedge \alpha $}.
\end{proof}

\subsection{Completeness} \label{sec:completeness}
We can move now to the completeness result.
Let us denote by $X$ the set all equivalence classes in $\cal L$ over logical equivalence: 
\mbox{$ X = {\cal L} / \! \! \simeq $}.
For any \mbox{$ \alpha \in {\cal L} $}, 
\mbox{$ \overline{\alpha} \in X $} is the equivalence class of $\alpha$ under $\simeq$. 
Our goal is  now to show that $X$ can be equipped with a P-algebra structure.
We need to define a $0$ element, a complement operation and a ``$\cdot$'' operation
on $X$ that satisfy the conditions of Definition~\ref{def:P-algebra}.

Lemma~\ref{the:congruence} has just shown that the logical operations 
can be considered to operate on the equivalence classes under $\simeq$, i.e. on $X$ 
and this enables us to define the structure that we want to consider in the 
completeness proof.

\begin{definition} \label{def:logicalP}
Let \mbox{$ X = {\cal L} / \! \! \simeq $}.
The element $0$ of $X$ is defined by 
\mbox{$ 0 = \overline{\bf 0}$}.
Note that, by Lemma~\ref{the:congruence}, item~\ref{contradiction}, 
\mbox{$ 0 = \{ \alpha \in {\cal L} \mid \alpha \vdash \} $}.
For any \mbox{$ x \in X $}, \mbox{$ x' \eqdef \overline{\neg \alpha} $}
for any \mbox{$ \alpha \in x $}.
For any \mbox{$ x , y \in X $}, 
\mbox{$ x \cdot y \eqdef \overline{\alpha \wedge \beta} $}
for any \mbox{$ \alpha \in x $}, \mbox{$ \beta \in y $}.
\end{definition}
We want to show that the structure \mbox{$ P = \langle X , 0 , ' , \cdot \rangle $} is
a P-algebra. 

\begin{lemma} \label{the:leq1}
For any \mbox{$ x , y \in X $}, \mbox{$ x \leq y $} iff 
\mbox{$ \alpha \rightarrow \beta $} for any 
\mbox{$ \alpha \in x $} and \mbox{$ \beta \in y $},  
equivalently, iff \mbox{$ \alpha \rightarrow \beta $} for some 
\mbox{$ \alpha \in x $} and some \mbox{$ \beta \in y $}.
Therefore the algebra $P$ satisfies {\bf Partial Order} in Definition~\ref{def:basic}.
\end{lemma}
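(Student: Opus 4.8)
The plan is to unwind the definitions and reduce everything to facts about the syntactic implication relation $\rightarrow$ already established in Lemmas~\ref{the:implication}, \ref{the:dot-right} and~\ref{the:3bc}. Fix any $\alpha \in x$ and $\beta \in y$. By Definition~\ref{def:logicalP}, $x \cdot y = \overline{\alpha \wedge \beta}$, so by Definition~\ref{def:basic}, item~\ref{leq-def}, the relation $x \leq y$ holds precisely when $x \cdot y = x$, i.e.\ when $\alpha \wedge \beta \simeq \alpha$. Hence the heart of the lemma is the equivalence
\[
\alpha \wedge \beta \simeq \alpha \quad\text{iff}\quad \alpha \rightarrow \beta .
\]

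For the direction from left to right, I would argue: if $\alpha \wedge \beta \simeq \alpha$ then in particular $\alpha \rightarrow \alpha \wedge \beta$, and since $\alpha \wedge \beta \rightarrow \beta$ by Lemma~\ref{the:dot-right}, transitivity of $\rightarrow$ (Lemma~\ref{the:implication}) yields $\alpha \rightarrow \beta$. For the converse, if $\alpha \rightarrow \beta$ then Lemma~\ref{the:3bc}, item~\ref{absorption}, gives $\alpha \wedge \beta \simeq \alpha$ directly. This establishes the equivalence in the ``for some $\alpha \in x$, $\beta \in y$'' form. The ``for any'' form then follows because $\rightarrow$ is compatible with $\simeq$: if $\alpha \simeq \alpha'$, $\beta \simeq \beta'$ and $\alpha \rightarrow \beta$, then chaining $\alpha' \rightarrow \alpha \rightarrow \beta \rightarrow \beta'$ through transitivity of $\rightarrow$ gives $\alpha' \rightarrow \beta'$; since ``for any'' trivially implies ``for some'' (the classes are nonempty), the two are equivalent, and both are equivalent to $x \leq y$.

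Finally, {\bf Partial order} follows at once. Reflexivity of $\leq$ is reflexivity of $\rightarrow$ (Lemma~\ref{the:implication}, ultimately {\bf NAxiom}). Anti-symmetry: if $x \leq y$ and $y \leq x$, picking representatives gives $\alpha \rightarrow \beta$ and $\beta \rightarrow \alpha$, i.e.\ $\alpha \simeq \beta$, so $\overline{\alpha} = \overline{\beta}$, that is $x = y$, since $X = {\cal L}/\!\!\simeq$. Transitivity of $\leq$ is transitivity of $\rightarrow$. I do not expect a genuine obstacle here — all the real work sits in the absorption law of Lemma~\ref{the:3bc} — the only point demanding care is the orientation of $\rightarrow$: recall that $\beta \rightarrow \alpha$ is by definition the derivability of $\beta , \neg \alpha \vdash$, so $\alpha \rightarrow \beta$ corresponds to $\alpha , \neg \beta \vdash$, and one must make sure the factor that gets absorbed is on the correct side in each invocation of Lemmas~\ref{the:dot-right} and~\ref{the:3bc}.
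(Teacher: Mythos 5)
Your proposal is correct and follows essentially the same route as the paper: reduce \mbox{$ x \leq y $} to \mbox{$ \alpha \wedge \beta \simeq \alpha $}, derive \mbox{$ \alpha \rightarrow \beta $} from Lemma~\ref{the:dot-right} plus transitivity, use the absorption law of Lemma~\ref{the:3bc}, item~\ref{absorption} for the converse, and invoke Lemma~\ref{the:implication} for the partial-order conditions. You are merely more explicit than the paper about the ``for any''/``for some'' equivalence and about anti-symmetry coming from \mbox{$ X = {\cal L}/\!\!\simeq $}, which is fine.
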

\begin{proof}
Let \mbox{$ \alpha \in x $} and \mbox{$ \beta \in y $}.
We have \mbox{$ \alpha \wedge \beta \in x \cdot y $}.
If \mbox{$x \leq y$}, \mbox{$ x \cdot y = x $}, 
\mbox{$ \alpha \wedge \beta \in x $}, 
\mbox{$ \alpha \wedge \beta \simeq \alpha $}.
Since \mbox{$ \alpha \wedge \beta \rightarrow \beta $} 
by Lemma~\ref{the:dot-right}, we have \mbox{$ \alpha \rightarrow \beta $}.
If, now \mbox{$ \alpha \rightarrow \beta $}, 
\mbox{$ \alpha \wedge \beta \simeq \alpha $} by Lemma~\ref{the:3bc}, 
item~\ref{absorption} and \mbox{$ x \cdot y = x $}.
Lemma~\ref{the:implication} shows that $\leq$ is a partial order.
\end{proof}

\begin{lemma} \label{the:smile1}
\mbox{$ x \smile y $} iff \mbox{$ \alpha \wedge \beta \rightarrow \alpha $}
for any \mbox{$ \alpha \in x $} and \mbox{$ \beta \in y $}.
The relation $\smile$ is symmetric.
\end{lemma}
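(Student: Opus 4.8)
The plan is to treat the two assertions separately: the first is a routine unwinding of the definitions, while the symmetry of $\smile$ is the statement that needs a genuine derivation inside $\cal R$.

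For the equivalence \mbox{$ x \smile y $} iff \mbox{$ \alpha \wedge \beta \rightarrow \alpha $}, I would only observe that, by Definition~\ref{def:basic}, \mbox{$ x \smile y $} means \mbox{$ x \cdot y \leq x $}; that \mbox{$ x \cdot y = \overline{ \alpha \wedge \beta } $} by Definition~\ref{def:logicalP} for any \mbox{$ \alpha \in x $}, \mbox{$ \beta \in y $}; and that Lemma~\ref{the:leq1} translates \mbox{$ \overline{ \alpha \wedge \beta } \leq \overline{\alpha} $} into \mbox{$ \alpha \wedge \beta \rightarrow \alpha $}. Since \mbox{$ \cdot $} and \mbox{$ \leq $} are well defined on $X$, the choice of representatives is immaterial, and Lemma~\ref{the:leq1} already supplies the ``for some''/``for any'' interchange.

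For symmetry, note first that this is precisely property {\bf P-commutativity} of Definition~\ref{def:P-algebra} for the algebra $P$; since that is one of the axioms we are in the process of verifying, it cannot be imported from the abstract theory and must be deduced in $\cal R$. By the first part it amounts to showing that if \mbox{$ \alpha \wedge \beta \rightarrow \alpha $} then \mbox{$ \beta \wedge \alpha \rightarrow \beta $}, i.e. that \mbox{$ \alpha , \beta , \neg \alpha \vdash $} yields \mbox{$ \beta , \alpha , \neg \beta \vdash $}. The derivation I would give runs as follows. From \mbox{$ \alpha \wedge \beta \rightarrow \alpha $}, Lemma~\ref{the:3bc}, item~\ref{absorption} (with its ``\mbox{$\alpha$}'' read as \mbox{$ \alpha \wedge \beta $} and its ``\mbox{$\beta$}'' as \mbox{$\alpha$}) gives \mbox{$ ( \alpha \wedge \beta ) \wedge \alpha \simeq \alpha \wedge \beta $}. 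Also \mbox{$ \alpha , \beta , \neg \beta \vdash $} follows from {\bf NAxiom} and {\bf EWeakening}, so by {\bf LL-$\wedge$} and Lemma~\ref{the:congruence}, item~\ref{contradiction}, \mbox{$ ( \alpha \wedge \beta ) \wedge \neg \beta \simeq {\bf 0} $}; feeding the displayed equivalence into Lemma~\ref{the:congruence}, item~\ref{cdot-left}, gives \mbox{$ ( ( \alpha \wedge \beta ) \wedge \alpha ) \wedge \neg \beta \simeq {\bf 0} $}, which by {\bf LL-$\wedge$} is \mbox{$ \alpha , \beta , \alpha , \neg \beta \vdash $}. Independently, {\bf Circ} turns \mbox{$ \alpha , \beta , \neg \alpha \vdash $} into \mbox{$ \neg \alpha , \beta , \alpha \vdash $} and {\bf EWeakening} into \mbox{$ \neg \alpha , \beta , \alpha , \neg \beta \vdash $}. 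A single {\bf Cut} on \mbox{$\alpha$} at the head of these two sequents delivers \mbox{$ \beta , \alpha , \neg \beta \vdash $}, i.e. \mbox{$ \beta \wedge \alpha \rightarrow \beta $}, hence \mbox{$ y \smile x $}; as \mbox{$x,y$} were arbitrary, \mbox{$\smile$} is symmetric.

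The hard part is locating this derivation, not carrying it out. The obvious moves — cutting \mbox{$ \beta , \alpha , \neg \beta \vdash $} on \mbox{$\alpha$} or \mbox{$\beta$} in an interior position, with {\bf NAxiom} discharging one premise — all send the other premise back, via {\bf Contraction}, to the very sequent being proved, so the hypothesis never enters. What breaks the loop is to replace the block \mbox{$\alpha$} by the \mbox{$\simeq$}-equivalent block \mbox{$ \alpha , \beta , \alpha $} (an idempotence made available exactly by \mbox{$ \alpha \wedge \beta \rightarrow \alpha $}), after which the cut on the leading \mbox{$\alpha$} has two premises each provable on its own.
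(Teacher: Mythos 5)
Your proof is correct and follows essentially the same route as the paper: both derivations reduce symmetry to producing the two sequents \mbox{$ \alpha , \beta , \alpha , \neg \beta \vdash $} and \mbox{$ \neg \alpha , \beta , \alpha , \neg \beta \vdash $} and then applying {\bf Cut} at the head followed by {\bf LL-$\wedge$}. The only cosmetic difference is that you obtain the first premise by invoking the absorption equivalence of Lemma~\ref{the:3bc} together with the congruence lemma, whereas the paper gets it directly by one application of {\bf MWeakening} to \mbox{$ \alpha \wedge \beta , \neg \beta \vdash $} --- the same underlying move, since that absorption lemma is itself proved via {\bf MWeakening}.
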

\begin{proof}
The first claim follows from Lemma~\ref{the:leq1}.
For the second claim, assume that \mbox{$ \alpha \wedge \beta \rightarrow \alpha $}.
Let us show that \mbox{$ \beta \wedge \alpha \rightarrow \beta $}.

\[\footnotesize \begin{array} {rlrlrl}
{\bf Assumption} & \alpha \wedge \beta , \neg \alpha \vdash \\
\cline{2-2} 
{\bf LL-\wedge} & \alpha , \beta , \neg \alpha \vdash 
& {\bf Assumption} & \alpha \wedge \beta , \neg \alpha \vdash 
& {\bf Lemma~\ref{the:dot-right}} & \alpha \wedge \beta , \neg \beta \vdash \\
\cline{2-2} \cline{4-6}
{\bf Circ} & \neg \alpha , \beta , \alpha \vdash 
& & {\bf MWeakening} & \alpha \wedge \beta , \alpha , \neg \beta \vdash \\
\cline{2-2} \cline{5-5}
{\bf EWeakening} & \neg \alpha , \beta , \alpha , \neg \beta \vdash 
& & {\bf LL-\wedge} & \alpha , \beta , \alpha , \neg \beta \vdash \\
\cline{2-4}
& {\bf Cut} & \beta , \alpha , \neg \beta \vdash \\
\cline{3-3}
& {\bf LL-\wedge} & \beta \wedge \alpha , \neg \beta \vdash
\end{array} \]
\end{proof}

\begin{lemma} \label{the:associativity}
The algebra $P$ satisfies the three {\bf P-associativity} properties.
\end{lemma}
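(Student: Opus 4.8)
The plan is to reduce each of the three clauses of \textbf{P-associativity} to a result already proved about the system $\cal R$. Fix arbitrary representatives $\alpha \in x$, $\beta \in y$, $\gamma \in z$; then $(x \cdot y) \cdot z = \overline{(\alpha \wedge \beta) \wedge \gamma} = \overline{\phi(\alpha,\beta,\gamma)}$, and similarly for the other bracketings. Since, by Lemma~\ref{the:congruence}, $\simeq$ is a congruence for $\wedge$ and $\neg$, every statement below is independent of the choice of representatives, so it suffices to verify the corresponding relation between formulas.

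For clause~\ref{xyz0}, I would first invoke Lemma~\ref{the:congruence}, item~\ref{contradiction}, which identifies the class $0$ with $\{\delta \in {\cal L} \mid \delta \vdash\}$. Hence $(x \cdot y) \cdot z = 0$ iff $(\alpha \wedge \beta) \wedge \gamma \vdash$, and $(z \cdot y) \cdot x = 0$ iff $(\gamma \wedge \beta) \wedge \alpha \vdash$. It then remains to show $(\alpha \wedge \beta) \wedge \gamma \vdash$ iff $(\gamma \wedge \beta) \wedge \alpha \vdash$. Starting from $(\alpha \wedge \beta) \wedge \gamma \vdash$, two applications of \textbf{LL-$\wedge$} (bottom to top) give $\alpha , \beta , \gamma \vdash$; \textbf{Circ} yields $\gamma , \beta , \alpha \vdash$; and two applications of \textbf{LL-$\wedge$} (top to bottom) give $(\gamma \wedge \beta) \wedge \alpha \vdash$. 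The converse direction is the symmetric derivation.

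For clause~\ref{large-small-left}, assume $x \leq y$. By Lemma~\ref{the:leq1} this is equivalent to $\alpha \rightarrow \beta$, and the required equality $(x \cdot y) \cdot z = x \cdot (y \cdot z)$ is exactly the congruence-class version of $(\alpha \wedge \beta) \wedge \gamma \simeq \alpha \wedge (\beta \wedge \gamma)$, which is Lemma~\ref{the:3bc}, item~\ref{rule-assoc-b}. Likewise, for clause~\ref{large-small-right}, $x \leq y$ gives $\alpha \rightarrow \beta$, and $(z \cdot y) \cdot x = z \cdot (y \cdot x)$ is $(\gamma \wedge \beta) \wedge \alpha \simeq \gamma \wedge (\beta \wedge \alpha)$, which is Lemma~\ref{the:3bc}, item~\ref{rule-assoc-c}.

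I do not expect a genuine obstacle: the substantive proof-theoretic combinatorics were already done inside Lemma~\ref{the:3bc} (using the interplay of \textbf{LL-$\wedge$}, \textbf{ML-$\wedge$}, \textbf{Circ} and \textbf{Cut}), and the only fresh work is the short ``unfold, rotate with \textbf{Circ}, refold'' derivation for clause~\ref{xyz0}. The one point worth stating explicitly is well-definedness, which is immediate from the congruence property established in Lemma~\ref{the:congruence}.
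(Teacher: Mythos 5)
Your proposal is correct and follows essentially the same route as the paper: clause~\ref{xyz0} via Lemma~\ref{the:congruence}, item~\ref{contradiction} together with \textbf{LL-$\wedge$} and \textbf{Circ}, and clauses~\ref{large-small-left} and~\ref{large-small-right} by direct appeal to Lemma~\ref{the:3bc}, items~\ref{rule-assoc-b} and~\ref{rule-assoc-c} after translating $x \leq y$ into $\alpha \rightarrow \beta$ via Lemma~\ref{the:leq1}. Your explicit remark on well-definedness via the congruence property is a small but welcome addition that the paper leaves implicit.
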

\begin{proof}{\ }
\begin{enumerate}
\item
\mbox{$ ( \alpha \wedge \beta ) \wedge \gamma \simeq {\bf 0} $}
iff \mbox{$ ( \alpha \wedge \beta ) \wedge \gamma \vdash $} by 
Lemma~\ref{the:congruence}, item~\ref{contradiction}, iff, by {\bf Circ}, 
{\bf $\wedge$Left-Elim} and {\bf LL-$\wedge$}, 
\mbox{$ ( \gamma \wedge \beta ) \wedge \alpha \vdash $}.
\item
By Lemma~\ref{the:3bc}, item~\ref{rule-assoc-b}.
\item
By Lemma~\ref{the:3bc}, item~\ref{rule-assoc-c}.
\end{enumerate}
\end{proof}

\begin{lemma} \label{the:dot-monotonicity}
The algebra $P$ satisfies the two {\bf Dot-monotonicity} properties.
\end{lemma}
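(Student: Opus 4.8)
The plan is to transport both monotonicity statements back to the implication relation $\rightarrow$ on $\cal L$ by means of Lemma~\ref{the:leq1}, which identifies $x \leq y$ in $P$ with $\alpha \rightarrow \beta$ for any (equivalently, for some) representatives $\alpha \in x$ and $\beta \in y$. All the genuine proof-theoretic work has already been done in Section~\ref{sec:implication}, so the two conditions will fall out as restatements of earlier lemmas after passing to equivalence classes.

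First I would dispatch {\bf Dot-monotonicity}, item~\ref{right-monotonicity}, which asks for $x \cdot y \leq y$. Choosing representatives $\alpha \in x$ and $\beta \in y$, the proposition $\alpha \wedge \beta$ is a representative of $x \cdot y$. Lemma~\ref{the:dot-right} already supplies $\alpha \wedge \beta \rightarrow \beta$, and hence Lemma~\ref{the:leq1} gives $x \cdot y \leq y$ immediately.

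For {\bf Dot-monotonicity}, item~\ref{left-monotonicity}, assume $x \leq y$ and pick representatives $\alpha \in x$, $\beta \in y$, $\gamma \in z$. By Lemma~\ref{the:leq1} we have $\alpha \rightarrow \beta$, and then Lemma~\ref{the:congruence}, item~\ref{cdot-left}, yields $\alpha \wedge \gamma \rightarrow \beta \wedge \gamma$. Since $\alpha \wedge \gamma$ represents $x \cdot z$ and $\beta \wedge \gamma$ represents $y \cdot z$, a further appeal to Lemma~\ref{the:leq1} gives $x \cdot z \leq y \cdot z$.

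I do not expect any real obstacle: the hard parts — the derivability of {\bf LR-$\wedge$}, of Lemma~\ref{the:dot-right}, and of the congruence property of $\wedge$ with respect to $\rightarrow$ — are behind us, and monotonicity of $\cdot$ on $X$ is essentially just these facts read off at the level of equivalence classes. The only points needing a moment's care are that $\cdot$ on $X$ is well defined (which is exactly Lemma~\ref{the:congruence}) and that one is free to choose the most convenient representatives $\alpha$, $\beta$, $\gamma$, which is precisely what the ``for some representatives'' clause of Lemma~\ref{the:leq1} licenses.
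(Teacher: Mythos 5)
Your proof is correct and follows exactly the paper's own route: right-monotonicity from Lemma~\ref{the:dot-right} and left-monotonicity from Lemma~\ref{the:congruence}, item~\ref{cdot-left}, both read off at the level of equivalence classes via Lemma~\ref{the:leq1}. The paper simply states these two citations without spelling out the representatives, so your version is just a more explicit rendering of the same argument.
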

\begin{proof}
Property~\ref{left-monotonicity} follows from Lemma~\ref{the:congruence}, 
item~\ref{cdot-left}.
Property~\ref{right-monotonicity} follows from\break Lemma~\ref{the:dot-right}.
\end{proof}

\begin{lemma} \label{the:Z}
The algebra $P$ satisfies property {\bf Z}.
\end{lemma}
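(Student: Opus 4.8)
The plan is to unwind Definition~\ref{def:logicalP} and reduce property {\bf Z} to a two-step derivation in $\cal R$. Property {\bf Z} asks that $0 \cdot x = 0$ for every $x \in X$; by Definition~\ref{def:basic}, item~\ref{leq-def}, the equation $0 \cdot x = 0$ is literally the same statement as $0 \leq x$, so I would establish the $\leq$-formulation, for which Lemma~\ref{the:leq1} already supplies a purely syntactic criterion.

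First I would fix $x \in X$ and a representative \mbox{$ \alpha \in x $}, noting that the constant proposition ${\bf 0}$ is a representative of \mbox{$ 0 = \overline{\bf 0} $}. By Lemma~\ref{the:leq1}, \mbox{$ 0 \leq x $} holds iff \mbox{$ {\bf 0} \rightarrow \alpha $}, i.e., iff the sequent \mbox{$ {\bf 0} , \neg \alpha \vdash $} is derivable in $\cal R$. So the whole content of the lemma is to exhibit that derivation.

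The derivation is immediate: {\bf 0Axiom} gives \mbox{$ {\bf 0} \vdash $}, and a single application of {\bf EWeakening} (appending $\neg \alpha$ at the right extremity, i.e.\ with $\Gamma$ empty, \mbox{$ \Sigma = {\bf 0} $}, \mbox{$ \Delta = \neg \alpha $}) yields \mbox{$ {\bf 0} , \neg \alpha \vdash $}. Hence \mbox{$ {\bf 0} \rightarrow \alpha $}, so \mbox{$ 0 \leq x $}, and therefore \mbox{$ 0 \cdot x = 0 $}. One could equally well just cite the corresponding step already performed inside the proof of Lemma~\ref{the:congruence}, item~\ref{contradiction}, where exactly \mbox{$ {\bf 0} \rightarrow \alpha $} was obtained from {\bf 0Axiom} and {\bf EWeakening}. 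There is no real obstacle here; the only points to keep in mind are that, since we are verifying that the quotient algebra $P$ satisfies {\bf Z}, what is needed is \emph{derivability} in $\cal R$ rather than soundness, and that $\alpha$ (hence $x$) was arbitrary, so the conclusion holds for all elements of $X$.
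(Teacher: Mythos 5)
Your proof is correct and follows exactly the paper's own argument: derive $\mathbf{0} \vdash$ by \textbf{0Axiom}, append $\neg\alpha$ by \textbf{EWeakening} to get $\mathbf{0}, \neg\alpha \vdash$, and conclude $0 \leq x$ via Lemma~\ref{the:leq1}. The extra care you take in unwinding Definition~\ref{def:logicalP} and the $\leq$-formulation is just a more explicit rendering of the same two-step derivation.
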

\begin{proof}
By {\bf 0Axiom}, \mbox{$ {\bf 0} \vdash $}.
By {\bf EWeakening}, \mbox{$ {\bf 0} , \neg \alpha \vdash $}, i.e., 
\mbox{$ {\bf 0} \rightarrow \alpha $} and \mbox{$ {\bf 0} \leq \alpha $}.
\end{proof}

\begin{lemma} \label{the:Comp}
The algebra $P$ satisfies property {\bf Comp}.
\end{lemma}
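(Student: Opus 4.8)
The plan is to unwind the definitions and reduce the claim to a one-line derivation in $\cal R$. Fix $x \in X$ and pick any $\alpha \in x$. By Definition~\ref{def:logicalP}, $x' = \overline{\neg \alpha}$ and hence $x \cdot x' = \overline{\alpha \wedge \neg \alpha}$, while $0 = \overline{\mathbf{0}}$. So establishing {\bf Comp} amounts to showing $\alpha \wedge \neg \alpha \simeq \mathbf{0}$.

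Next I would invoke Lemma~\ref{the:congruence}, item~\ref{contradiction}, which tells us that $\alpha \wedge \neg \alpha \simeq \mathbf{0}$ holds exactly when the sequent $\alpha \wedge \neg \alpha \vdash$ is derivable in $\cal R$. It remains to exhibit such a derivation. Starting from {\bf NAxiom}, which gives $\alpha , \neg \alpha \vdash$, a single application of {\bf LL-$\wedge$} (with the trailing sequence $\Delta$ taken empty) yields $\alpha \wedge \neg \alpha \vdash$. This completes the argument.

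There is essentially no obstacle here: the only subtlety is the bookkeeping in Definition~\ref{def:logicalP} to see that $x \cdot x'$ is the equivalence class of $\alpha \wedge \neg\alpha$ and that this is independent of the representative $\alpha$ — but that independence has already been secured by Lemma~\ref{the:congruence}. The heavy lifting (soundness and derivability of {\bf LL-$\wedge$}, and the characterization in item~\ref{contradiction}) is done in the preceding sections, so the proof is genuinely short.
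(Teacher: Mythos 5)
Your proof is correct and follows the same route as the paper's own (very terse) argument: \textbf{NAxiom} gives \mbox{$\alpha , \neg\alpha \vdash$}, \textbf{LL-$\wedge$} turns this into \mbox{$\alpha \wedge \neg\alpha \vdash$}, and Lemma~\ref{the:congruence}, item~\ref{contradiction} identifies this with \mbox{$\alpha \wedge \neg\alpha \simeq \mathbf{0}$}. You have merely spelled out the bookkeeping that the paper leaves implicit.
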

\begin{proof}
By {\bf NAxiom}, \mbox{$ \alpha , \neg \alpha \vdash \simeq {\bf 0} $}.
\end{proof}

\begin{lemma} \label{the:O}
The algebra $P$ satisfies property {\bf O}.
\end{lemma}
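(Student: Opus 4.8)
The plan is to transcribe condition \textbf{O} into the syntax of $\cal R$ via Lemma~\ref{the:leq1} and then recognize it as a direct instance of the \textbf{Cut} rule after unpacking the two conjunctions with \textbf{LL-$\wedge$}. Fix $x, y, z \in X$ and pick representatives $\alpha \in x$, $\beta \in y$, $\gamma \in z$; then $\neg \beta \in y'$, $\alpha \wedge \beta \in x \cdot y$, and $\alpha \wedge \neg \beta \in x \cdot y'$ by Definition~\ref{def:logicalP}. The hypotheses $x \cdot y \leq z$ and $x \cdot y' \leq z$ mean, by Lemma~\ref{the:leq1}, that the sequents $\alpha \wedge \beta , \neg \gamma \vdash$ and $\alpha \wedge \neg \beta , \neg \gamma \vdash$ are both derivable in $\cal R$.

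First I would apply \textbf{LL-$\wedge$} in its elimination direction (it is bi-directional) to each of these, with $\Delta = \neg \gamma$, obtaining $\alpha , \beta , \neg \gamma \vdash$ and $\alpha , \neg \beta , \neg \gamma \vdash$. Then I would invoke \textbf{Cut} with $\Gamma = \alpha$, cut-formula $\beta$, and $\Delta = \neg \gamma$:
\[
\begin{array}{c}
\alpha , \beta , \neg \gamma \vdash \ \ \ \ \alpha , \neg \beta , \neg \gamma \vdash \\
\hline
\alpha , \neg \gamma \vdash
\end{array}
\]
The conclusion $\alpha , \neg \gamma \vdash$ says exactly that $\alpha \rightarrow \gamma$, so by Lemma~\ref{the:leq1} we get $x \leq z$, which is \textbf{O}.

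I do not anticipate a genuine obstacle here: at the level of equivalence classes, \textbf{O} is essentially \textbf{Cut} with the two premises repackaged as conjunctions, and the only care needed is to keep the context $\Gamma = \alpha$, $\Delta = \neg\gamma$ fixed through both applications of \textbf{LL-$\wedge$} so that the forms match the pattern required by \textbf{Cut}. Since \textbf{LL-$\wedge$} and \textbf{Cut} are primitive rules of $\cal R$ (Figure~\ref{fig:rules}) and Lemma~\ref{the:leq1} is already established, the argument is a short four-line derivation with no delicate case analysis.
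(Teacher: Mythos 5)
Your proposal is correct and follows essentially the same route as the paper: the paper's proof likewise translates the two hypotheses into the sequents $\alpha \wedge \beta , \neg \gamma \vdash$ and $\alpha \wedge \neg \beta , \neg \gamma \vdash$, unpacks each with \textbf{LL-$\wedge$}, and concludes $\alpha , \neg \gamma \vdash$ by one application of \textbf{Cut}. Nothing is missing.
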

\begin{proof}
Assume \mbox{$ \alpha \wedge \beta \rightarrow \gamma $} and 
\mbox{$ \alpha \wedge \neg \beta \rightarrow \gamma $}.
Consider the derivation:
\[\begin{array}{rlrl}
{\bf Assumption} & \alpha \wedge \beta , \neg \gamma \vdash 
& {\bf Assumption} & \alpha \wedge \neg \beta , \neg \gamma \vdash \\
\cline{2-2} \cline{4-4}
{\bf LL-\wedge} & \alpha , \beta , \neg \gamma \vdash
& {\bf LL-\wedge} & \alpha , \neg \beta , \neg \gamma \vdash \\
\cline{2-4}
& {\bf Cut} & \alpha , \neg \gamma \vdash 
\end{array}\]
We have shown that if 
\mbox{$ \overline{\alpha} \cdot \overline{\beta} \leq \overline{\gamma} $}
and \mbox{$ \overline{\alpha} \cdot \overline{\beta}' \leq \overline{\gamma} $}, 
one has \mbox{$ \overline{\alpha} \leq \overline{\gamma} $}.
\end{proof}

We may now summarize.
\begin{theorem} \label{the:rules-P-alebra}
The logical algebra $P$ defined in Definition~\ref{def:logicalP} is a P-algebra.
\end{theorem}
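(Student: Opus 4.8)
The plan is to verify, one condition at a time, each of the seven clauses listed in Definition~\ref{def:P-algebra}, since essentially all of the substantive work has already been carried out in the preceding lemmas. Before that, I would record that the operations of Definition~\ref{def:logicalP} are genuinely well defined on $X = {\cal L}/\!\!\simeq$: the constant $0 = \overline{\mathbf 0}$ is an honest element of $X$, and by Lemma~\ref{the:congruence} (items~\ref{negation}, \ref{cdot-left}, \ref{cdot-right}) the rules $x \mapsto x'$ and $(x,y) \mapsto x\cdot y$ are independent of the chosen representatives $\alpha \in x$, $\beta \in y$. Hence $P = \langle X, 0, ', \cdot\rangle$ is a legitimate structure of the type to which Definition~\ref{def:P-algebra} applies.

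Then I would simply discharge the seven conditions by citation. Condition~\ref{partial-order}, {\bf Partial order}, is Lemma~\ref{the:leq1}, which identifies $\leq$ on $X$ with the relation $\rightarrow$ descended to $\simeq$-classes and invokes Lemma~\ref{the:implication} for reflexivity, antisymmetry and transitivity. Condition~\ref{P-commutativity}, {\bf P-commutativity} (symmetry of $\smile$), is the final assertion of Lemma~\ref{the:smile1}. Condition~\ref{P-associativity}, the three {\bf P-associativity} clauses, is Lemma~\ref{the:associativity}. Condition~\ref{monotonicity}, the two {\bf Dot-monotonicity} clauses, is Lemma~\ref{the:dot-monotonicity}. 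Condition~\ref{left-zero}, {\bf Z}, is Lemma~\ref{the:Z}. Condition~\ref{complements}, {\bf Comp}, is Lemma~\ref{the:Comp}. Condition~\ref{O}, {\bf O}, is Lemma~\ref{the:O}. Having checked all seven, the conclusion is exactly what Definition~\ref{def:P-algebra} asks for.

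I do not expect any obstacle at this stage: the theorem is a bookkeeping summary. The genuine difficulty lay upstream — in showing that the deductive system $\cal R$ proves enough (the derived rules {\bf Exchange}, {\bf D}, {\bf LR-$\wedge$}, {\bf Equiv}, together with Lemmas~\ref{the:implication}--\ref{the:3bc}) to make $\rightarrow$ a preorder, $\simeq$ a congruence for $\neg$ and $\wedge$, and to secure the absorption and restricted-associativity facts; with those in hand, assembling the P-algebra axioms is purely a matter of reference. It is worth noting for the sequel that this result is the load-bearing half of the soundness–completeness package: combined with the soundness Theorem~\ref{the:soundness} and the equivalence $\overline{\alpha} \leq \overline{\beta}$ in $P$ iff $\alpha \rightarrow \beta$, it yields completeness via the canonical-model construction on $P$.
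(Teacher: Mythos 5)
Your proposal is correct and matches the paper's own proof, which simply cites Lemmas~\ref{the:leq1} through~\ref{the:O} to discharge the seven conditions of Definition~\ref{def:P-algebra}. Your additional remark that the operations are well defined on $\simeq$-classes (via Lemma~\ref{the:congruence}) is a point the paper makes just before Definition~\ref{def:logicalP} rather than inside the proof, but the substance is identical.
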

\begin{proof}
By Lemmas~\ref{the:leq1} to~\ref{the:O}.
\end{proof}

Note that the sequent \mbox{$ \vdash \alpha \vee \neg \alpha $} 
is valid, i.e., quantum logic satisfies the Law of Excluded Middle.
Nevertheless quantum logic is not bivalent: even an \emph{atomic feature} 
(see Section~\ref{sec:atomic}) $a$ of a P-algebra may be such that 
\mbox{$ a \not \leq x $} and \mbox{$ a \not \leq x' $}.

We may now prove a completeness result.
\begin{theorem} \label{the:completeness}
Let us define an assignment of features of $P$ to any atomic proposition by:
\mbox{$ v( \sigma ) \eqdef \overline{\sigma} $} for any \mbox{$ \sigma \in AT $}.
For any sequence $\Gamma$ of propositions of $ \cal L$:
\mbox{$ \Gamma \vdash $} iff \mbox{$ \Gamma  \models_{v} 0 $}.
\end{theorem}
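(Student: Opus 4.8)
The plan is to establish the equivalence through a short chain of biconditionals, each of which is either the unfolding of a definition or an already-proved fact; since every link is a biconditional, soundness and completeness for this particular model come out together.

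First I would check that the assignment $v$ of Theorem~\ref{the:completeness} in fact coincides with the quotient map $\gamma \mapsto \overline{\gamma}$ on all of ${\cal L}$, not merely on the atomic propositions. This is a routine induction on the structure of a proposition using Definition~\ref{def:logicalP}: $v(\mathbf{0}) = 0 = \overline{\mathbf{0}}$; if $v(\alpha) = \overline{\alpha}$ then $v(\neg\alpha) = v(\alpha)' = \overline{\alpha}' = \overline{\neg\alpha}$; and if $v(\alpha) = \overline{\alpha}$ and $v(\beta) = \overline{\beta}$ then $v(\alpha\wedge\beta) = v(\alpha)\cdot v(\beta) = \overline{\alpha}\cdot\overline{\beta} = \overline{\alpha\wedge\beta}$. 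In particular $v(\phi(\Gamma)) = \overline{\phi(\Gamma)}$ for every nonempty sequence $\Gamma$.

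Next I would unfold the semantic side. By the interpretation conventions of Section~\ref{sec:interpretation} (association to the left on the left of the turnstile, and an empty right-hand side read as $0$), the statement $\Gamma \models_{v} 0$ means exactly $v(\phi(\Gamma)) \leq 0$. Since {\bf Z} gives $0 \leq v(\phi(\Gamma))$, antisymmetry of $\leq$ turns this into $v(\phi(\Gamma)) = 0$, i.e. $\overline{\phi(\Gamma)} = \overline{\mathbf{0}}$, i.e. $\phi(\Gamma) \simeq \mathbf{0}$; and by Lemma~\ref{the:congruence}, item~\ref{contradiction}, this holds iff the sequent $\phi(\Gamma) \vdash$ is derivable in $\cal R$.

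It remains to link $\phi(\Gamma) \vdash$ with $\Gamma \vdash$. Writing $\Gamma = \gamma_{0}, \ldots, \gamma_{n}$, I would argue by induction on $n$ using the bidirectional rule {\bf LL-$\wedge$}: a single application replaces the two leftmost propositions of a sequent by their conjunction (read from bottom to top, it splits off a leading conjunction), so $\gamma_{0}, \ldots, \gamma_{n} \vdash$ is derivable iff $\gamma_{0}\wedge\gamma_{1}, \gamma_{2}, \ldots, \gamma_{n} \vdash$ is, and iterating collapses the left-hand side precisely to $\phi(\Gamma) = (\ldots(\gamma_{0}\wedge\gamma_{1})\wedge\ldots)\wedge\gamma_{n}$. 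Composing the three equivalences yields $\Gamma \vdash$ iff $\Gamma \models_{v} 0$. I do not expect a genuine obstacle: the real content of completeness has already been carried by Lemma~\ref{the:congruence} and by Theorem~\ref{the:rules-P-alebra} (which guarantees that $P$ really is a P-algebra), so what is left is only the bookkeeping above, together with a remark on the degenerate empty-$\Gamma$ case, where $\Gamma \models_{v} 0$ reads $1 \leq 0$ and both sides fail because $\vdash$ is not derivable and $1 \neq 0$ in $P$.
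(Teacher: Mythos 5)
Your proposal is correct and follows essentially the same route as the paper's own proof: the chain $\Gamma \vdash$ iff $\phi(\Gamma)\vdash$ (via {\bf LL-$\wedge$}) iff $\phi(\Gamma)\simeq\mathbf{0}$ (Lemma~\ref{the:congruence}, item~\ref{contradiction}) iff $v(\phi(\Gamma))=0$ iff $\Gamma\models_{v}$. You merely make explicit what the paper leaves implicit, namely the induction showing $v(\gamma)=\overline{\gamma}$ on all of $\cal L$, the iterated use of {\bf LL-$\wedge$}, and the degenerate empty-sequence case.
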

\begin{proof}
\mbox{$ \Gamma \vdash $} iff \mbox{$ \phi( \Gamma ) \vdash $}
by {\bf LL-$\wedge$} and {\bf $\wedge$Left-Elim}, iff
\mbox{$ \phi( \Gamma ) \simeq {\bf 0} $} by Lemma~\ref{the:congruence}, 
item~\ref{contradiction}, iff \mbox{$ v( \phi( \Gamma ) ) = {\bf 0} $}, iff
\mbox{$ v( \phi( \Gamma ) ) \leq  {\bf 0} $}, iff 
\mbox{$ \Gamma \models_{v} $} by Equation~(\ref{eq:sequent}).
\end{proof}

\section{Conclusion and further research} \label{sec:further}
This paper puts forward a family of algebraic structures, 
some sort of non-commut- ative Boolean algebras called P-algebras, in which the 
conjunction has the properties of the projection operation between subspaces of
an inner-product vector space.
They are orthomodular ordered structures but not lattices.
It claims that such structures ought to be for Quantum Logic what Boolean algebras are 
for Classical Logic.
It supports this claim by a complete characterization of the logic offered by P-algebras.

A large number of different avenues for further research open naturally and should
attract different communities.

Proof-theorists could be interested in studying the properties of system $\cal R$ or
equivalent systems and consider Cut-elimination and normalization. 
The question of the number of non-equivalent propositions on a finite number of
atomic propositions also seems interesting.

Algebraists could be interested in a representation result for P-algebras, parallel to
Stone's representation theorem.
The study of atomic P-algebras, initiated in Section~\ref{sec:atomic}, should be 
deepened and, for example, the question of a natural topology on atomic features 
should be considered.
Boolean algebras are deeply related to Boolean rings, what is the connection between
P-algebras and non-commutative Boolean rings ( they probably should not be 
associative either).

What are the types of morphisms one should consider between P-algebras?
Which ones could be considered a reasonable generalization of linear maps?
Is there a tensor product in the category of P-algebras?
Can it throw some light on quantic entanglement?

The most urgent research direction is certainly to consider richer algebraic structures
in which one could model quantum probabilities and differential equations such as 
Schr\"{o}dinger's.
The mathematical structures used by quantum physicists are countably based atomic 
P-algebras (see Appendices B and C) equipped by a function of type 
\mbox{$ A(X) \times A(X) \rightarrow [0 , 1] $} that represents 
the \emph{transition probability} between pure states.
This transition probability is symmetric and satisfies a fundamental property noticed 
in Theorem 1 of~\cite{Qsuperp:IJTP}.
At this point it seems that not all P-algebras can be equipped by such a function and
that, for those that are equipped by such a function, a feature can be identified with 
the set of its atomic features.
The relation of such functions to probabilities seems to be an intriguing topic for further 
research.

\section*{Acknowledgements}
My deepest thanks to Kurt Engesser who directed me to John von Neumann's 
letter quoted in Section~\ref{sec:doubts} that convinced me that my intuitions
were worth pursuing.
\bibliographystyle{plain}

\appendix
\section{Commuting features} \label{sec:commuting}
Our next results concern commuting features.
\begin{definition} \label{def:c-set}
A set of features \mbox{$ Y \subseteq X $} is a \emph{c-set} iff any pair of features
of $Y$ commutes.
\end{definition}

\begin{lemma} \label{the:commuting-features}
For any \mbox{$x , y \in X $} such that \mbox{$ x \smile y $}, equivalently 
\mbox{$ x \cdot y = y \cdot x $} 
\begin{enumerate}
\item \label{glb-lub}
$x \cdot y$ is the g.l.b. of $x$ and $y$, and $x + y$ is their l.u.b.
Therefore the operations $\cdot$ and $+$ between orthogonal features are associative 
and commutative.
\item \label{boolean-algebra}
any c-set that includes $0$ and is closed under $'$ is a boolean algebra.
\end{enumerate}
\end{lemma}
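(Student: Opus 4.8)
Part~1 is, for the most part, a repackaging of results already in hand. Since $x \smile y$, item~\ref{smile-glb} of Theorem~\ref{the:basic1} gives that $x \cdot y$ is the g.l.b.\ of $x$ and $y$, and item~\ref{lub} of Theorem~\ref{the:+'} gives that $x + y$ is their l.u.b. For the ``therefore'' clause I would argue as follows. If $x \,\bot\, y$ then $x \cdot y = 0 \leq x$, so $x \smile y$; hence $\cdot$ is commutative on $x,y$ by item~\ref{smile-com} of Theorem~\ref{the:basic1} and $+$ is commutative by item~\ref{commute-+} of Theorem~\ref{the:+'}. Associativity of $\cdot$ on a pairwise orthogonal triple $x,y,z$ is immediate, since $x \cdot y = 0$ makes both $(x \cdot y) \cdot z$ and $x \cdot (y \cdot z)$ equal to $0$ by {\bf Z} and item~\ref{x0} of Theorem~\ref{the:basic1}. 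For $+$, I would first check that $x+y$ is orthogonal to $z$: from $z\,\bot\,x$, $z\,\bot\,y$ and item~\ref{bot'} of Theorem~\ref{the:basic1} we get $z \leq x'$ and $z \leq y'$, and since $x \smile y$ forces $x' \smile y'$ (item~\ref{S} of Theorem~\ref{the:orthocomplemented-poset}, used twice together with symmetry of $\smile$), item~\ref{newzxy} of Theorem~\ref{the:basic1} yields $z \leq x' \cdot y'$, which equals $(x+y)'$; so $(x+y)\,\bot\, z$, and symmetrically $x \,\bot\, (y+z)$. Then $x+y$, $y+z$, $(x+y)+z$ and $x+(y+z)$ are all l.u.b.'s of orthogonal (hence commuting) features, so both $(x+y)+z$ and $x+(y+z)$ coincide with the l.u.b.\ of $\{x,y,z\}$.

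The real content is Part~2, and the preliminary fact I would isolate is: \emph{if $x,y,z$ pairwise commute, then $x \cdot y$ commutes with $z$.} Its proof would be short. By item~\ref{smile-glb} of Theorem~\ref{the:basic1}, $x\cdot y$ is the g.l.b.\ of $x$ and $y$, so $x \cdot y \leq x$; {\bf Dot-monotonicity} (item~\ref{left-monotonicity}) then gives $(x\cdot y)\cdot z \leq x \cdot z$, and $x \cdot z \leq x$ since $x \smile z$. On the other side, item~\ref{xyz<=yz} of Theorem~\ref{the:basic1} gives $(x\cdot y)\cdot z \leq y \cdot z$, and $y \cdot z \leq y$ since $y \smile z$. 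Thus $(x\cdot y)\cdot z$ lies below both $x$ and $y$, hence below their g.l.b.\ $x\cdot y$; that is exactly $x\cdot y \smile z$.

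With this in hand, let $Y$ be a c-set containing $0$ and closed under $'$. For $x,y \in Y$ the preliminary fact shows $x\cdot y$ commutes with every element of $Y$, so $Y$ stays a c-set when closed under $\cdot$; closing further under $'$ also preserves the c-set property (item~\ref{S} of Theorem~\ref{the:orthocomplemented-poset}). One thus obtains a c-set $\widehat Y \supseteq Y$ containing $0$ and closed under $0$, $'$ and $\cdot$, with $\widehat Y = Y$ as soon as $Y$ is itself closed under $\cdot$. Now each of the seven conditions of Definition~\ref{def:P-algebra} is a universally quantified statement about $0$, $'$, $\cdot$ true in $X$, hence inherited by $\widehat Y$ — here one uses that $\leq$ restricted to $\widehat Y$ is the restriction of the order of $X$, since $\leq$ is defined from $\cdot$. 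So $\widehat Y$ is a P-algebra, and, being a c-set, its operation $\cdot$ is commutative; Corollary~\ref{the:commutative-B} then gives that $\widehat Y$ is a Boolean algebra.

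The only step that is not pure bookkeeping is the commuting-triple fact of the second paragraph — but, as shown above, even that is a two-line consequence of g.l.b.-ness plus {\bf Dot-monotonicity}. Everything after it, namely that c-sets are preserved under the operations and the reduction to Corollary~\ref{the:commutative-B}, is routine, as is all of Part~1.
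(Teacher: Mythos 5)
Your proof is correct, but for part~\ref{boolean-algebra} it takes a genuinely different route from the paper. The paper's own argument is extremely terse: for part~\ref{glb-lub} it derives \mbox{$ x \cdot y \leq x $}, \mbox{$ x \cdot y \leq y $} from commutativity and gets greatest-ness from {\bf Dot-monotonicity}, with the l.u.b.\ statement ``by duality''; for part~\ref{boolean-algebra} it simply says that, by part~\ref{glb-lub}, the structure is a lattice and ``one easily verifies all the necessary properties''. You instead isolate the commuting-triple fact (if $x,y,z$ pairwise commute then \mbox{$ x \cdot y \smile z $}), use it together with item~\ref{S} of Theorem~\ref{the:orthocomplemented-poset} to close the c-set under ``$\cdot$'' and $'$ while preserving commutativity, observe that all seven defining conditions are universal Horn statements and hence pass to subalgebras, and then invoke Corollary~\ref{the:commutative-B}. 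This buys two things the paper's one-liner glosses over: first, the c-set in the statement is not assumed closed under ``$\cdot$'', so without your closure step it is not even clear what algebraic structure is being called a Boolean algebra; second, verifying distributivity and the remaining Boolean axioms directly (the paper's implicit plan) is not trivial, whereas your reduction concentrates all of that in Corollary~\ref{the:commutative-B}, which the paper has already asserted. The cost is that you inherit whatever hand-waving remains in the proof of that corollary, so the residual gap is relocated rather than eliminated; but your commuting-triple lemma is a genuine addition that any complete proof of part~\ref{boolean-algebra} needs, and your treatment of the ``therefore'' clause of part~\ref{glb-lub} (orthogonality forces commutation via {\bf Z}, and associativity of $+$ via \mbox{$ ( x + y ) \, \bot \, z $}) correctly fills in what the paper dismisses as ``duality''.
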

\begin{proof}
\begin{enumerate}
\item 
Since $x$ and $y$ commute, by Theorem~\ref{the:basic1}, item~\ref{smile-com}, 
\mbox{$ x \cdot y \leq x $} and \mbox{$ x \cdot y \leq y $}. 
One concludes by {\bf Dot-monotonicity}, item~\ref{left-monotonicity}.
The last claim follows by duality.
\item 
By item~\ref{glb-lub} just above, the structure is a lattice and one easily verifies all the
necessary properties.
\end{enumerate}
\end{proof}

\section{Countably based P-algebras} \label{sec:CBP}
A central property of P-algebras is that the partial order relation $\leq$ does not, 
in general, equip the carrier $X$ with a lattice structure, since \mbox{$ x \cdot y $} is
not, in general, a lower bound for $x$ and \mbox{$ x + y $} is not an upper bound for
$y$.
Nevertheless, as shown in Theorem~\ref{the:basic1}, item~\ref{smile-glb} and 
Theorem~\ref{the:co+}, item~\ref{+lub}, if \mbox{$ x \smile y $}, then
\mbox{$ x \cdot y = $} \mbox{$ g.l.b. ( x , y ) $} and 
\mbox{$ x + y = $} \mbox{$ l.u.b. ( x , y ) $}.
In particular, if \mbox{$ x \, \bot \, y $}, then \mbox{$ x + y = $} \mbox{$ y + x = $} 
\mbox{$ l.u.b. ( x , y ) $}.

To deal with Quantum Physics one needs to be able to consider infinite dimensional 
spaces (or features). 
We shall prove the equivalence of two properties and define countably based 
P-algebras.
\begin{definition} \label{def:ortho-set}
Given a P-algebra \mbox{$ \langle X , 0 , ' , \cdot \rangle $}, 
a set \mbox{$ Y \subseteq X $} of features is said to be an \emph{ortho-set} 
iff any two different elements of $Y$ are orthogonal: for any \mbox{$ x , y \in Y $} 
such that \mbox{$ x \neq y $}, one has \mbox{$ x \, \bot \, y $}.
\end{definition}

\begin{theorem} \label{the:CB}
The following properties are equivalent:
\begin{enumerate}
\item \label{ortho-CB}
any countable ortho-set has a l.u.b.
\item \label{sequence-CB}
any ascending sequence of features \mbox{$ x_{0} \leq x_{1} \leq \ldots $} 
has a l.u.b.
\end{enumerate}
\end{theorem}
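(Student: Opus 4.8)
The plan is to prove the two implications by passing, in each direction, between an ascending sequence and the ortho-set of its successive differences, showing that a l.u.b.\ for the one object is a l.u.b.\ for the other. I will repeatedly use the elementary remark that if \mbox{$ x \, \bot \, z $} and \mbox{$ y \, \bot \, z $} then \mbox{$ x + y \, \bot \, z $}: rewriting the hypotheses as \mbox{$ x \leq z' $} and \mbox{$ y \leq z' $} via Theorem~\ref{the:basic1}, item~\ref{bot'}, Theorem~\ref{the:co+}, item~\ref{+ub} gives \mbox{$ x + y \leq z' $}, and item~\ref{bot'} again gives \mbox{$ x + y \, \bot \, z $}. I will also use that \mbox{$ x \, \bot \, y $} implies \mbox{$ x \smile y $} (since \mbox{$ x \cdot y = 0 \leq x $} by {\bf Z}), so that orthogonal features commute.

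\emph{From (\ref{sequence-CB}) to (\ref{ortho-CB}).} Let \mbox{$ Y = \{ y_{0} , y_{1} , \ldots \} $} be a countable ortho-set; the finite case is immediate, so assume $Y$ infinite. Put \mbox{$ s_{0} = 0 $} and \mbox{$ s_{n + 1} = s_{n} + y_{n} $}. By {\bf Z}, $s_{0}$ is orthogonal to every $y_{m}$, and, using the remark above together with pairwise orthogonality of the $y_{m}$, an induction on $n$ shows that $s_{n}$ is orthogonal to every $y_{m}$ with \mbox{$ m \geq n $}. In particular \mbox{$ s_{n} \, \bot \, y_{n} $}, hence \mbox{$ s_{n} \smile y_{n} $}, so by Theorem~\ref{the:co+}, item~\ref{+lub}, \mbox{$ s_{n + 1} = s_{n} + y_{n} $} is the l.u.b.\ of $s_{n}$ and $y_{n}$; thus \mbox{$ s_{0} \leq s_{1} \leq \cdots $} is ascending and \mbox{$ y_{n} \leq s_{n + 1} $} for all $n$. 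Let $s$ be a l.u.b.\ of this sequence, which exists by~(\ref{sequence-CB}). Then \mbox{$ y_{n} \leq s_{n + 1} \leq s $} for all $n$, so $s$ bounds $Y$; and if $z$ bounds $Y$ then, by Theorem~\ref{the:co+}, item~\ref{+ub}, an induction gives \mbox{$ s_{n} \leq z $} for all $n$, whence \mbox{$ s \leq z $}. So $s$ is the l.u.b.\ of $Y$.

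\emph{From (\ref{ortho-CB}) to (\ref{sequence-CB}).} Let \mbox{$ x_{0} \leq x_{1} \leq \cdots $} be ascending. Set \mbox{$ y_{0} = x_{0} $} and \mbox{$ y_{n + 1} = x_{n + 1} \cdot x_{n}' $}. By {\bf Dot-monotonicity}, item~\ref{right-monotonicity}, \mbox{$ y_{n + 1} \leq x_{n}' $}, i.e.\ \mbox{$ x_{n} \, \bot \, y_{n + 1} $} (Theorem~\ref{the:basic1}, item~\ref{bot'}); orthomodularity (Theorem~\ref{the:orthomodularity}) applied to \mbox{$ x_{n} \leq x_{n + 1} $} gives \mbox{$ x_{n + 1} = x_{n} + y_{n + 1} $}, so by induction \mbox{$ x_{n} = y_{0} + y_{1} + \cdots + y_{n} $} (bracketed to the left). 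Since \mbox{$ x_{n} \, \bot \, y_{n + 1} $}, Theorem~\ref{the:co+}, item~\ref{+lub} shows \mbox{$ x_{n + 1} = x_{n} + y_{n + 1} $} is the l.u.b.\ of $x_{n}$ and $y_{n + 1}$, hence \mbox{$ y_{n + 1} \leq x_{n + 1} $}; together with monotonicity of $(x_{n})$ this yields \mbox{$ y_{m} \leq x_{n - 1} $} whenever \mbox{$ m < n $}. As \mbox{$ y_{n} \leq x_{n - 1}' $}, i.e.\ \mbox{$ y_{n} \, \bot \, x_{n - 1} $}, Theorem~\ref{the:basic1}, item~\ref{leq-bot} gives \mbox{$ y_{n} \, \bot \, y_{m} $} for all \mbox{$ m < n $}, hence (by symmetry of $\bot$) for all \mbox{$ m \neq n $}; so \mbox{$ \{ y_{n} \mid n \geq 0 \} $} is a countable ortho-set, and by~(\ref{ortho-CB}) it has a l.u.b.\ $y$. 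Iterating Theorem~\ref{the:co+}, item~\ref{+ub} along \mbox{$ x_{n} = y_{0} + \cdots + y_{n} $} gives \mbox{$ x_{n} \leq y $} for all $n$, so $y$ bounds $(x_{n})$; and if $z$ bounds $(x_{n})$ then, since \mbox{$ y_{n} \leq x_{n} \leq z $}, it bounds every $y_{n}$, so \mbox{$ y \leq z $}. So $y$ is the l.u.b.\ of $(x_{n})$.

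The part I expect to require the most care is the bookkeeping with $+$ over the countable orthogonal family: one must verify that each partial sum is orthogonal to all the terms not yet added (so that only commuting features are ever combined, which by Lemma~\ref{the:commuting-features}, item~\ref{glb-lub} makes all bracketings agree), and one must recognise the telescoping identity \mbox{$ x_{n + 1} = x_{n} + y_{n + 1} $} as precisely what orthomodularity supplies for the difference \mbox{$ y_{n + 1} = x_{n + 1} \cdot x_{n}' $}. Once that is in place, each direction is just a pair of short inductions, one for ``upper bound'' and one for ``least''.
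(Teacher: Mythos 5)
Your proposal is correct and follows essentially the same route as the paper: in one direction you form the successive differences \mbox{$ y_{n+1} = x_{n+1} \cdot x_{n}' $} and use orthomodularity to get the telescoping identity \mbox{$ x_{n+1} = x_{n} + y_{n+1} $}, and in the other you form the left-bracketed partial sums of the ortho-set; the paper's proof is built on exactly these two constructions. Your only addition is to make explicit the bookkeeping (each partial sum is orthogonal to the terms not yet added, hence \mbox{$ y_{n} \leq x_{n} $}), which the paper leaves to the reader.
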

\begin{proof}
Assume, first, that any countable ortho-set has a l.u.b. and let 
\mbox{$ \{ x_{i} \}_{i \in {\cal N}} $} be any ascending sequence of features.
For any \mbox{$ i \in {\cal N} $}, \mbox{$ x_{i} \leq x_{i+1} $} and, therefore, 
\mbox{$ x_{i} \smile x_{i+1} $}, \mbox{$ x_{i+1} \smile x_{i} $} and
\mbox{$ x_{i+1} \smile x_{i}' $}, i.e., \mbox{$ x_{i+1} \cdot x_{i}' \leq x_{i+1} $}.
Let \mbox{$ y_{0} = x_{0} $} and \mbox{$ y_{i+1} = $}
\mbox{$ x_{i+1} \cdot x_{i}' $}.
For any \mbox{$ i \in {\cal N} $}, \mbox{$ y_{i} \leq x_{i} $}.
By {\bf Dot-monotonicity}, item~\ref{right-monotonicity}, 
\mbox{$ y_{i+1} \, \bot \, x_{i} $} and, since \mbox{$ y_{k} \leq x_{i} $} for any 
\mbox{$ k \leq i $}, \mbox{$ y_{i+1} \, \bot \, y_{k} $} for any such $k$.
Therefore \mbox{$ Y = \{ y_{i} \}_{i \in {\cal N}} $} is an ortho-set.
Let \mbox{$ y = l.u.b. ( Y ) $}.
By Theorem~\ref{the:orthomodularity}
\begin{equation} \label{eq:xi}
x_{i+1} = x_{i} + x_{i+1} \cdot x_{i}' = x_{i} + y_{i+1}.
\end{equation}
One easily shows that, for any \mbox{$ i \in {\cal N} $}, 
\mbox{$ x_{i} = \sum_{0 \leq k \leq i} y_{i} $}.
We have shown that, for any \mbox{$ i \in {\cal N} $}, 
\mbox{$ y_{i} \leq x_{i} \leq \sum_{0 \leq k \leq i} y_{i} $}.
We see that \mbox{$ y = l.u.b. ( \{ x_{i} \}_{i \in {\cal N}} ) $}.

Assume, now that any ascending sequence of features has a l.u.b. and let 
\mbox{$ Y = \{ y_{i} \}_{i \in {\cal N}} \subseteq X $} be a countable ortho-set.
Let \mbox{$ x_{i} = \sum_{0 \leq k \leq i} y_{k} $}.
The sequence \mbox{$ X = \{ x_{i} \}_{i \in {\cal N}} $} is ascending. 
Let \mbox{$ x = l.ub. ( X ) $}.
Since, for any $i$, \mbox{$ y_{i} \leq x_{i} $}, $x$ is an upper bound for $Y$.
But, by Theorem~\ref{the:co+}, item~\ref{+ub}, 
if $y$ is an upper bound for $Y$, then \mbox{$ x_{i} \leq y $} for any $i$ and
\mbox{$ x \leq y $}.
\end{proof}

\begin{definition} \label{def:CBPA}
A P-algebra that satisfies the properties of Theorem~\ref{the:CB} is said to be 
\emph{countably based}.
\end{definition}

\section{Atomic P-algebras} \label{sec:atomic}
Atomic Boolean algebras present a particularly interesting family of Boolean algebras.
We shall now study atomic P-algebras.
This is particularly important because the Hilbert spaces that form the setting of QM, 
when considered as P-algebras as we did, are atomic P-algebras.
The atoms are the one-dimensional subspaces and they model the \emph{pure states} 
of the quantic system, a central concept in QM.
The atoms of a Boolean algebra may be defined in many different but equivalent ways, 
but properties equivalent in Boolean algebras are not, in general, equivalent 
in P-algebras.

\begin{definition} \label{def:atom}
Given a P-algebra \mbox{$ P = \langle X , 0 , ' , \cdot \rangle $}, 
a feature \mbox{$a \in X $} such that \mbox{$ a \neq 0 $} is an \emph{atomic} 
feature iff for any \mbox{$ x \in X $} either \mbox{$ x \, \bot \, a $} or 
\mbox{$ x \cdot a = a $}.
We shall denote the set of all atomic features by $\cal A$.
\end{definition}

The proof of the following lemma is left to the reader.
\begin{lemma} \label{the:atom}
A feature \mbox{$ a \in X $} is an atomic feature iff for any feature \mbox{$ x \in X $} 
such that \mbox{$ x \leq a $} either \mbox{$ x = 0 $} or \mbox{$ x = a $}.
\end{lemma}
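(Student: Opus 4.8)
The plan is to prove both implications by directly unfolding the definitions of $\leq$ and $\bot$ from Definition~\ref{def:basic} together with {\bf Dot-monotonicity}; no deeper results are needed. Throughout I take $a \neq 0$ as part of the statement, exactly as in Definition~\ref{def:atom}; for the right-to-left implication this must be carried along as a hypothesis, since the displayed condition ``every $x \leq a$ equals $0$ or $a$'' holds vacuously when $a = 0$.

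For the left-to-right direction I would assume $a$ is atomic and let $x \in X$ satisfy $x \leq a$, so that $x \cdot a = x$ by Definition~\ref{def:basic}, item~\ref{leq-def}. Applying Definition~\ref{def:atom} to $x$, either $x \, \bot \, a$ or $x \cdot a = a$. In the first case $x \cdot a = 0$ by Definition~\ref{def:basic}, item~\ref{bot-def}, whence $x = x \cdot a = 0$; in the second case $x = x \cdot a = a$. Either way the desired alternative holds.

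For the right-to-left direction I would assume that every $x \leq a$ equals $0$ or $a$, and take $x \in X$ arbitrary. By {\bf Dot-monotonicity}, item~\ref{right-monotonicity}, we have $x \cdot a \leq a$, so the hypothesis applied to $x \cdot a$ yields $x \cdot a = 0$ or $x \cdot a = a$; the first alternative is exactly $x \, \bot \, a$ by Definition~\ref{def:basic}, item~\ref{bot-def}. Hence $a$ meets the defining condition of Definition~\ref{def:atom}, and since $a \neq 0$ it is an atomic feature. The argument is entirely routine substitution; the only place I expect to need care is the bookkeeping of the hypothesis $a \neq 0$, which is precisely what keeps $0$ from being counted as atomic — so this is worth stating explicitly rather than glossing over.
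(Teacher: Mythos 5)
Your proof is correct: the paper explicitly leaves this lemma to the reader, and your argument --- unfolding Definition~\ref{def:basic} in one direction and invoking {\bf Dot-monotonicity}, item~\ref{right-monotonicity} in the other --- is exactly the intended routine verification. Your explicit handling of the hypothesis $a \neq 0$ in the right-to-left direction is a worthwhile precaution, since the displayed condition is also satisfied by $a = 0$ (as $x \leq 0$ forces $x = 0$ by Theorem~\ref{the:basic1}, item~\ref{x0}), so the stated biconditional must be read with $a \neq 0$ carried along.
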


Any feature \mbox{$ x \in X $} can be associated with the set of all atomic features
that imply it: 
\begin{equation} \label{eq:A()}
A ( x ) \eqdef \{ a \in {\cal A} \mid a \leq x \}.
\end{equation}

\begin{lemma} \label{the:at1}
In any P-algebra and for any \mbox{$ x , y \in X $} and any 
\mbox{$ a , b \in {\cal A} $}
\begin{enumerate}
\item \label{AxsubAy} 
If \mbox{$ x \leq y $}, then \mbox{$ A ( x ) \subseteq A ( y ) $}.
\item \label{A0}
\mbox{$ A( 0 ) = \emptyset $}.
\item \label{A'}
\mbox{$ A ( x' ) = \{ b \in {\cal A} \mid b \, \bot \, x \} $}.
\item \label{leq+-right}
\mbox{$ a = x \cdot a + x' \cdot a $}.
\item \label{acdotx}
If \mbox{$ a \not \! \! \bot \, x $}, then $a \cdot x$ is an atomic feature.
\item \label{Acdot}
For any \mbox{$ a \in A(x) $} such that \mbox{$ a \not \! \! \bot \, y $}, one has 
\mbox{$ a \cdot y \in A( x \cdot y ) $}.
\end{enumerate}
\end{lemma}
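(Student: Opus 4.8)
The plan is to handle the six items roughly in order, getting items~\ref{AxsubAy}--\ref{leq+-right} and~\ref{Acdot} out of the way quickly and reserving the real work for item~\ref{acdotx}.

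Items~\ref{AxsubAy} and~\ref{A0} are immediate: if $a \in A(x)$ and $x \leq y$ then $a \leq y$ by transitivity of $\leq$ ({\bf Partial order}), so $a \in A(y)$; and any $a \in A(0)$ would have $a \leq 0$ while {\bf Z} gives $0 \leq a$, forcing $a = 0$ against $a \neq 0$. Item~\ref{A'} is literally Theorem~\ref{the:basic1}, item~\ref{bot'}. For item~\ref{leq+-right} I would use atomicity of $a$ on $x$: if $x \cdot a = a$ then (atomicity on $x'$) $x' \cdot a$ is $0$ or $a$, so $x \cdot a + x' \cdot a$ is $a + a$ or $a + 0$; if instead $x \not\!\!\bot\, a$ fails, i.e. $x \, \bot \, a$, then $a \leq x'$ (items~\ref{z-commutation},~\ref{bot'}), so $x' \cdot a = a$ by item~\ref{small-large} and $x \cdot a + x' \cdot a = 0 + a$. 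In every case the sum is $a$ once one records the easy identities $a + a = a$, $a + 0 = a$, $0 + a = a$, each of which unfolds directly from Definition~\ref{def:+} using $a \leq 1$ (Theorem~\ref{the:orthocomplemented-poset}, item~\ref{One-Right}) and item~\ref{small-large}.

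For item~\ref{acdotx}, set $b = a \cdot x$; since $a \not\!\!\bot\, x$, $b \neq 0$. By Lemma~\ref{the:atom} it is enough to prove that any $w$ with $0 \neq w \leq b$ satisfies $w = b$. First I would record the idempotence $a \cdot b = b$: from $a \leq a$, {\bf P-associativity}, item~\ref{large-small-left} gives $a \cdot (a \cdot x) = (a \cdot a) \cdot x = a \cdot x$, using $a \cdot a = a$. Now fix $0 \neq w \leq b$. From $w \leq b \leq x$ and item~\ref{small-large} we get $b \cdot w = w$ and $x \cdot w = w$, and then {\bf P-associativity}, item~\ref{large-small-right} (valid since $w \leq x$) yields $w = b \cdot w = (a \cdot x) \cdot w = a \cdot (x \cdot w) = a \cdot w$. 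Hence $a \cdot w = w \neq 0$, so $a \not\!\!\bot\, w$ and, by atomicity of $a$, $w \cdot a = a$. What remains is to show $w = b$; by orthomodularity (Theorem~\ref{the:orthomodularity} applied to $w \leq b$) this is equivalent to $b \cdot w' = 0$, and via {\bf P-associativity}, item~\ref{xyz0} the equation $b \cdot w' = (a \cdot x) \cdot w' = 0$ is equivalent to $(w' \cdot x) \cdot a = 0$. Since $w' \cdot x \leq x$, {\bf Dot-monotonicity} gives $(w' \cdot x) \cdot a \leq x \cdot a = a$, so by atomicity of $a$ this product is $0$ — in which case we are done — or $a$.

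The main obstacle is ruling out $(w' \cdot x) \cdot a = a$. In that situation $a$ is not orthogonal to $e \eqdef w' \cdot x \leq x$, so $a \cdot e \neq 0$, $a \cdot (a\cdot e) = a\cdot e$, and by {\bf P-associativity}, item~\ref{large-small-right} also $a \cdot e = (a \cdot x) \cdot e = b \cdot e \leq b \cdot w'$; so what is at stake is precisely whether $b$ can have the nonzero sub-feature $b \cdot w'$, orthogonal to $w$, with $w + (b \cdot w') = b$. I would try to force $b \cdot w' = 0$ by combining {\bf O} applied to $a$ with the facts $a \cdot w = w$, $a \cdot b = b$, the cut inequality $a \leq a \cdot w + a \cdot w'$ (Theorem~\ref{the:+'}, item~\ref{cut-pre}), and the Boolean behaviour of the pairwise-commuting family $\{w, w', x, x', b\}$ and its products (Theorem~\ref{the:orthocomplemented-poset}, item~\ref{S}; Lemma~\ref{the:commuting-features}), inside which $w + e = x$. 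Closing this step cleanly is the delicate part and is where I expect to spend most of the effort. Finally, item~\ref{Acdot} is an easy consequence of item~\ref{acdotx}: given $a \in A(x)$ with $a \not\!\!\bot\, y$, the feature $a \cdot y$ is atomic, and $a \leq x$ together with {\bf Dot-monotonicity}, item~\ref{left-monotonicity} gives $a \cdot y \leq x \cdot y$, so $a \cdot y \in A(x \cdot y)$.
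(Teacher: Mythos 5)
Items~\ref{AxsubAy}, \ref{A0}, \ref{A'} and~\ref{Acdot} of your proposal are correct and essentially identical to the paper's arguments. Your treatment of item~\ref{leq+-right} is in fact more careful than the paper's: the paper merely notes that each of \mbox{$ x \cdot a $} and \mbox{$ x' \cdot a $} is $0$ or $a$ and that \mbox{$ y + y = y + 0 = 0 + y = y $}, whereas your case analysis (using \mbox{$ x \, \bot \, a \Rightarrow a \leq x' \Rightarrow x' \cdot a = a $}) also rules out the possibility that both summands vanish.

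The genuine gap is in item~\ref{acdotx}, and you have named it yourself. Your reduction is sound as far as it goes: for \mbox{$ 0 \neq w \leq b = a \cdot x $} you correctly obtain \mbox{$ a \cdot w = w $} and \mbox{$ w \cdot a = a $}, and you correctly reduce \mbox{$ w = b $} to \mbox{$ ( w' \cdot x ) \cdot a = 0 $}, which by Lemma~\ref{the:atom} is either $0$ or $a$. But the case \mbox{$ ( w' \cdot x ) \cdot a = a $} is exactly the content of the statement --- it is the abstract form of ``the projection of a one-dimensional subspace is at most one-dimensional'' --- and the ingredients you list for excluding it ({\bf O}, the inequality \mbox{$ a \leq a \cdot w + a \cdot w' $}, and the Boolean behaviour of the commuting family) do not obviously combine to do so: setting \mbox{$ e = b \cdot w' $} one gets \mbox{$ a \cdot e = e $}, \mbox{$ e \cdot a = a $}, \mbox{$ w \, \bot \, e $}, and no contradiction follows from these by the cited facts alone. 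So your proof of item~\ref{acdotx} is incomplete, and since your item~\ref{Acdot} relies on it, that item is conditional on it. Be aware that consulting the paper will not close the gap for you: the paper's own proof of item~\ref{acdotx} verifies the atomicity condition for the product \mbox{$ x \cdot a $}, which atomicity of $a$ forces to equal $a$ itself, rather than for \mbox{$ a \cdot x $} as stated (and as item~\ref{Acdot} requires). The difficulty you ran into is therefore a real one, not a missed trick.
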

Note that the generalization of item~\ref{leq+-right} to arbitrary features:
\mbox{$ y = x \cdot y + x' \cdot y $} does not hold in P-algebras or in Hilbert spaces.
\begin{proof}
\begin{enumerate}
\item 
Assume \mbox{$ x \leq y $}.
Let \mbox{$ a \in A ( x ) $}, one has \mbox{$ a \leq x \leq y $}, 
\mbox{$ a \leq y $} and \mbox{$ a \in A ( y ) $}.
\item 
By Definition~\ref{def:atom}.
\item 
If \mbox{$ b \in A ( x' ) $}, \mbox{$ b \leq x' $} by Definition~\ref{def:atom}.
By Theorem~\ref{the:basic1}, item~\ref{bot'}, then, \mbox{$ b \, \bot \, x $}. 
If, now, \mbox{$ b \in {\cal A} $} and \mbox{$ b \, \bot \, x $} we have, 
by Theorem~\ref{the:basic1} item~\ref{bot'}, \mbox{$ b \leq x' $}, i.e., 
\mbox{$ b \in A( x' ) $}.
\item 
Since $a$ is atomic, both $x \cdot a$ and $x' \cdot a$ are either $0$ or $a$.
But, for any \mbox{$ y \in X $}, \mbox{$ y + y = y + 0 = 0 + y = y $}.
\item 
By our assumption \mbox{$ x \cdot a \neq 0 $} and by Definition~\ref{def:atom}, 
\mbox{$ x \cdot a = a $}.
For any \mbox{$ y \in X $} we have 
\mbox{$ y \cdot ( x \cdot a ) = $} \mbox{$ y \cdot a $}.
If \mbox{$ y \not \! \! \bot \, ( x \cdot a ) $}, then \mbox{$ y \, \bot \, a $} 
and \mbox{$ y \cdot a = a $} and therefore \mbox{$ y \cdot ( x \cdot a ) = x \cdot a $}.
\item 
By {\bf Dot-Monotonicity}, item~\ref{left-monotonicity}, \mbox{$ a \cdot y \leq x \cdot y $} 
and, by item~\ref{acdotx} just above \mbox{$ a \cdot y $} is an atomic feature.
\end{enumerate}
\end{proof}
Note that we do not claim that given an atomic feature \mbox{$ a \in A( x \cdot y ) $} there
is an atomic feature \mbox{$ b \in A(x) $} such that \mbox{$ a = b \cdot y $}.
At this point one may reasonably wonder whether the atomic features of a P-algebra 
satisfy the Mac Lane-Steinitz exchange property put in evidence by Ernst 
Steinitz~\cite{Steinitz:1930} in vector spaces and by Saunders 
MacLane~\cite{MacLane:1938} in matroids. 
The remark on the operation $+$ found after Definition~\ref{def:+} 
shows that it is not the case: one may find atomic features $a$, $b$ and $c$ such that 
\mbox{$ a \leq b + c $}, \mbox{$ a \not \leq b $} but \mbox{$ c \not \leq b + a $}.

We may now define an atomic P-algebra in a way that parallels the definition of an
atomic Boolean algebra and we shall show that, in an atomic P-algebra, a feature 
is characterized by the set of its atomic features.

\begin{definition} \label{def:at2}
A P-algebra is an \emph{atomic} P-algebra iff for any feature \mbox{$ x \in X $}, 
\mbox{$ x \neq 0 $}, \mbox{$ A( x ) \neq \emptyset $}.
\end{definition}

The following result expresses the central property of atomic P-algebras.
It is the basis of the Gram-Schmidt process.
\begin{lemma} \label{the:Gram-Schmidt}
Let \mbox{$ P = \langle X , 0 , ' , \cdot \rangle $} be an atomic P-algebra.
For any features \mbox{$ x , y \in X $} such that \mbox{$ x \leq y $} and
\mbox{$ x \neq y $}, there is some atomic feature \mbox{$ a \in {\cal A} $}
such that \mbox{$ a \leq y $} and \mbox{$ a \, \bot \, x $}.
\end{lemma}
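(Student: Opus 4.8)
The plan is to isolate the "new" part of $y$ not already contained in $x$, namely $x' \cdot y$, show it is nonzero, and then extract an atom from it using atomicity. Everything else is bookkeeping with the orthomodular identity and the relation $\smile$.

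First I would record the basic facts about $x' \cdot y$. Since $x \leq y$, Theorem~\ref{the:basic1}, item~\ref{leq-sim} gives $x \smile y$, and then Theorem~\ref{the:orthocomplemented-poset}, item~\ref{S} gives $x' \smile y$; by Definition~\ref{def:basic} this means $x' \cdot y \leq x'$. Also $x' \cdot y \leq y$ by {\bf Dot-monotonicity}, item~\ref{right-monotonicity}. So any feature below $x' \cdot y$ is automatically below both $x'$ (hence orthogonal to $x$ by Theorem~\ref{the:basic1}, item~\ref{bot'}) and below $y$. Thus it suffices to produce an atomic feature below $x' \cdot y$.

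Next I would show $x' \cdot y \neq 0$. Suppose instead $x' \cdot y = 0$. By Theorem~\ref{the:orthomodularity}, since $x \leq y$, we have $y = x + x' \cdot y = x + 0$; and $x + 0 = x$ because $x \leq x + 0$ by Theorem~\ref{the:co+}, item~\ref{+left-monotonicity}, while $x + 0 \leq x$ by Theorem~\ref{the:co+}, item~\ref{+ub} (using $x \leq x$ and $0 \leq x$). Hence $y = x$, contradicting $x \neq y$. Therefore $x' \cdot y \neq 0$.

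Finally, since $P$ is atomic, Definition~\ref{def:at2} gives $A(x' \cdot y) \neq \emptyset$, so there is an atomic feature $a$ with $a \leq x' \cdot y$. By the first paragraph, $a \leq y$ and $a \leq x'$, the latter being equivalent to $a \, \bot \, x$ by Theorem~\ref{the:basic1}, item~\ref{bot'}. This is the required atom. The only step with any content is the nonvanishing of $x' \cdot y$, which I expect to be the "main obstacle" only in the weak sense that it is the one place the orthomodular law (rather than a definition) is genuinely used; the rest is immediate.
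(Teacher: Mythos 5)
Your proof is correct and follows essentially the same route as the paper's: both isolate $y\cdot x' = x'\cdot y$, use the orthomodular identity of Theorem~\ref{the:orthomodularity} to show it is nonzero (else $y = x + 0 = x$), and then extract an atom below it via Definition~\ref{def:at2}. The only cosmetic difference is that you reach $x'\cdot y \leq x'$ through the $\smile$ relation while the paper applies {\bf Dot-monotonicity}, item~\ref{right-monotonicity}, directly to $y\cdot x'$.
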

\begin{proof}
By orthomodularity, Theorem~\ref{the:orthomodularity}, 
\mbox{$ y = $} \mbox{$ x + y \cdot x' = $} \mbox{$ y \cdot x' + x $} and the 
assumptions imply that \mbox{$ y \cdot x' \neq 0 $}.
Since $P$ is an atomic P-algebra, there is some atomic feature $a$ such that
\mbox{$ a \leq y \cdot x' $}.
By {\bf Dot-monotonicity}, item~\ref{right-monotonicity}, \mbox{$ a \leq x' $}
and, by Theorem~\ref{the:co+}, item~\ref{+left-monotonicity}, 
\mbox{$ a \leq y \cdot x' + x = y $}.
\end{proof}

Our last result concerns atomic P-algebras.
\begin{theorem} \label{the:atomic-Ax}
In an atomic P-algebra, for any \mbox{$ x \in X$}, any \mbox{$ a \in A(X) $}
\mbox{$ A( x ) = \emptyset $} iff \mbox{$ x = 0 $}.
\end{theorem}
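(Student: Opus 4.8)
The plan is to recognize that this statement is, up to contraposition, exactly the conjunction of two facts already at our disposal: item~\ref{A0} of Lemma~\ref{the:at1} and the defining property of an atomic P-algebra in Definition~\ref{def:at2}. So the proof amounts to assembling these two ingredients, one for each direction of the biconditional, and the ``for any $a$'' clause in the statement plays no role (it is merely recording the ambient setting in which $\cal A$, here written $A(X)$, lives).

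First I would dispatch the direction $x = 0 \Rightarrow A(x) = \emptyset$, which in fact holds in \emph{every} P-algebra and needs no atomicity. It is immediate from Lemma~\ref{the:at1}, item~\ref{A0}, which states $A(0) = \emptyset$; substituting $x = 0$ gives the claim. If one prefers an argument from scratch: suppose $a \in A(x)$ with $x = 0$, so $a \leq 0$; since {\bf Z} gives $0 \leq a$ and {\bf Partial order} gives anti-symmetry of $\leq$, we get $a = 0$, contradicting $a \neq 0$ in Definition~\ref{def:atom}. Hence $A(0)$ is empty.

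For the converse, $A(x) = \emptyset \Rightarrow x = 0$, I would argue by contraposition. Assume $x \neq 0$. Since $P$ is an atomic P-algebra, Definition~\ref{def:at2} yields $A(x) \neq \emptyset$. This is precisely the contrapositive of the implication we want, so we are done.

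I do not anticipate any real obstacle: once Definition~\ref{def:at2} and the elementary identity $A(0) = \emptyset$ are in place, the statement is essentially a tautology. The only point worth keeping straight is the asymmetry of the two directions --- the forward direction rests only on {\bf Z} and anti-symmetry of $\leq$, whereas the backward direction is exactly the place where the atomicity hypothesis is used.
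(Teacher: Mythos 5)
Your proof is correct and follows essentially the same route as the paper: the forward direction is item~\ref{A0} of Lemma~\ref{the:at1}, and the converse is the contrapositive reading of Definition~\ref{def:at2}. The extra from-scratch justification of $A(0)=\emptyset$ via {\bf Z} and anti-symmetry is a harmless bonus.
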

\begin{proof}
The \emph{if} part is item~\ref{A0} in Lemma~\ref{the:at1}.
If \mbox{$ x > 0 $}, there exists an atomic feature \mbox{$ a \leq x $} 
by Definition~\ref{def:at2} and \mbox{$ a \in A(x) $}.
\end{proof}

\end{document}